\newcommand{\orcidlink}[1]{%
  \textsuperscript{\href{https://orcid.org/#1}{\includegraphics[width=1.6ex]{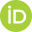}}}%
}
\theoremstyle{plain}
\newtheorem{theorem}{Theorem}[section]      
\newtheorem{lemma}[theorem]{Lemma}
\newtheorem{corollary}[theorem]{Corollary}
\newtheorem{proposition}[theorem]{Proposition}
\theoremstyle{definition}
\newtheorem{definition}[theorem]{Definition}
\theoremstyle{remark}
\newtheorem*{remark}{Remark}
\newcommand{\parhead}[1]{\vspace{3pt plus 1pt minus 1pt}\par\noindent\textbf{#1}\hspace{.4em plus .2em minus .2em}}
\renewcommand{\paragraph}[1]{\parhead{#1}}
\def\authorrefmark#1{\ensuremath{^{\textbf{#1}}}}
\renewcommand{\theparagraph}{%
  \ifnum\value{subsubsection}>0
    \thesubsection.\arabic{subsubsection}.\arabic{paragraph}%
  \else
    \thesubsection.\arabic{paragraph}%
  \fi
}
\let\origparagraph\paragraph
\renewcommand{\paragraph}[1]{%
  \refstepcounter{paragraph}%
  \origparagraph{\theparagraph\quad #1}%
}
\begin{document}
\markboth{\emph{Conformal-DP}: A Density-Aware Mechanism for Differential Privacy over Riemannian Manifolds via Conformal Transformation}{He. {et al.}}
\title{\emph{Conformal-DP}: A Density-Aware Mechanism for Differential Privacy over Riemannian Manifolds via Conformal Transformation}

\author{Peilin He\,\orcidlink{0000-0003-3553-9949}\,\authorrefmark{1}, \textit{Student Member, IEEE}, Liou Tang\,\orcidlink{0009-0005-5220-5176}\,\authorrefmark{1}, M. Amin Rahimian\,\orcidlink{0000-0001-9384-1041}\,\authorrefmark{2},\\ and James Joshi\,\orcidlink{0000-0003-4519-9802}\,\authorrefmark{1}, \textit{Fellow, IEEE}
\thanks{\textsuperscript{1}Department of Informatics and Networked Systems, University of Pittsburgh, Pittsburgh, PA 15213 USA.}%
\thanks{\textsuperscript{2}Department of Industrial Engineering, University of Pittsburgh, Pittsburgh, PA 15260 USA.}}


\maketitle
\begin{abstract}
Differential Privacy (DP) is being increasingly adopted for non-Euclidean data that lie on complex, high-dimensional manifolds. Existing DP mechanisms for manifold data consider geometric properties when calibrating privacy perturbations, but they largely do not capture variations in data density within datasets, which can lead to biased perturbations and suboptimal privacy-utility trade-offs due to heterogeneous data distributions. In this paper, we propose a novel density-aware differential privacy mechanism on Riemannian manifolds, referred to as \textit{Conformal-DP}, that leverages conformal transformations to calibrate perturbations based on local densities and to induce a density-balanced geometry. We prove that our mechanism satisfies $\varepsilon$-differential privacy on any complete Riemannian manifold under mild regularity assumptions. In addition, we derive a closed-form expected geodesic error bound that depends only on the underlying data density ratio and is independent of global curvature. Our empirical results on synthetic and real-world datasets demonstrate that the proposed \textit{Conformal-DP} mechanism substantially improves the privacy–utility trade-off in heterogeneous data distribution settings, with worst-case performance comparable to state-of-the-art manifold DP mechanisms that assume uniformly distributed data.
\end{abstract}
\begin{IEEEkeywords}
Differential Privacy, Data Privacy, Conformal Transformation, Riemannian Manifolds
\end{IEEEkeywords}


\section{Introduction}\label{sec:intro}
Many real-world datasets are not well modeled as uniformly sampled points from the ambient input space. Instead, they often concentrate around structured, lower-dimensional regions induced by latent generative factors, a premise commonly formalized by the manifold hypothesis and widely used in manifold learning and representation learning  \cite{narayanan2010sample, fefferman2016testing, bengio2013representation}. This assumption is especially relevant for high-dimensional domains such as natural language, where syntactic, semantic, and pragmatic constraints restrict observed samples to a highly structured subset of the combinatorial sequence space.  Recent studies of contextual embeddings and language model representations further highlight pronounced geometric structure, anisotropy, and low-intrinsic-dimensional behavior  \cite{ethayarajh2019contextual, coenen2019visualizing, lee2025shared, modell2025origins}. Thus, when studying data used for LLM or agentic-AI training, it is more appropriate to model the distribution as structured and non-uniform rather than as uniform over the underlying space. These datasets usually exist in a non-Euclidean space. These include, but are not limited to, medical imaging data (e.g., MRI or CT scans) \cite{pennec2019riemannian, dryden2005statistical, dryden2009non}, terrain elevation models, and climate patterns on a spherical Earth \cite{belkin2006manifold, niyogi2013manifold}. In these cases, traditional linear algebraic methods are insufficient for fully capturing the properties of the data. Additionally, much of real-world data usually contains privacy-sensitive information, which presents challenges when data needs to be published or used for model training and inference in various application domains, such as the computer vision, and AI-enabled healthcare and climate modeling applications \cite{cheng2013novel, turaga2008statistical, turaga2016riemannian}. These datasets demand more advanced and effective modeling techniques, such as geometric or manifold-based methods, to facilitate their sharing and use in a privacy-preserving manner.

Differential Privacy (DP), first introduced by Dwork et al. in \cite{dwork2006differential}, is widely regarded as the gold standard for private data release because it provides rigorous mathematical guarantees that limit the risk of re-identifying individuals in a dataset. Several variations and refinements of DP have been proposed, including standard DPs \cite{dwork2006differential,dwork2010boosting,dwork2008differential}, Rényi DP \cite{mironov2017renyi}, concentrated DP \cite{dwork2016concentrated}, and random DP \cite{hall2011random}, to provide rigorous privacy guarantees while enabling privacy-preserving data release and use in analytics and AI model development. Most of the existing DP mechanisms focus on linear or Euclidean data and are not suitable for non-linear data \cite{tenenbaum2000global, roweis2000nonlinear, xie2016deepshape} in real-world settings. Reimherr et al. \cite{reimherr2021differential} extend Laplace-based DP to manifold-valued data. Their work shows that DP mechanisms can be constructed using only intrinsic geometric quantities, such as geodesic distances and manifold volumes, instead of extrinsic Euclidean approximations. Utpala et al. \cite{utpala2022differentially}, and Jiang et al. \cite{jiang2023gaussian} further extend Gaussian DP to manifolds and demonstrate that Gaussian perturbations can achieve better utility than Laplace-based perturbations at the expense of some accuracy, given the same setting. 

In existing works, manifold-valued data are often implicitly assumed to follow uniform distributions (e.g., Gaussian Distribution), which limits their applicability to real-world settings where a dataset has heterogeneous distributions. In practice, manifold-valued data often exhibit significant variation in local densities, with both dense and sparse regions. Such heterogeneity implies that different regions may require different levels of perturbation to appropriately balance their influence on downstream analysis tasks and privacy protection, as highlighted in prior studies \cite{zhao2017dependent, zhao2024scenario, dunning2012privacy, liang2024smooth}. The assumption of uniform data distribution limits the accuracy of existing approaches, since most of them rely on global sensitivity to calibrate noise, leading to uniform perturbation regardless of local data densities within subsets of data samples. In dense regions, the marginal contribution of each data point is naturally reduced due to redundancy, while in sparse regions the influence is stronger. Consequently, the worst-case noise calibration can over-perturb dense regions, resulting in avoidable, but increased, utility loss.

In this work, we address the following key research question: \textit{given the heterogeneously distributed manifold-valued data, how can a DP mechanism account for local density variations so that privacy guarantees are preserved without incurring unnecessary utility loss due to variations in data densities}? To answer this question, we propose a novel density-aware DP mechanism, namely ``Conformal-DP," on Riemannian manifolds. The proposed Conformal-DP mechanism
uses the conformal transformation on the manifold-valued data to change the shape and metric of the original manifold and contract the conformal manifold according to the data density. In this setting, the sensitivity in the conformal space is always less than or equal to the worst-case sensitivity on the original manifold because the upper bound of the added noise is fixed by the prior worst-case sensitivity and cannot exceed it. This is achieved in two stages: we first fit the conformal manifold to the data density distribution in a privacy preserving manner and then use the fitted manifold to compute the current conformal sensitivity and add global conformal Laplace noise in stage two. Finally, we map the conformal space back to the original space and adjust the metric accordingly, ensuring the noise is redistributed according to the data density. As a result, dense regions receive less noise due to reduced contraction, and sparse regions receive more noise, with the maximum bounded by the global worst-case noise, leading to better utility. Figure~\ref{fig:overview} illustrates the key idea behind the proposed Conformal-DP. We highlight our main contributions as follows:
\begin{itemize}
    \item We propose a density-aware Conformal-DP mechanism that leverages conformal transformation \cite{obata1970conformal, schoen1984conformal, aviles1988conformal, francesco2012conformal} as an intrinsic geometric tool to encode the data density distribution into the manifold metric and adaptively reshape geometric properties across the manifold, to guide privacy perturbation.
    \item We establish theoretical guarantees for the proposed mechanism by deriving bounds on privacy loss through bi-Lipschitz continuity between the original and conformally transformed geodesic distances and volumes.
    \item We conduct extensive experiments to demonstrate that our approach achieves superior privacy--utility trade-offs on manifold-valued datasets and provides a unifying framework that connects conventional DP mechanisms with those designed for manifolds.
\end{itemize}

The remainder of this paper is organized as follows. In Section \ref{sec:background}, we overview the foundational concepts related to DP, Riemannian geometry, and conformal transformations, and present the key notation used throughout the paper. In Section \ref{section3}, we present the construction of the differentially private conformal factor on $\mathcal{M}$ and introduce the associated conformal metrics that capture local variations in density. Building on this formulation, we present our Conformal-DP mechanism, followed by a theoretical analysis of its privacy–utility trade-offs and optimality under bounded curvature assumptions in Section \ref{section4}. Next, we present the algorithmic design and report experimental results on both synthetic and real-world datasets, compare our approach with existing methods, and demonstrate its practical effectiveness in Section \ref{section5}, Section \ref{sec: section6} and Section \ref{sec:experiment}. Finally, we conclude with a discussion of the applications, limitations, and directions for future work.

\begin{figure}[t]
  \centering
  \includegraphics[width=1\linewidth]{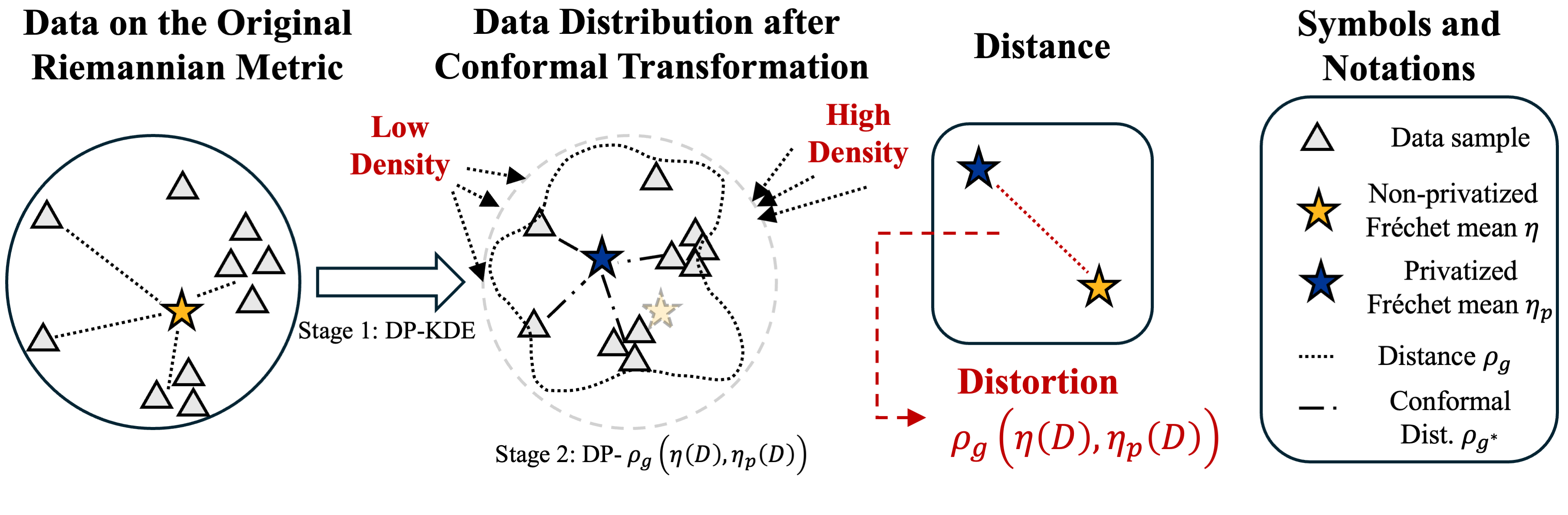}
  \caption{An overview of the Conformal-DP mechanism: we achieve desired $\epsilon$ level differential privacy in two stages: first, by perturbing the counts of data points in a randmly generated Voronoi tessellation of the manifold we can obtain a differentially private kernel density fit to the data, which we then use for the DP-Fr\'echet Mean release in stage 2. The conformal transformation in stage 2 changes the distributions of the data, and the Fr\'echet Mean will be distorted; it contracts the manifold as dense regions, resulting in less noise added, while sparse regions have more noise added.}
  \label{fig:overview}
  \vspace{-15pt}
\end{figure}

\section{Related Work and Background}
\label{sec:background}
In this section, we present related work, basic notations in \cref{grouped_symbols}, and an overview of key concepts related to DP, Riemannian manifolds, and conformal transformation. For more details, we refer the readers to Dwork et al. \cite{dwork2006differential, dwork2008differential, dwork2014algorithmic} for DP, and to Reimherr et al. and Jiang et al. \cite{reimherr2021differential, jiang2023gaussian} for extensions of DP over Riemannian manifolds. For more theoretical background on the Riemannian manifolds and the conformal space transformation, we refer the readers to \cite{francesco2012conformal, obata1970conformal, ginsparg1990applied, katsman2024riemannian}.

\paragraph{Differential Privacy over Manifolds.}
\label{para:manifold_dp}
DP provides a rigorous framework for protecting individual records while enabling statistical analysis of sensitive datasets~\cite{dwork2006differential}. In recent years, there has been growing interest in extending DP to settings where data naturally reside on non-Euclidean domains, particularly the Riemannian manifolds. Foundational work in statistics on manifolds by Fletcher et al.~\cite{fletcher2004principal}, Pennec et al.~\cite{pennec2006intrinsic}, and Dryden et al.~\cite{dryden2009non} establish the importance of respecting intrinsic geometry when performing inference on curved spaces, while earlier studies such as by Fisher et al.~\cite{fisher1995statistical} demonstrate the advantages of manifold representations for structured data. Building on these developments, several works have extended classical DP mechanisms to manifold-valued data~\cite{reimherr2021differential, utpala2022differentially, soto2022shape, jiang2023gaussian}. Reimherr et al. in ~\cite{reimherr2021differential} introduce Laplace and $k$-norm mechanisms using intrinsic geometric distances, while later work explores Gaussian and geometry-aware perturbation methods that respect manifold curvature and improve utility~\cite{utpala2022differentially, soto2022shape, jiang2023gaussian, dong2022gaussian}. These efforts highlight the promise of incorporating geometric structure into privacy mechanisms, enabling noise injection that preserves important manifold properties while maintaining formal privacy guarantees.

\paragraph{Riemannian Geometry.}
\label{para:rie_geo}
We briefly review the geometric concepts that are used throughout this work; standard references for these include \cite{greene1989review, lee2018introduction, lang2006introduction, srivastava2016functional, dryden2005statistical}. We consider a $d$-dimensional complete Riemannian manifold $\mathcal{M}$. For $m\in\mathcal{M}$, the tangent space is denoted by $T_m\mathcal{M}$, and the Riemannian metric is$$g=\{\langle\cdot,\cdot\rangle_m: m\in\mathcal{M}\},$$which equips each $T_m\mathcal{M}$ with an inner product and induces the intrinsic notions of length, distance, and volume on $\mathcal{M}$.

\paragraph{Geodesic Distance.}
\label{para:rie_geodesic_distance}
For a smooth curve $\gamma:[0,1]\to\mathcal{M}$, its length is defined by$$L(\gamma)=\int_0^1 \|\dot{\gamma}(t)\|_{\gamma(t)}\,dt =\int_0^1\Big(\langle \dot{\gamma}(t),\dot{\gamma}(t)\rangle_{\gamma(t)}\Big)^{\frac{1}{2}}\,dt.$$The Riemannian distance between $m_1,m_2\in\mathcal{M}$ is$$\rho(m_1,m_2) :=\inf_{\substack{\gamma:\gamma(0)=m_1\\ \gamma(1)=m_2}} L(\gamma),$$and a curve attaining the infimum is a (minimizing) geodesic.

\paragraph{Exponential and Logarithm Maps.} 
\label{para:exponential}
The exponential map at $m$ is $\exp_m:T_m\mathcal{M}\to\mathcal{M}$. If a unique geodesic $\gamma$ from $m_1$ to $m_2$ exists, then $\exp_{m_1}(\dot{\gamma}(0))=m_2$ under the corresponding parameterization. The map $\exp_m$ is locally a diffeomorphism near $0\in T_m\mathcal{M}$; we denote its local inverse by $\log_m$ (the logarithm map). The largest radius on which $\log_m$ is well defined is the injectivity radius at $m$, and the global injectivity radius is $\mathrm{inj}(\mathcal{M})=\inf_{m\in\mathcal{M}}\mathrm{inj}(m)$.

\paragraph{Volume Measure.} 
\label{para:volume_measure}
The metric $g$ induces a canonical volume form and volume measure $\mu$ on $\mathcal{M}$. For a coordinate chart $\varphi:U\subset\mathcal{M}\to\mathbb{R}^d$ with metric components $g_{ij}=\langle\partial_i,\partial_j\rangle_m$, the local volume form is
\[
d \mu_g=\sqrt{\operatorname{det}\left(g_{i j}\right)} d x^1 \cdots d x^d
\]
which defines a global measure (up to orientation) via a partition of unity\footnote{A \emph{partition of unity} on $\mathcal{M}$ is a collection of smooth non-negative functions $\{\varphi_i\}$ summing to $1$ on $\mathcal{M}$, with each $\varphi_i$ supported in a coordinate chart; see \cite[Chapter~2]{lee2018introduction}.}. In our proposed framework, $\rho$, $\exp_m$, and $\mu$ provide the intrinsic tools for defining density-aware perturbations on $\mathcal{M}$.

\paragraph{Conformal Transformation over Riemannian Manifolds}
\label{section2.3}
A conformal transformation (or conformal diffeomorphism) is a smooth map $f:(\mathcal{M},g)\to(\mathcal{M^*},g^*)$ that preserves angles. Equivalently, it rescales the Riemannian metric by a strictly positive function $g^* = e^{2\sigma} g, \sigma:\mathcal{M}\to\mathbb{R},$ where $\sigma$ is \emph{conformal scaling function}~\cite{francesco2012conformal,obata1970conformal,ginsparg1990applied}. Here, $\mathcal{M}^*=\mathcal{M}$, the map $f$ is a conformal symmetry of $(\mathcal{M},g)$~\cite{del2020differentiable}. Metrics related by $g^*=e^{2\sigma}g$ are said to belong to the same conformal class~\cite{obata1970conformal}. The core intuition is that the inner product on each tangent space is scaled uniformly by $e^{2\sigma(m)}$. Because of this uniform scaling, angles between tangent vectors (and hence between smooth curves) remain completely unchanged~\cite{ottazzi2010liouville}. However, lengths and distances are modified by a position-dependent factor: infinitesimal lengths scale by $e^{\sigma(m)}$. If $e^{2\sigma}$ is constant, the transformation is homothetic and simply scales all lengths uniformly~\cite{weinberg2013gravitation}. In short, conformal transformations preserve local shapes while allowing distances and volumes to vary flexibly across the manifold. We adopt the conformal rescaling $g^*=e^{2\sigma}g$ for two practical reasons. First, in Section~\ref{section3}, we construct $\sigma$ via an elliptic PDE, which guarantees the existence, uniqueness, and smoothness of the conformal factor. Second, this multiplicative formulation provides explicit bi-Lipschitz bounds on geodesic distances. These bounds are crucial for managing volume sensitivities, ultimately driving the theoretical guarantees in our privacy--utility analysis for conformal DP in Section~\ref{section4}.

\paragraph{Topological Analysis.}
\label{para:topo_analy}
Let $\mathcal{M}$ be a compact and smooth Riemannian manifold of dimension $d$ and be equipped with: 1) Riemannian metric $g$; 2) Density-Aware conformal metric $g^* = \phi \cdot g$, where the conformal factor $\phi:\mathcal{M} \rightarrow R > {0}$. The Riemannian manifolds $(\mathcal{M},g)$ and $(\mathcal{M},g^*)$ share the same underlying smooth topological structure. Specifically, the identity map $\mathbf{id_{\mathcal{M}}} : (\mathcal{M},g) \rightarrow (\mathcal{M},g^*)$ is a diffeomorphism \cite{aviles1988conformal, obata1970conformal, schoen1984conformal}, ensuring that any point $z$ in the topological manifold $\mathcal{M}$ is identified with itself, independently of $g$ or $g^*$ for all $z \in \mathcal{M}$, so that the topology and smooth structure of $\mathcal{M}$ remain unchanged. The metrics $g$ and $g^*$ differ only in their geometric measurements (e.g., geodesic distances and volumes) on $\mathcal{M}$. For example, consider a stochastic output $\vec{m} \in \mathcal{M}$ calculated under the metric $g^*$. Because $\vec{m}$ is fundamentally a point in the topological manifold $\mathcal{M}$, no additional ``Inverse operation'' needs to be calculated to interpret $\vec{m}$ in $(\mathcal{M},g)$. The difference arises only in the way the geometric properties of $\vec{m}$ are quantified under $g$ versus $g^*$.

\begin{table}[t]
\centering
\small
\setlength{\abovecaptionskip}{2pt}   
\setlength{\belowcaptionskip}{2pt}   

\caption{Key notations used in this paper.}
\label{grouped_symbols}

\resizebox{\linewidth}{!}{
\begin{tabular}{ll}
\toprule
\multicolumn{2}{l}{\textbf{Original Riemannian Manifolds Notations}} \\ \midrule
$\mathcal{M}$ & A compact, complete Riemannian manifold. \\
$d$ & Dimension of manifold $\mathcal{M}$. \\
$g$ & Original Riemannian metric on $\mathcal{M}$. \\
$\langle\cdot,\cdot\rangle_m$ & Riemannian metric at point $m$ on $\mathcal{M}$. \\
$\gamma(t)$ & A smooth path or geodesic connecting points on $\mathcal{M}$. \\
$\|\dot{\gamma}(t)\|_{\gamma(t)}$ & Norm of the velocity vector of $\gamma$ at point $\gamma(t)$. \\
$\rho_g(x,y)$ & Geodesic distance between points $x,y$ under original metric $g$. \\
$T_m\mathcal{M}$ & Tangent space at $m$ under $g$. \\
$\mathrm{inj}(\mathcal{M})$ & Injectivity radius of the manifold $\mathcal{M}$. \\
$\mu_g$, $d\mu_g$ & Volume measure associated with / induced by metric $g$. \\
$g_{ij}$ & Components of the Riemannian metric tensor $g$ in local coordinates. \\
$B_r(m)$ & Geodesic ball with central point $m$ with radius $r$ under metric $g$. \\
$\Delta_g$ & Laplace–Beltrami (LB) operator associated with metric $g$. \\
$g_{\mu\nu}$ & Metric tensor in a $d$-dimensional space. \\
$\eta_{\mu\nu}$ & Flat metric (Euclidean or Minkowski) in conformal transformations. \\

\midrule
\multicolumn{2}{l}{\textbf{Conformal Metric Notations}} \\ \midrule
$g^*$ & Conformal metric defined as $g^* = e^{2\sigma}g$. \\
$\rho_{g^*}(x,y)$ & Geodesic distance under conformal metric $g^*$. \\
$\mu_{g^*}$ & Volume measure associated with conformal metric $g^*$. \\
$\sigma(x)$ & Conformal scaling function (sanitized), solved from PDEs. \\
$\phi(x)$ & Conformal factor (sanitized) defined as $\phi(x)=e^{2\sigma(x)}$. \\
$\phi_{\min}, \phi_{\max}$ & Lower and upper bounds of conformal factor $\phi(x)$. \\
$\lambda^*$ & Rate parameter of the conformal Laplace mechanism. \\
$b^*$ & Noise scale parameter $b^* = 1 / \lambda^*$. \\

\midrule
\multicolumn{2}{l}{\textbf{Differential Privacy Mechanism Notations}} \\ \midrule
$\Delta$ & Global sensitivity under original metric $g$. \\
$\Delta^{(1)}$ & Continuous $L^1$ sensitivity for $f_{\text{data}}$. \\
$\Delta^*$ & Conformal sensitivity under conformal metric $g^*$. \\
$f_{\text{data}}(x)$ & Kernel density function on $\mathcal{M}$. \\
$\Tilde{f}_{\text{data}}(x)$ & Sanitized kernel density function on $\mathcal{M}$. \\
$\varepsilon_\phi$ and $\varepsilon_{conf}$ & Privacy budget parameter for stage 1 and 2, respectively. \\
$\mathcal{A}(\cdot)$ & A random mechanism. \\
$L_{D,D'}(z)$ & Privacy loss random variable. \\
$\mathbb{P}_r^*(\cdot \mid \cdot)$ & PDF w.r.t.\ measure $\mu_{g^*}$. \\
$\eta(D)$ & Fr\'echet mean of dataset $D$ on $\mathcal{M}$. \\
$\eta_p(D)$ & Privatized Fr\'echet mean of dataset $D$. \\
\bottomrule
\end{tabular}
}
\end{table}

\section{Differentially Private Conformal Factor Release}
\label{section3}
Our conformal-DP mechanism contains two stages of private release. In stage-1, we use an anchor-based Laplace mechanism to release the differentially private kernel density estimation (KDE), thus to protect the conformal factor $\phi$ to avoid information leakage during conformal transformation. In stage-2, we release the final differentially private Fr\'echet mean by adding the Laplace noise that protects the datasets on the Riemannian manifold.

\subsection{Computing Sanitized Conformal Factor.}
\label{section3.1}
In the first stage of designing our mechanism, we need to derive the conformal scaling function, denoted by $\sigma(x)$  and the conformal factor, denoted by $\phi(x)$, which are related by $\phi(x) = e^{\sigma(x)}, x \in R$. The conformal factor plays a crucial role that connects the original manifold with the Riemannian metric $g$ and the conformal metric $g^*$. Here, note that the conformal transformation is not a projection or mapping of the original manifold onto a new manifold; the underlying differentiable manifold is unchanged; only the Riemannian metric structure is altered, as explained in Paragraph \ref{para:topo_analy} (topological analysis).

Given a $d$-dimensional ($d \geq 2$) compact, boundaryless, complete Riemannian manifold  $(\mathcal{M},g)$ equipped with the Borel $\sigma$-algebra \cite{wasserman2010statistical,pmlr-v97-awan19a}, Riemannian metric $g=\{\langle\cdot,\cdot\rangle_m: m\in \mathcal{M}\}$, and volume measure $\mu_g$, to calculate $\sigma(x)$ on $\mathcal{M}$, we adopt a generalized approach with the Laplace–Beltrami (LB) operator $\Delta_g = \text{div}_g\nabla$ \cite{urakawa1993geometry}. \footnote{\label{ft:need-to-discrete}There is rarely an explicit form for $\Delta_g$; we discretize $\Delta_g$ and $H$ over $D\subset\mathcal{M}^N$ to solve for $\sigma$.}

\paragraph{Intrinsic Data Density $f_{\mathrm{data}}(x)$.}
We initially compute $f_{\mathrm{data}}$ via kernel density estimation (KDE) \cite{dc7c4be6-dcd9-3f65-915e-b5469ed5cadb}, which encodes the underlying data density information. We pre-select bandwidth $h\in\bigl(0,\tfrac12\mathrm{inj}(\mathcal M)\bigr)$ (see Paragraph \ref{para:h}) and the nonnegative, compactly supported public dimensionless base radial profile $K\in C^\infty([0,\infty))$ (as captured in Theorem \ref{theorem:h}). {Detailed parameter selection is presented in \cref{section5}.} To construct the bandwidth-scaled spatial kernel $K_h$\,\footnote{\label{ft:kernel_profile} Here, we choose a normalized $C^\infty$ bump base profile, which is tightly supported and infinitely differentiable. To strictly decouple its structural shape from the physical bandwidth $h$, it is defined purely over a dimensionless generic variable $u \geq 0$: $K(u)= \begin{cases} C_K \exp \left(-\frac{1}{1-u}\right), & 0 \leq u<1 \\ 0, & u \geq 1\end{cases}$, where $C_K$ is the normalization constant ensuring $\int_{\mathbb{R}^d} K(\mathrm{u}) d\mathrm{u} = 1$, $u$ represents the square of the vector length $\| \mathrm{v}\|^2$.}, we derive the kernel radial function as follows: denote the geodesic distance $r=\rho_g\left(x, X_i\right) \geq 0$,
\begin{equation}
\label{eq:kernel}
    K_h(r)\ =\ h^{-d}\,K\!\Big(\frac{r^2}{h^2}\Big),
\end{equation}
and the intrinsic KDE is explicitly evaluated as:
\begin{equation}
\label{eq:kde}
    f_{\mathrm{data}}(x)\ =\ \frac{1}{N}\sum_{i=1}^N K_h\!\big(\rho_g(x,X_i)\big), \forall x\in\mathcal M.
\end{equation}

\paragraph{Choose KDE bandwidth $h$.}
\label{para:h}
Kernel bandwidth $h$ is used to acquire the Kernel Density Estimation (KDE) function \cite{dc7c4be6-dcd9-3f65-915e-b5469ed5cadb}. It is a prior public parameter and only related to the manifold properties and public sample size $N$. To find the optimal $h$ on different manifolds, we propose Theorem \ref{theorem:h}.
\begin{theorem}
\label{theorem:h}
    Consider $(\mathcal{M},g)$ with volume $V$, injectivity radius $\mathrm{inj}(\mathcal{M})$, scalar curvature $S(x)$, sample size $N$ and a uniform reference prior $f_{ref}(x) = 1/V$. Let $K$ be a spherically symmetric radial kernel on the tangent space such that $\int_{\mathbb{R}^d} K(\|u\|^2) du = 1$, with roughness $R(K) = \int_{\mathbb{R}^d} K(\|u\|^2)^2 du$ and second moment $\mu_2(K) = \int_{\mathbb{R}^d} u_1^2 K(\|u\|^2) du$.  For a given strictly positive structural regularizer $\lambda_0 > 0$ (very small floating-point numbers, avoid $h \to \infty$), the bandwidth $h$ that minimizes the regularized Asymptotic Mean Integrated Squared Error ($\mathrm{AMISE}$) on $\mathcal{M}$ is as follows:
    {\small
    \begin{equation}
        h=\min \left\{\left(\frac{9 d \cdot V^2 \cdot R(K)}{N \cdot \mu_2(K)^2 \cdot\left(\|S\|_{L^2(\mathcal{M})}^2+\lambda_0\right)}\right)^{\frac{1}{d+4}}, \frac{1}{2} \operatorname{inj}(\mathcal{M})\right\},
    \end{equation}}
    where $\|S\|_{L^2(\mathcal{M})}^2 = \int_{\mathcal{M}} S(x)^2 d\mu_g(x)$ is the squared $L^2$-norm of the scalar curvature over $\mathcal{M}$.
\end{theorem}

\begin{proof}
\label{proof: h}
We provide detailed proof in Appendix \ref{proof_theorem_h}.
\end{proof}

\paragraph{Sanitizing KDE $f_{\mathrm{data}}(x)$ (Stage 1).}
\label{paragraph:adj}
Directly bounding functional sensitivity in the infinite-dimensional space $L^1(\mathcal{M})$ is difficult when the output is defined over a continuous Riemannian manifold. Similar issues have been observed in prior work on DP mechanisms for functional outputs \cite{awan2019benefits}. To obtain a tractable privacy analysis, we replace the continuous formulation with a discrete anchor-based representation.

Let $\mathcal{P} = \{p_1, \dots, p_J\} \subset \mathcal{M}$ be a pre-defined, data-independent set of public anchor points on $\mathcal{M}$, where $J$ denotes the total number of anchors and $j \in \{1,\dots,J\}$ indexes each anchor $p_j$. The anchor set provides a geometric discretization of the manifold prior to privatization. Since the resulting quantization error depends on the geometry of $\mathcal{M}$, anchor construction should be adapted to the target manifold while remaining fully data-independent. In practice, $\mathcal{P}$ can be constructed using deterministic covering schemes, such as epsilon-nets or Centroidal Voronoi Tessellations (CVTs), with respect to the volume measure $\mu_g$ \cite{kanai1986rough,du1999cvt,du2003cvt}. Once $\mathcal{P}$ is fixed, it induces a discrete partition of $\mathcal{M}$, for example, through a geodesic Voronoi tessellation $\mathcal{V} = \{V_1, \dots, V_J\}$ under the intrinsic geodesic metric $\rho_g$, where each anchor $p_j$ is associated with a Voronoi cell $V_j$. Therefore, assigning a data point to its nearest anchor is equivalent to assigning it to the corresponding cell. The manifold-specific construction procedures are described in Section \ref{sec:experiment}.

For a raw manifold dataset $D = \{x_1, \dots, x_N\}$, each point $x_i$ is mapped to its nearest anchor, yielding a discrete histogram count vector $c(D) = (c_1, \dots, c_J) \in \mathbb{N}^J$, where
$c_j = \sum_{i=1}^N \mathbb{I}(x_i \in V_j),$ and $\sum_{j=1}^J c_j = N$. This discretization reduces the privacy analysis to the sensitivity of a finite-dimensional count query.

We define adjacency using neighboring datasets $D \sim D'$ that differ in exactly one data point. Under this substitution model, replacing one point moves at most one unit of mass from one Voronoi cell to another. Therefore, the count vector changes by $-1$ in one coordinate and $+1$ in another, or remains unchanged if both points fall in the same cell. The resulting $L_1$ global sensitivity is the standard histogram sensitivity under substitution adjacency \cite{dwork2014algorithmic}:
\begin{equation}
\label{eq:sensitivity-constant}
    \Delta_1 = \sup_{D\sim D'} \| c(D) - c(D') \|_1 \equiv 2.
\end{equation}
This gives a constant, dimension-independent sensitivity bound and avoids the need to directly control perturbations in continuous $L^1(\mathcal{M})$.

\paragraph{Anchor-Based Laplace Mechanism and Smooth Reconstruction.}
\label{paragraph:privacy_mechanism}
Given a privacy budget $\varepsilon_\phi > 0$ and a discrete count vector $c \in \mathbb{N}^J$, the mechanism $\mathcal{A}_{\mathrm{anchor}}(D)$ proceeds in three steps:

\begin{enumerate}
    \item \textbf{Noise injection and Formal DP Guarantee.}
    We apply the Laplace mechanism independently to each anchor count using the global sensitivity $\Delta_1 = 2$. The randomized output $\tilde{c} \in \mathbb{R}^J$ is generated via:
    \begin{equation}
        \tilde{c}_j = c_j + \mathrm{Lap}\left(\frac{\Delta_1}{\varepsilon_{\phi}}\right), \quad \forall j \in \{1, \dots, J\}.
    \end{equation}
    The output density of the mechanism at a noisy count vector $\tilde{c}$ is proportional to
    $\exp\left(-\frac{\varepsilon_\phi}{\Delta_1}\|\tilde{c}-c(D)\|_1\right).$
    For any two adjacent datasets $D\sim D'$, the likelihood ratio at $\tilde{c}$ is bounded as
    {\small
    \begin{equation}
    \label{eq:eps_phi}
        \frac{\mathbb{P}(\mathcal{A}_{\mathrm{anchor}}(D)=\tilde{c})}
             {\mathbb{P}(\mathcal{A}_{\mathrm{anchor}}(D')=\tilde{c})}
        \le
        \exp\left(
        \frac{\varepsilon_\phi}{\Delta_1}\|c(D)-c(D')\|_1
        \right)
        \le
        e^{\varepsilon_\phi}.
    \end{equation}}
    Therefore, the intermediate discrete mechanism $\mathcal{A}_{\mathrm{anchor}}$ satisfies pure $\varepsilon_\phi$-differential privacy.

    \item \textbf{Projection to a valid discrete distribution (post-processing).}
    To obtain nonnegative weights, we truncate the noisy counts as
    $\tilde{c}_j^+ = \max(0, \tilde{c}_j),$ and normalize them to define $w \in \Delta^{J-1}$:
    $
        w_j = \frac{\tilde{c}_j^+}{\sum_{r=1}^J \tilde{c}_r^+}.
    $
    In the event that $\sum_{r=1}^J \tilde{c}_r^+ = 0$, we set $w_j = 1/J$ for all $j$. This step uses only the noisy counts and is protected by post-processing immunity \cite{dwork2014algorithmic}.

    \item \textbf{Continuous reconstruction via manifold-adapted kernels (post-processing).}
    To reconstruct a continuous density on $\mathcal{M}$, we directly utilize the scaled spatial kernel $K_h$ defined in Eq.~\ref{eq:kernel}. While the original intrinsic KDE (Eq.~\ref{eq:kde}) relies on asymptotic Euclidean scaling $h^{-d}$, securing an exact probability measure in $\mathcal{F}_\infty$ for downstream PDE stability requires correcting for localized geometric volume distortions \cite{pelletier2005kernel}. Therefore, we define a pointwise normalized kernel for each anchor $p_j$:
    \begin{equation}
    \label{eq:manifold_adapted_kernel}
        \bar{K}_{h, j}(x) = \frac{K_h\big(\rho_g(x, p_j)\big)}{\int_{\mathcal{M}} K_h\big(\rho_g(y, p_j)\big) d\mu_g(y)},
    \end{equation}
    where $d\mu_g$ is the Riemannian volume measure. The sanitized intrinsic KDE is then deterministically synthesized as a smooth convex combination of these localized kernels:
    \begin{equation}
    \label{eq:reconstructed_kde}
        \widetilde{f}_{\mathrm{data}}(x) = \sum_{j=1}^J w_j \bar{K}_{h, j}(x).
    \end{equation}
\end{enumerate}

\paragraph{Computational Complexity.}
Given $N$ data points, $J$ anchors, and $M$ downstream spatial evaluation points, constructing the anchor histogram by direct sample-to-anchor accumulation costs $\mathcal{O}(NJ)$, while noise injection and normalization over the $J$ anchor weights cost $\mathcal{O}(J)$. Reconstructing $\widetilde{f}_{\mathrm{data}}$ at $M$ downstream spatial locations from the $J$ anchors costs $\mathcal{O}(MJ)$, matching the standard complexity of direct kernel summation, which scales linearly in the number of centers and evaluation points \cite{fastkde, qin2019akde}. Because the normalization terms in Eq.~\ref{eq:manifold_adapted_kernel} are data-independent, they can be pre-computed offline. Thus, the total online complexity is $\mathcal{O}(NJ + MJ)$. This separates the cost of privatization from mesh refinement: increasing $M$ affects downstream evaluation only, not the privacy mechanism itself.

\subsection{From KDE to Conformal Factor and PDE Regularity}
\label{subsec:eps_phi}

Given the sanitized density $\tilde{f}_{\mathrm{data}} \in \mathcal{F}_{\infty}$, we define the conformal factor $\phi(x)=e^{2\sigma(x)}$ through the scaling function $\sigma(x):\mathcal M\to\mathbb R$, which is the unique solution to the Helmholtz--Poisson equation \cite{hilbert1985methods,beck2016elliptic,fernandez2023regularity}:
\begin{equation}
\label{eq:poisson}
    (-\Delta_g+\upsilon)\,\sigma(x)\;= \tilde{f}_{\mathrm{data}}(x) - \tilde{f}_{\mathrm{max}}, \quad \upsilon>0,\ x\in\mathcal M.
\end{equation}
Here,
$
\tilde{f}_{\mathrm{max}} \triangleq \operatorname{\,sup}_{x \in \mathcal{M}} \tilde{f}_{\mathrm{data}}(x)
$
denotes the supremum of the sanitized density. Since $\tilde{f}_{\mathrm{max}}$ is computed deterministically from the already privatized output $\tilde{f}_{\mathrm{data}}$, it is obtained by post-processing and incurs no additional privacy cost. The parameter $\upsilon$ is public and ensures coercivity of the operator $(-\Delta_g+\upsilon)$; see Paragraph \ref{para:upsilon}.

Because the anchor-based mechanism reconstructs $\tilde{f}_{\mathrm{data}}$ using smooth manifold-adapted kernels, the right-hand side of \cref{eq:poisson} is smooth. This allows us to apply standard elliptic regularity theory directly to the manifold in the following Lemma (proof is in in Appendix \ref{proof_lemma1}.)

\begin{lemma}[{\cite{gilbarg1977elliptic,jost2008riemannian}}]
\label{lemma1}
Let $(\mathcal{M},g)$ be a closed Riemannian manifold. For any $\upsilon>0$ and any smooth right-hand side function $H\in C^\infty(\mathcal M)$, the equation
$
(-\Delta_g+\upsilon)\,u \;=\; H
$
admits a unique classical solution $u\in C^\infty(\mathcal M)$. In particular, with
$
H = \tilde{f}_{\mathrm{data}} - \tilde{f}_{\mathrm{max}},
$
\cref{eq:poisson} achieves a unique smooth solution $\sigma\in C^\infty(\mathcal M)$.
\end{lemma}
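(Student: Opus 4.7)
The plan is to prove this as a textbook application of the spectral theory of elliptic operators on closed Riemannian manifolds, so the proof is mostly a matter of assembling the right ingredients in the right order. First, I would set $L := -\Delta_g + \upsilon$ and verify three structural properties: (i) $L$ is a second-order linear differential operator with smooth coefficients and principal symbol $|\xi|_g^2$, hence uniformly elliptic on the compact manifold $\mathcal M$; (ii) because $\mathcal M$ is closed, integration by parts against $d\mu_g$ gives the identity $\langle Lu,u\rangle_{L^2} = \int_{\mathcal M} |\nabla u|_g^2\,d\mu_g + \upsilon \int_{\mathcal M} u^2\,d\mu_g$, so $L$ is symmetric, positive, and coercive on $H^1(\mathcal M)$ with coercivity constant at least $\upsilon>0$; and (iii) $L$ has a compact self-adjoint extension on $L^2(\mathcal M,d\mu_g)$.

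Second, I would invoke the spectral theorem for $-\Delta_g$ on a closed Riemannian manifold (as in \cite{jost2008riemannian}): there is a complete $L^2$-orthonormal basis of smooth eigenfunctions $\{\varphi_k\}_{k\ge 0}$ with eigenvalues $0=\lambda_0<\lambda_1\le\lambda_2\le\cdots\to\infty$. Hence $L$ shares this eigenbasis with eigenvalues $\upsilon+\lambda_k\ge \upsilon>0$, so $0$ is not in the spectrum and $L^{-1}$ is a bounded operator on $L^2$. Uniqueness then follows immediately from the coercivity identity above: $Lu=0$ forces $\|\nabla u\|_{L^2}^2+\upsilon\|u\|_{L^2}^2=0$, hence $u\equiv 0$.

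For existence and smoothness, I would expand $H=\sum_k h_k \varphi_k$ and define
\begin{equation*}
u \;=\; \sum_{k\ge 0} \frac{h_k}{\upsilon+\lambda_k}\,\varphi_k,
\end{equation*}
which converges in $L^2$ and solves $Lu=H$. To upgrade to $C^\infty$, I would use elliptic regularity: since $L$ has smooth coefficients and $H\in C^\infty(\mathcal M)$, any distributional solution of $Lu=H$ is automatically smooth, by the standard interior regularity theorem for second-order elliptic operators \cite{gilbarg1977elliptic}. Equivalently, one can argue directly via the Fourier coefficients: smoothness of $H$ is equivalent to the rapid (faster than polynomial) decay of $h_k$, and combined with Weyl's growth law $\lambda_k \asymp k^{2/d}$ this forces the same rapid decay on $h_k/(\upsilon+\lambda_k)$, so the series for $u$ converges in every Sobolev norm and $u\in C^\infty(\mathcal M)$. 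The concrete claim with $H=\tilde c-\tilde f_{\mathrm{data}}$ then follows verbatim.

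The only genuine obstacle, and the one I would spend most care on, is ensuring that the right-hand side really is smooth to begin with: the privatized object $\tilde f_{\mathrm{data}}=f_{\mathrm{data}}+\Xi$ is a priori only in $L^1$ because of the functional Laplace perturbation $\Xi$ in \eqref{eq:func-laplace}. This is exactly why the paper composes $\tilde f_{\mathrm{data}}$ with the public heat-kernel smoothing operator $e^{t\Delta_g}$ for $t>0$; I would close that gap by noting that on a closed manifold the heat semigroup has a smooth Schwartz kernel $p_t(x,y)\in C^\infty(\mathcal M\times\mathcal M)$ for every $t>0$, so $e^{t\Delta_g}$ maps $L^1(\mathcal M)$ (indeed all of $\mathcal D'(\mathcal M)$) into $C^\infty(\mathcal M)$. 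Shifting by the scalar $\tilde c$ preserves smoothness, so the hypothesis $H\in C^\infty(\mathcal M)$ of the lemma is legitimately satisfied and the spectral/elliptic argument above applies.
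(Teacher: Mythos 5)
Your proof is correct, but it takes a genuinely different route from the paper's. The paper argues via the variational (weak) formulation: it checks that the bilinear form $a(u,v)=\int_{\mathcal M}\bigl(\langle\nabla u,\nabla v\rangle_g+\upsilon\,uv\bigr)\,d\mu_g$ is continuous and coercive on $H^1(\mathcal M)$ for $\upsilon>0$, invokes Lax--Milgram to get a unique weak solution for any $H\in H^{-1}$, and then upgrades to $C^\infty$ by elliptic regularity. You instead diagonalize: you expand in the $L^2$-orthonormal eigenbasis of $-\Delta_g$, observe that the eigenvalues of $L=-\Delta_g+\upsilon$ are $\upsilon+\lambda_k\ge\upsilon>0$, write the solution explicitly as $u=\sum_k h_k(\upsilon+\lambda_k)^{-1}\varphi_k$, and get smoothness either from elliptic regularity or from rapid decay of Fourier coefficients plus Weyl's law. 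Both are standard and both prove the lemma; the Lax--Milgram route is shorter and does not require self-adjointness, while your spectral route gives an explicit solution formula and immediately yields the quantitative resolvent bound $\|(-\Delta_g+\upsilon)^{-1}\|_{L^2\to L^2}\le\upsilon^{-1}$ that the paper later uses (without re-deriving it) in its stability estimates \eqref{eq:linfty-bound}--\eqref{eq:hr-bound}. One small imprecision: you assert that $L$ "has a compact self-adjoint extension on $L^2$" --- it is the resolvent $L^{-1}$ that is compact, not $L$ itself --- but your argument does not actually rely on that phrasing. Your closing discussion of why $H$ is legitimately smooth (the functional Laplace noise $\Xi$ only puts $\tilde f_{\mathrm{data}}$ in $L^1$, and the public heat semigroup $e^{t\Delta_g}$ with smooth kernel $p_t\in C^\infty(\mathcal M\times\mathcal M)$ maps $L^1$ into $C^\infty$) addresses a hypothesis the paper states but does not justify in its proof, and is a worthwhile addition.
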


This regularity result is important because it avoids the weaker solution that would arise if only $L^\infty(\mathcal{M})$ control were available for the privatized density. In our setting, the reconstructed density satisfies $\tilde{f}_{\mathrm{data}} \in C^\infty(\mathcal{M})$, so the source term in \cref{eq:poisson} meets the assumptions of Lemma \ref{lemma1} directly, without requiring any additional smoothing step. As a result, the scaling function $\sigma$ is smooth, and the conformal factor $\phi(x)=e^{2\sigma(x)}$ is correspondingly smooth. Combined with the boundedness results established later, this ensures that the transformed metric remains well-defined and non-degenerate.

\paragraph{Regularization Parameter, $\upsilon$.}
\label{para:upsilon}
The regularization parameter $\upsilon$ controls the trade-off between the numerical stability of the operator $\mathcal{L}_\upsilon = -\Delta_g + \upsilon$ and the preservation of intrinsic geometric structures. Under the Laplacian eigenbasis expansion, a large $\upsilon$ strongly suppresses the dominant Fiedler mode $\lambda_1$ \cite{fiedler1973algebraic}, which can drive the conformal metric toward degeneration. From the spectral viewpoint, \(\upsilon\) creates a trade-off: a large value overly damps the main low-frequency geometric structure and can make the conformal transformation too weak, whereas a very small value leads to numerical instability. To balance numerical stability with geometric fidelity, we select $\upsilon$ by minimizing the competing effects between geometric signal distortion $(\mathcal{E}{\mathrm{geo}} \approx \upsilon/\lambda_1)$ and numerical instability $(\mathcal{E}{\mathrm{num}} \approx \lambda_1/\upsilon)$. Minimizing the joint relative penalty yields the strict theoretical optimum:
\begin{equation}
    \upsilon_{\mathrm{opt}} = \arg\min_{\upsilon > 0} \left( \frac{\upsilon}{\lambda_1} + \frac{\lambda_1}{\upsilon} \right) = \lambda_1
\end{equation}
In practice, extracting the exact first nonzero eigenvalue $\lambda_1$ via sparse eigendecomposition can be computationally expensive on dense meshes, since it requires solving a large-scale sparse eigenproblem \cite{nasikun2018fast}. Instead, we safely anchor $\upsilon$ using dataset-independent universal geometric priors. To choose \(v\), we rely on two classical lower bounds for \(\lambda_1\). The first is Cheeger’s inequality \cite{cheeger2015lower}, \(\lambda_1 \ge h_C^2/4\), where \(h_C\) reflects the geometric bottleneck of the manifold. The second is the Zhong–Yang bound \cite{chavel1984eigenvalues}, \(\lambda_1 \ge \pi^2/\operatorname{diam}(\mathcal{M})^2\), which applies when the manifold has non-negative Ricci curvature. Using the sharper of these public bounds lets us set \(v\) in a way that keeps the PDE operator stable, without touching the private dataset and therefore without affecting the pure \(\epsilon\)-DP guarantee.

To make the operator stable while avoiding too much attenuation of the main geometric component, we choose $v$ independently of the dataset using the best available lower bound on $\lambda_1$:
\begin{equation}
    \upsilon = \max\left\{ \frac{h_C^2}{4}, \, \frac{\pi^2}{\operatorname{diam}(\mathcal{M})^2} \right\}
\end{equation}
This choice yields a well-conditioned PDE operator while preserving the pure $\varepsilon_{\phi}$-DP guarantee, since the parameter is determined entirely from public geometric bounds.

\paragraph{Properties of the Conformal Factor}
\label{subsec:properties_conformal_factor}
This conformal transformation yields an important structural guarantee: the induced conformal factor is always bounded and can only contract, never expand, the original geometry, as we show in the next proposition. Such boundedness is important for ensuring that the privatized geometric representation remains stable and geometrically meaningful.

\begin{proposition}[Bound of the Conformal Factor.]
\label{prop:conformal_factor_properties}
Let $\sigma(x)$ be the unique solution to \cref{eq:poisson}, and define the conformal factor
$\phi(x) = e^{2\sigma(x)}.$
Then, for all $x \in \mathcal{M}$,
\begin{equation}
\label{eq:sigma_bounds_combined}
\frac{\tilde{f}_{\mathrm{min}} - \tilde{f}_{\mathrm{max}}}{\upsilon}
\le
\sigma(x)
\le
0,
\end{equation}
and therefore
\begin{equation}
\label{eq:phi_bounds_combined}
\exp\left(
2\frac{\tilde{f}_{\mathrm{min}} - \tilde{f}_{\mathrm{max}}}{\upsilon}
\right)
\le
\phi(x)
\le
1.
\end{equation}
Therefore, the conformal mechanism guarantees a uniformly bounded distortion of the geometry while preventing any local metric expansion.
\end{proposition}

\begin{proof}
    We give detailed proof in Appendix \ref{proof_prop:conformal_factor_properties}.
\end{proof}

\subsection{Utility and Accuracy Guarantees}
\label{sec:utility_accuracy}

To establish the statistical validity of the DP framework and quantify its downstream geometric fidelity, we rigorously bound the pointwise error between the private conformal factor $\phi_{p}(x) = e^{2\sigma_p(x)}$ and non-private counterpart $\phi_{np}(x) = e^{2\sigma_{np}(x)}$ (non-private conformal factor means no stage-1 DP incurred); see paragraph \ref{paragraph:adj}.

\paragraph{Error Representation and PDE Stability.}
\label{para:error_rep_pde}
Define the non-private scaling function $\sigma_{np}$ via $(-\Delta_g+\upsilon)\sigma_{np} = f_{\mathrm{data}} - f_{\mathrm{max}}$, where $f_{\mathrm{max}} = \max_{x\in\mathcal{M}} f_{\mathrm{data}}(x)$. Correspondingly, the private scaling function $\sigma_p$ is governed by $(-\Delta_g+\upsilon)\sigma_p = \tilde{f}_{\mathrm{data}} - \tilde{f}_{\mathrm{max}}$. Subtracting these respective governing equations isolates the error field $e_\sigma = \sigma_p - \sigma_{np}$:
\begin{equation}
    (-\Delta_g+\upsilon)e_\sigma = (\tilde{f}_{\mathrm{data}} - \tilde{f}_{\mathrm{max}}) - (f_{\mathrm{data}} - f_{\mathrm{max}}).
\end{equation}
The global $L^\infty$ stability of the elliptic PDE ensures that the continuous error field is uniformly bounded by the supremum of the source perturbation:
\begin{equation}
\label{eq:err-linfty}
    \|e_\sigma\|_{L^\infty(\mathcal{M})} \le \frac{1}{\upsilon} \left\| (\tilde{f}_{\mathrm{data}} - \tilde{f}_{\mathrm{max}}) - (f_{\mathrm{data}} - f_{\mathrm{max}}) \right\|_{L^\infty(\mathcal{M})}
\end{equation}

\paragraph{Impact on the Conformal Factor.}
\label{para:impact_conf_factor}
The PDE stability result transfers directly to the conformal factor. As shown in Paragraph \ref{para:error_rep_pde}, replacing the right-hand side with the corresponding essential supremum implies that $\sigma_p(x) \le 0$ and $\sigma_{np}(x) \le 0$ on $\mathcal{M}$.

This sign constraint is useful because it restricts both scaling functions to the non-expansive region of the exponential map. In particular, the function $y \mapsto e^{2y}$ is Lipschitz continuous on $(-\infty,0]$, with $L = \sup_{y \le 0} 2e^{2y} = 2.$
Applying the mean value theorem, the pointwise error of the conformal factor can therefore be bounded directly in terms of the PDE error:
\begin{equation}
\label{eq:phi-diff}
\begin{aligned}
        &|\phi_p(x) - \phi_{np}(x)|\\
        &\le 2 \|e_\sigma\|_{L^\infty(\mathcal{M})} \\
        &\le \frac{2}{\upsilon} \left\| (\tilde{f}_{\mathrm{data}} - \tilde{f}_{\mathrm{max}}) - (f_{\mathrm{data}} - f_{\mathrm{max}}) \right\|_{L^\infty(\mathcal{M})}
\end{aligned}
\end{equation}
Thus, the error in the conformal factor is controlled by the error in the sanitized density through the PDE stability bound, without additional exponential amplification. This shows that the induced geometric deformation remains bounded and predictable under privatization.


Next, we present Theorem \ref{theorem1} to prove that the conformally transformed metric $g^*$ satsifies the same properties of smoothness and completeness as $g$ (the proof is in Appendix D). 
\begin{theorem}
\label{theorem1}
Let $(\mathcal{M},g)$ be a smooth Riemannian manifold and
$\Omega_{g}$ a second-order, strongly elliptic operator with sufficiently smooth coefficients.
For any $H \in C^{\infty}(M)$ and $v>0$, the equation $\left(-\Delta_g+\right. v) \sigma=H$ produces a unique $\sigma \in C^{\infty}(M)$.
Consequently, the conformal factor $\phi(x)=e^{2\sigma(x)}>0$ is smooth, and the conformal metric
$g^{*}(x)=e^{2\sigma(x)}\,g(x)=\phi(x)\,g(x)$ is also smooth and positive-definite on $\mathcal{M}$. .
\end{theorem}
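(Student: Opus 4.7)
The plan is to decompose the proof into two parts: (i) invoking the elliptic PDE machinery from \cref{lemma1} to obtain a unique smooth solution $\sigma$, and (ii) propagating smoothness and strict positivity through elementary operations to conclude the stated properties of $\phi$ and $g^*$. In other words, the theorem is essentially a geometric packaging of \cref{lemma1}, so the proof should be short and almost entirely bookkeeping.

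For part (i), I would apply \cref{lemma1} with the given $H \in C^\infty(\mathcal{M})$ and $\upsilon>0$, yielding the unique $\sigma \in C^\infty(\mathcal{M})$ solving $(-\Delta_g+\upsilon)\sigma = H$. The underlying argument, already reproduced in the proof of \cref{lemma1}, is to note that the bilinear form $a(u,v)=\int_{\mathcal{M}}(\langle\nabla u,\nabla v\rangle_g+\upsilon uv)\,d\mu_g$ is continuous and, because $\upsilon>0$, coercive on $H^1(\mathcal{M})$; Lax--Milgram then produces a unique weak solution, and classical interior elliptic regularity (or Weyl's lemma applied to the strongly elliptic operator $-\Delta_g+\upsilon$) upgrades it to a smooth solution on the closed manifold.

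For part (ii), the smoothness of $\phi$ follows because $t\mapsto e^{2t}$ is a smooth map of real variables and $\sigma \in C^\infty(\mathcal{M})$, so the composition $\phi=e^{2\sigma}\in C^\infty(\mathcal{M})$; strict positivity of the exponential gives $\phi(x)>0$ for all $x\in\mathcal{M}$. To verify that $g^*$ inherits these properties, I would work in an arbitrary chart and observe that $g^*_{ij}(x)=\phi(x)\,g_{ij}(x)$ is a pointwise product of smooth functions, hence $g^*$ is smooth as a $(0,2)$-tensor field; changes of chart preserve this. Positive-definiteness is immediate at every $p\in\mathcal{M}$ and nonzero $X\in T_p\mathcal{M}$, since $g^*_p(X,X)=\phi(p)\,g_p(X,X)>0$ as a product of two strictly positive quantities.

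The main obstacle in this proof is essentially absent: the nontrivial analytic content, namely existence, uniqueness, and $C^\infty$ regularity of $\sigma$, is already delivered by \cref{lemma1}, and the remaining work is algebraic. The one subtlety worth flagging is an expository one: the operator named $\Omega_g$ in the hypothesis should be read as the concrete operator $-\Delta_g+\upsilon$ that appears in the displayed equation, so that \cref{lemma1} applies verbatim; if the statement is meant to cover a more general second-order strongly elliptic $\Omega_g$, one would need to note that the same Lax--Milgram plus elliptic regularity argument carries over under the standard coercivity/smoothness hypotheses on its coefficients, but the conclusions about $\phi$ and $g^*$ are unchanged.
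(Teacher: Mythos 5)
Your proposal is correct and follows essentially the same route as the paper: standard elliptic theory (Lax--Milgram coercivity from $\upsilon>0$ plus elliptic regularity, which is exactly the content of \cref{lemma1}) delivers the unique smooth $\sigma$, and the claims about $\phi=e^{2\sigma}$ and $g^*=\phi\,g$ follow by elementary composition and pointwise positivity, just as in the paper's appendix. If anything, your version is slightly tighter, since you retain the zeroth-order term $\upsilon\sigma$ throughout (the paper's proof momentarily reduces to $\Delta_g\sigma=H$, which on a closed manifold would need a compatibility condition), and your remark about reading $\Omega_g$ as $-\Delta_g+\upsilon$ matches the paper's own identification of the operator in divergence form.
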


\subsection{Conformal Geodesic Distance under the Sanitized Metric}
\label{section3.2}

Using the sanitized conformal factor, we define the conformally transformed metric $g^*(x)=\phi(x)g(x).$
Since $(\mathcal{M},g)$ is a closed Riemannian manifold, it is complete, and the Hopf--Rinow theorem guarantees the existence of length-minimizing geodesics between any two points on $\mathcal{M}$ \cite[Theorem~6.19]{lee2018introduction}; see also \cite{reimherr2021differential}.

Let $\{x^1,\ldots,x^N\}$ be local coordinates, and let $\gamma:[a,b]\to\mathcal{M}$ be a differentiable curve with local representation $\gamma(t)=\bigl(\gamma^1(t),\ldots,\gamma^N(t)\bigr).$
Its length under the conformal metric $g^*$ is
\[
L_{g^*}(\gamma)=\int_a^b \sqrt{g^*_{\gamma(t)}(\dot{\gamma}(t),\dot{\gamma}(t))}\,dt.
\]
Using $g^*(x)=\phi(x)g(x)$, we obtain
\begin{equation}
\label{eq1}
    L_{g^*}(\gamma)=\int_a^b \sqrt{\phi(\gamma(t))}\sqrt{g_{\gamma(t)}(\dot{\gamma}(t),\dot{\gamma}(t))}\,dt.
\end{equation}
Accordingly, the geodesic distance induced by $g^*$ is
\begin{equation}
\label{eq2}
    \rho_{g^*}(x,y)=\inf_{\substack{\gamma:[a,b]\to\mathcal{M}\\ \gamma(a)=x,\gamma(b)=y}} L_{g^*}(\gamma),
\end{equation}
and substituting \cref{eq1} into \cref{eq2} gives
\begin{equation}
\label{eq3}
    \rho_{g^*}(x,y)=\inf_{\substack{\gamma:[a,b]\to\mathcal{M}\\ \gamma(a)=x,\gamma(b)=y}}
    \int_a^b \sqrt{\phi(\gamma(t))}\sqrt{g_{\gamma(t)}(\dot{\gamma}(t),\dot{\gamma}(t))}\,dt.
\end{equation}

In general, $\rho_{g^*}(x,y)$ does not admit a closed-form expression in terms of $\rho_g(x,y)$ unless $\phi$ is constant, since the geodesics of $g^*$ and $g$ need not coincide \cite{jost2008riemannian}. For our analysis, however, an explicit formula is not required. It is sufficient to use the uniform bounds on $\phi$ established in Proposition \ref{prop:conformal_factor_properties}.

\begin{lemma}[Bounds for the Conformal Geodesic Distance]
\label{lemma_Bounds for the Conformal Geodesic Distance}
There exist constants $\phi_{\min}$ and $\phi_{\max} = 1$ such that, for all $x\in\mathcal{M}$, $0<\phi_{\min}\le \phi(x)\le 1,$ with Eq. \ref{eq:phi_bounds_combined}.
\end{lemma}

These bounds immediately yield a comparison between curve lengths under $g$ and $g^*$. From \cref{eq1}, for any differentiable curve $\gamma:[a,b]\to\mathcal{M}$,
$
\sqrt{\phi_{\min}}\,L_g(\gamma)\le L_{g^*}(\gamma)\le L_g(\gamma),
$
Taking the infimum over all curves connecting $x$ and $y$ gives the corresponding distance comparison.

\begin{corollary}
\label{corollary1}
For every pair of points $x,y\in\mathcal{M}$, the geodesic distances under $g$ and $g^*$ satisfy
\begin{equation}
\label{eq4}
    \sqrt{\phi_{\min}}\,\rho_g(x,y)\le \rho_{g^*}(x,y)\le \rho_g(x,y).
\end{equation}
\end{corollary}

The proofs of Lemma \ref{lemma_Bounds for the Conformal Geodesic Distance} and Corollary \ref{corollary1} are provided in Appendix E. The key point is that these bounds allow us to control conformal geodesic distances without explicitly solving the geodesic equations under $g^*$. This is sufficient for the downstream sensitivity and distance analysis. 

\subsection{Conformal Volume Transformation under the Sanitized Metric}
\label{section3.3}

The conformal transformation also changes the Riemannian volume measure. Standard results in Riemannian geometry characterize how volume elements transform under conformal changes of the metric \cite{chavel1995riemannian,schoen1984conformal,topping2006lectures}. In local coordinates $\{x^1,\dots,x^N\}$, the volume element associated with $g$ is
\[
d\mu_g(x)=\sqrt{\det(g_{ij}(x))}\,dx^1\wedge\cdots\wedge dx^N,
\]
where $[g_{ij}(x)]_{i,j=1}^N$ is the metric tensor matrix.

Under the conformal metric
$
g^*(x)=\phi(x)g(x),
$
the local metric matrix becomes
$
[g^*_{ij}(x)]_{i,j=1}^N=\phi(x)[g_{ij}(x)]_{i,j=1}^N.
$
Since this is an $N\times N$ matrix, its determinant scales as
$
\det(g^*_{ij}(x))=\phi(x)^N\det(g_{ij}(x)).
$
Therefore, the corresponding volume element transforms as
\begin{equation}
\label{eq5}
\begin{aligned}
d\mu_{g^*}(x)
&=\sqrt{\det(g^*_{ij}(x))}\,dx^1\wedge\cdots\wedge dx^N\\
&=\sqrt{\phi(x)^N\det(g_{ij}(x))}\,dx^1\wedge\cdots\wedge dx^N\\
&=\phi(x)^{\frac{N}{2}}\sqrt{\det(g_{ij}(x))}\,dx^1\wedge\cdots\wedge dx^N\\
&=\phi(x)^{\frac{N}{2}}\,d\mu_g(x).
\end{aligned}
\end{equation}

Thus, the sanitized conformal factor rescales the original volume measure pointwise by $\phi(x)^{N/2}$. This identity will be used later when relating privatized density, distance, and geometric measure under the transformed metric.

\section{Differentially Private Fr\'echet Mean Release}
\label{section4}
Existing works have demonstrated that achieving $\varepsilon$-DP is feasible in measure spaces equipped with the Borel $\sigma$-algebra \cite{wasserman2010statistical, pmlr-v97-awan19a}. In the second stage, we propose the final DP mechanism, denoted as $\mathcal{A}_{\text{conf}}$, defined as the process of sampling from the Laplace kernel-based probability distribution (Eq. \ref{eq:conformal_kernel}), which is constructed using the conformal factor $\sigma$ and the conformal metric $g^*$ derived from the sanitized density $\tilde{f}_{\text{data}}$.

\subsection{Conformal Probability Measure}
\label{sec:conformal_measure}
Consider a dataset $D=\{x_1,\dots,x_n\}$ on a compact Riemannian manifold $(\mathcal{M},g)$. Let $\eta_{np}(D)$ denote a non-private statistical summary of the dataset (in this paper, we mainly focus on Fr\'echet Mean as the statistical summary that we want to release), and let $\rho_g$ denote the geodesic distance induced by the original metric $g$. Our goal is to construct a privatized counterpart $\eta_p(D)$ while preserving the geometric structure of the underlying manifold.

As a starting point, we recall the global sensitivity bound for the Fr\'echet mean under the original metric, following Reimherr et al.~\cite{reimherr2021differential}.

\begin{definition}
\label{def2}
(Reimherr et al.~\cite{reimherr2021differential})
Assume the data $D \subseteq B_r(m_0)$ for some $m_0$, where
\[
r<r^* =\frac{1}{2}\min\left\{\operatorname{inj}\mathcal{M}, \frac{\pi}{2}\kappa^{-1/2}\right\},
\]
and $\kappa>0$ is an upper bound on the sectional curvatures of $\mathcal{M}$. Consider two adjacent datasets
$
D=\{x_1,\ldots,x_{N-1},x_N\}
\,\text{and}\,
D'=\{x_1,\ldots,x_{N-1},x_N'\}.
$
If $\bar{x}$ and $\bar{x}'$ are the corresponding Fr\'echet means of $D$ and $D'$, then
{\footnotesize
\begin{equation*}
\rho_g(\bar{x},\bar{x}')
\le
\frac{2r(2-h(r,\kappa))}{N\,h(r,\kappa)},
\qquad
h(r,\kappa)=
\begin{cases}
2r\sqrt{\kappa}\cot(\sqrt{\kappa}\,2r), & \kappa>0,\\
1, & \kappa\le 0.
\end{cases}
\end{equation*}
}
\end{definition}

We denote this global sensitivity bound by
\begin{equation}
\label{eq:sensitivity}
    \Delta = \frac{2r(2-h(r,\kappa))}{N\,h(r,\kappa)}.
\end{equation}

This quantity controls the largest possible displacement of the Fr\'echet mean under a single-point substitution in $(\mathcal{M},g)$ \cite{reimherr2021differential}. Standard manifold DP mechanisms calibrate noise directly from this bound under the original metric. In contrast, our goal is to construct a \emph{density-aware} perturbation by calibrating the distribution in the conformally transformed metrics $(\mathcal{M},g^*)$. We first compute a privatized Fr\'echet mean $\eta_{p,\mathrm{conf}}(D)$ under the conformal metric and then transform it back to the original manifold to obtain the final private estimate $\eta_p(D)$. Importantly, the proposed method does not implement dynamic noise calibration at the mechanism level. Rather, it reshapes the geometry of the manifold by modifying its metric structure through the conformal factor. This geometric transformation changes how perturbations are distributed in the transformed space, and after pulling the result back to the original manifold, yields an effect equivalent to density-adaptive noise injection.

Let $\eta_{np}(D) \in \mathcal{M}$ denote the exact Fr\'echet mean of a dataset $D$. On the original Riemannian manifold $(\mathcal{M}, g)$, we assume that the global sensitivity of the Fr\'echet mean is bounded by a known scalar $\Delta$. Following Reimherr et al.~\cite{reimherr2021differential}, for any adjacent datasets $D \sim D'$ differing in one record, $\rho_g\bigl(\eta_{np}(D), \eta_{np}(D')\bigr) \le \Delta.$

To incorporate density information into the perturbation mechanism, we move from the original geometry $(\mathcal{M}, g)$ to the conformally transformed geometry $(\mathcal{M}, g^*)$ constructed in Stage 1. We then define a Laplace-type kernel centered at the Fr\'echet mean $\eta \in \mathcal{M}$ using the conformal geodesic distance $\rho_{g^*}$:
\begin{equation}
\label{eq:conformal_kernel}
    \widetilde{K}^*(z \mid \eta) = \exp\left\{-\lambda^* \rho_{g^*}(\eta, z)\right\},  z \in \mathcal{M},
\end{equation}
where $\lambda^* > 0$ is a global rate parameter under the conformal metric. Since $\mathcal{M}$ is compact, $\rho_{g^*}(\eta, \cdot)$ is continuous and bounded, so the kernel is integrable over $\mathcal{M}$. This yields the normalization constant
\begin{equation}
\label{eq:norm_constant}
    C(\eta, \lambda^*) = \int_{\mathcal{M}} \exp\left\{-\lambda^* \rho_{g^*}(\eta, u)\right\} \, d\mu_{g^*}(u).
\end{equation}

Normalizing \cref{eq:conformal_kernel} gives the sampling density of the mechanism:
\begin{equation}
\label{eq:conformal_pdf}
    \mathbb{P}^*(z \mid \eta) = \frac{\exp\left\{-\lambda^* \rho_{g^*}(\eta, z)\right\}}{C(\eta, \lambda^*)}, \qquad z \in \mathcal{M}.
\end{equation}

\begin{proof}
    To derive Eq. \ref{eq:conformal_pdf}, we provide detailed proofs in the Appendix F.
\end{proof}

\paragraph{Density-Awareness via Intrinsic Geometry.}
Crucially, this distribution is density-aware without explicitly varying the noise scale across space. The adaptation is induced by the conformal geometry itself: because $g^*=\phi(x)g$ with $0<\phi(x)\le 1$, the conformal distance $\rho_{g^*}$ reflects the sanitized density profile. As a result, the same global rate parameter $\lambda^*$ yields sharper concentration in denser regions and broader dispersion in sparser regions when viewed on the original manifold.

\subsection{Conformal Sensitivity}
\label{subsec:conformal_sensitivity}
The sensitivity of the Fr\'echet mean depends on the dataset radius $r$ and an upper bound $\kappa$ on the sectional curvature, see Eq. \ref{eq:sensitivity}. Under the conformal transformation $g^*=e^{2\sigma}g$, both the distance structure and the curvature are modified. By leveraging the Helmholtz--Poisson equation defining $\sigma$, we can derive an analytical sensitivity bound valid for general $d$-dimensional Riemannian manifolds. Although the curvature analysis differs between the cases $d=2$ and $d\ge 3$, the same leading-order sensitivity reduction is preserved in both settings.

\begin{theorem}[Density-Aware Conformal Sensitivity]
\label{thm:conformal_sensitivity}
Let $(\mathcal{M}, g)$ be a $d$-dimensional Riemannian manifold with $d\ge 2$. Suppose the dataset $D$ is contained in a geodesic ball $B_r(m_0)$, the conformal factor $\phi(x)=e^{2\sigma(x)}$ is determined by
Eq. \ref{eq:poisson}. Let the sparsity contraction factor be as follows:
\[
C_{\tau,\theta}=\theta+\sqrt{\tau}(1-\theta)<1,
\]
where $\tau$ bounds the sparse region $\mathcal{W}$ through $\phi\le \tau$, and $\theta\in[0,1)$ denotes the maximal path proportion lying in the dense core $\mathcal{C}$. Let $\kappa^*$ denote an upper bound on the sectional curvature of the conformal metric $g^*$. In particular,
\[
\kappa^*=\frac{1}{\phi_{\min }} \begin{cases}\kappa+\tilde{f}_{\max }-\tilde{f}_{\min }, & d=2 \\ \kappa+2 C_2+C_1, & d \geq 3\end{cases}
\]
where $C_1$ and $C_2$ are deterministic elliptic regularity constants controlling the gradient and Hessian of $\sigma$ \cite{gilbarg1998elliptic}. Then, provided the conformal radius remains in the strongly convex regime $C_{\tau,\theta}r < \frac{\pi}{2}(\kappa^*)^{-1/2}$, the conformal sensitivity satisfies the upper bound as follows:
\begin{equation}
\label{eq:merged_conformal_sensitivity}
\Delta^*
\le
C_{\tau,\theta}\left(\frac{2r}{N}\right)
+
C_{\tau,\theta}^3
\left(\frac{16r^3}{3n}\right)\kappa^*
+
C_{\tau,\theta}^5
\left(\frac{4\mathcal{O}r^5}{3n}\right)(\kappa^*)^2,
\end{equation}
where $\mathcal{O} = \sup_{z \in (0, z_{\max})} \frac{d^4}{dz^4}\left(2\frac{\tan z}{z}-1\right)$ with $z_{\max} = 2C_{\tau,\theta}r\sqrt{\kappa^*}$ bounds the explicit truncation error.
\end{theorem}

\begin{proof}
    We provide detailed proof in Appendix G.
\end{proof}

\subsection{Privacy Calibration}
\label{subsec:privacy_calibration}

\paragraph{Privatized Fr\'echet Mean under the Conformal Metric.}
\label{section5.1}
We use the Fr\'echet (or Karcher) mean as the target statistic for the \emph{Conformal}-DP mechanism, as it is a standard summary in Riemannian statistics \cite{karcher1977riemannian,kendall1990probability}. For a dataset $D=\{x_1,\dots,x_N\}\in\mathcal{M}^N$, the Fr\'echet energy on the original manifold $(\mathcal{M},g)$ is
\begin{equation}
    F(x)=\frac{1}{2N}\sum_{i=1}^N \rho_g^2(x,x_i).
\end{equation}
In the Euclidean case $(\mathcal{M},g)=\bigl(\mathbb{R}^d,\langle\cdot,\cdot\rangle\bigr)$, this reduces to the classical least-squares objective $F_{\mathbb{R}^d}(x)=\frac{1}{2N}\sum_{i=1}^N \|x-x_i\|_2^2.$
The Fr\'echet mean $\eta(D)$ is defined as the unique minimizer of this energy: $\eta(D)=\arg\min_{x\in\mathcal{M}} F(x).$
Under the standard small-ball and curvature conditions of Karcher's theorem \cite{karcher1977riemannian}, this minimizer exists and is unique. At the optimum, the Riemannian gradient vanishes:
\begin{equation}
    \nabla F(\eta(D))
    =
    -\frac{1}{N}\sum_{i=1}^N \log_{\eta(D)}(x_i)
    =
    0,
\end{equation}
where $\log_x(\cdot)$ denotes the Riemannian logarithmic map at $x$.

To adapt this construction to the conformal mechanism, we transform the original metric $g$ with the conformal metric $g^*= \phi(x)g=e^{2\sigma(x)}g.$
For a dataset $D=\{x_1,\dots,x_N\}$, the corresponding conformal Fr\'echet energy is
\begin{equation}
\label{eq:conformal_energy}
    F^*(x)=\frac{1}{2N}\sum_{i=1}^N \rho_{g^*}^2(x,x_i).
\end{equation}

\paragraph{Target Statistic: Conformal Fr\'echet Mean.}
Rather than privatizing the Fr\'echet mean under the original metric $g$, we define the target statistic directly with respect to the conformal metric $g^*$. For a dataset $D\in\mathcal{M}^N$, the \emph{conformal Fr\'echet mean} is
\begin{equation}
\label{eq:conformal_frechet_mean}
    \eta_{g^*}(D):=\arg\min_{x\in\mathcal{M}} F^*(x).
\end{equation}
The corresponding first-order optimality condition is
$
\sum_{i=1}^N \log^{(g^*)}_{\eta_{g^*}(D)}(x_i)=0,
$
where $\log^{(g^*)}$ is the logarithmic map under the conformal metric. Because the metric has been transformed to $g^*$, both the geodesic distance $\rho_{g^*}$ and the associated logarithmic maps differ from those under the original geometry.

To calibrate the noise scale $\lambda^*$ for differential privacy, we bound the sensitivity of this target statistic under $g^*$. We define the \emph{conformal sensitivity} as
\begin{equation}
\label{eq:conformal_sensitivity}
    \Delta^* =\sup_{D\sim D'} \rho_{g^*}\bigl(\eta_{g^*}(D),\eta_{g^*}(D')\bigr),
\end{equation}
where $D\sim D'$ denotes adjacent datasets. A direct metric comparison gives the coarse bound $\Delta^*\le \sqrt{\phi_{\max}}\,\Delta$, but this does not exploit the density-aware geometry induced by the conformal factor. Instead, using the sensitivity analysis from the previous subsection, we adopt the sharper bound
\begin{equation}
\label{eq:tight_sensitivity_bound}
    \Delta^* \le \mathcal{S}^*
    :=
    C_{\tau,\theta}\left(\frac{2r}{N}\right)+\mathcal{K}_{\mathrm{curv}},
\end{equation}
where $C_{\tau,\theta}$ is the sparsity contraction factor defined in Theorem \ref{thm:conformal_sensitivity}, $r$ is the support radius, and $\mathcal{K}_{\mathrm{curv}}$ denotes the corresponding curvature correction term. We use $\mathcal{S}^*$ as the analytical upper bound for privacy calibration.

We now give the privacy guarantee of the conformal mechanism. The conformal factor $\phi$, the metric $g^*$, and the volume measure $\mu_{g^*}$ are pre-calculated in Stage 1 and treated as fixed in Stage 2. Accordingly, $\mu_{g^*}$ serves as the reference measure for the output density. To ensure that the normalization constant $C(\cdot,\lambda^*)$ is finite, we assume that $\mathcal{M}$ is compact or, more generally, that it has bounded volume growth under $g^*$.

\begin{theorem}[Conformal-DP Guarantee]
\label{thm:main_conformal_dp}
Let $\mathcal{A}_{\text{conf}}$ be the randomized mechanism that releases an output $z \in \mathcal{M}$ sampled from the conditional probability measure $\mathbb{P}^*(\cdot \mid \eta_{g^*}(D))$ governed by the density $p^*(z \mid \eta_{g^*}(D)) = \frac{1}{C(\eta_{g^*}(D), \lambda^*)} \exp\bigl\{-\lambda^* \rho_{g^*}(\eta_{g^*}(D), z)\bigr\}$ with respect to the reference measure $\mu_{g^*}$. To ensure the privacy loss does not exceed the budget given the analytical sensitivity bound $\Delta^* \le \mathcal{S}^*$, the rate parameter is calibrated as:
\begin{equation}
\label{eq:lambda_calibration}
    \lambda^* = \frac{\varepsilon_{\text{conf}}}{2\mathcal{S}^*} \le \frac{\varepsilon_{\text{conf}}}{2\Delta^*}.
\end{equation}
With this calibration, $\mathcal{A}_{\text{conf}}$ satisfies pure $\varepsilon_{\text{conf}}$-differential privacy under the conformal metric. That is, for all adjacent datasets $D \sim D'$ and any measurable set $S \subseteq \mathcal{M}$, the output distribution satisfies the measure-theoretic privacy bound:
\begin{equation}
\label{eq:formal_dp_bound}
    \frac{\int_S p^*\bigl(z \mid \eta_{g^*}(D)\bigr) \, d\mu_{g^*}(z)}{\int_S p^*\bigl(z \mid \eta_{g^*}(D')\bigr) \, d\mu_{g^*}(z)}  \le e^{\varepsilon_{\text{conf}}}. 
\end{equation}
\end{theorem}

\begin{proof}
    We give detailed proof in Appendix H. 
\end{proof}

\begin{remark}
\label{rem:topology_pullback}
Although the mechanism is defined under the conformal metric $g^*$, the output does not need to be mapped back to the original manifold. The conformal transformation $g \mapsto g^*=\phi g$ changes only the metric tensor on $\mathcal{M}$ and leaves the underlying point set and smooth structure unchanged. Thus, it alters geometry but not topology. Consequently, a privatized output $z$ sampled under the conformal density is already a valid point on $\mathcal{M}$. No additional pull-back operation is required. The original metric $g$ is used only for downstream evaluation, such as computing the utility error $\rho_g(z,\eta(D))$.
\end{remark}

\section{Theoretical Guarantees of Privacy and Utility}
\label{section5}

\subsection{Composition Analysis}
Since we have two stages of sanitization, we employ the composition theorem \cite{kairouz2015composition} to analyze the privacy budgets.

\begin{theorem}[Sequential Composition for Conformal-DP]
\label{thm:composition}
Let $\mathcal{A}_{\mathrm{anchor}}$ be an $\varepsilon_\phi$-differentially private mechanism that outputs a noisy count vector $\tilde{c}\in\mathbb{R}^J$. For each realized $\tilde{c}$, let $\mathcal{A}_{\mathrm{conf}}(\cdot \mid \tilde{c})$ denote the second-stage conformal mechanism defined under the conformal metric $g^*$ and reference measure $\mu_{g^*}$ induced by $\tilde{c}$. Specifically, $\mathcal{A}_{\mathrm{conf}}$ outputs $z\in\mathcal{M}$ according to the conditional density
\[
p^*\bigl(z \mid \eta_{g^*}(D), \tilde{c}\bigr)
\propto
\exp\bigl\{-\lambda^* \rho_{g^*}(\eta_{g^*}(D), z)\bigr\} \, w.r.t. \mu_{g^*}.
\]
Assume the rate parameter is calibrated as
\[
\lambda^*=\frac{\varepsilon_{\mathrm{conf}}}{2\mathcal{S}^*}
\le
\frac{\varepsilon_{\mathrm{conf}}}{2\Delta^*}.
\]
Let the two-stage mechanism be as follows:
\[
\mathcal{A}_{\mathrm{total}}(D):=(\tilde{c},z),
\qquad
\tilde{c}\sim \mathcal{A}_{\mathrm{anchor}}(D),
\quad
z\sim \mathcal{A}_{\mathrm{conf}}(\cdot\mid \tilde{c}).
\]
Then $\mathcal{A}_{\mathrm{total}}$ satisfies $(\varepsilon_\phi+\varepsilon_{\mathrm{conf}})$-differential privacy.
\end{theorem}

\begin{proof}
    We provide the detailed proof in Appendix I.
\end{proof}

\subsection{Utility Analysis} 
\label{subsec:utility}

\paragraph{Utility of the Sanitized Conformal Factor.}
\label{para:stage1_utility}
The first stage releases a sanitized conformal factor that deforms the underlying Riemannian geometry. We first establish the expected squared error bound for the Stage-1 mechanism in Theorem. \ref{thm:phi_error_bound}.

\begin{theorem}[Expected squared error bound for the conformal factor]
\label{thm:phi_error_bound}
Under the sanitization in Paragraph \ref{paragraph:adj}, given a dataset of size $N$, a Stage-1 privacy budget $\varepsilon_\phi$, and a regularization parameter $\upsilon$, the expected squared error between the privatized conformal factor $\phi_p$ and its non-private counterpart $\phi_{np}$ satisfies:
\begin{equation}
\label{eq:phi_error_bound}
\mathbb{E}\!\left[\|\phi_p-\phi_{np}\|^2_{L^\infty(\mathcal{M})}\right]
=
\mathcal{O}\!\left( \frac{C_K^2}{\upsilon^2 N^2\varepsilon_\phi^2}(\ln J+1)^2 \right),
\end{equation}
where $C_K := \sup_{x\in\mathcal{M}} \sum_{j=1}^J \bar{K}_{h,j}(x)$ controls the maximal overlap of the manifold kernels.
\end{theorem}

\begin{proof}
    We give a detailed proof in Appendix J.
\end{proof}

\paragraph{Utility of the Conformal Fr\'echet Mean.}
\label{para:conformal_frechrt_utility}
In the second stage, the mechanism releases a privatized Fr\'echet mean sampled under the conformal metric $g^*$. Utility is evaluated under the original metric $g$ through the expected squared geodesic error. Using the triangle inequality theorem, the total squared error decomposes into two terms: a deterministic geometric bias induced by replacing $g$ with $g^*$, and a stochastic sampling variance induced by the conformal mechanism. Before stating the utility guarantee, we fix the notation. Let $r^*$ be the radius of the geodesic ball under $g^*$. Denote by $\eta_{np}(D)$ the non-private Fr\'echet mean under $g$, and by $\eta_{g^*}(D)$ the Fr\'echet mean under $g^*$.
The Stage-2 mechanism samples $z\sim \mathbb{P}^*(\cdot\mid \eta_{g^*}(D))$ with privacy budget $\varepsilon_{\mathrm{conf}}$. We write $C_{\mathrm{stab}}(r,\kappa)=\frac{1}{h(r,\kappa)}$ for the Fr\'echet mean stability constant from Definition~\ref{def2}. Thus, we give the following theorem and its proof:

\begin{theorem}[Expected squared utility bound for the privatized mean]
\label{thm:stage2_utility}
Suppose $D$ is contained in a $g$-geodesic ball of radius $r$, and suppose the
Stage-2 conformal sensitivity satisfies $\Delta^*=\mathcal{O}\!\left(\frac{r^*}{N}\right)$.
Then the Stage-2 output $z$ satisfies
\[
    \mathbb{E}\!\left[
        \rho_g\bigl(z,\eta_{np}(D)\bigr)^2
    \right]
    \leq
    \mathcal{O}\!\left(
        \frac{C_{\mathrm{stab}}(r,\kappa)^2}{N^2\varepsilon_\phi^2}
        +
        \frac{d^2 r^2}{\phi_{\min}N^2\varepsilon_{\mathrm{conf}}^2}
    \right).
\]
\end{theorem}
\begin{proof}
A detailed proof is deferred to Appendix K.
\end{proof}

\begin{remark}
If the privacy budget is split proportionally, i.e.,
$
\varepsilon_\phi, \varepsilon_{\mathrm{conf}} = \Theta(\varepsilon),
$
then \cref{thm:stage2_utility} implies the asymptotic bound
\begin{equation*}
\begin{aligned}
\mathbb{E}\!\left[\rho_g\bigl(z,\eta_{np}(D)\bigr)^2\right]
=
\mathcal{O}\!\left( \frac{d^2 \mathbb{E}[(r^*)^2]}{\phi_{\min} N^2 \varepsilon^2} \right)
\le
\mathcal{O}\!\left( \frac{d^2 r^2}{\phi_{\min} N^2 \varepsilon^2} \right).
\end{aligned}
\end{equation*}
When $\phi_{\min}$ is treated as a geometric constant, the worst-case relaxation matching the $r^2$ bound recovers the optimal rate achieved by standard metric mechanisms on bounded spaces. More importantly, the tighter bound explicitly captures the advantage of the conformal construction. Standard mechanisms must calibrate noise using the worst-case radius $r$ under the original geometry. In contrast, under the sanitized conformal metric, high-density regions are contracted, so the relevant local radius becomes $r^* < r$. Consequently, the effective sensitivity scales as $\Delta^*=\mathcal{O}(r^*/N)$ rather than $\mathcal{O}(r/N)$. This leads to a smaller sampling variance under the same global privacy budget and formally captures the density-aware benefit of the conformal mechanism.
\end{remark}

\subsection{Privacy Budget Allocation and Optimal Trade-offs}
\label{subsec:budget_allocation}

As per Theorem \ref{thm:composition}, the total privacy budget satisfies $\varepsilon_{\mathrm{total}}=\varepsilon_\phi+\varepsilon_{\mathrm{conf}}.$
A natural design question is how to allocate this budget between the anchor mechanism $\mathcal{A}_{\mathrm{phi}}$ and the conformal mechanism $\mathcal{A}_{\mathrm{conf}}$ so as to minimize the expected error of the released point $z\in\mathcal{M}$.

We quantify utility loss by the expected geodesic error between the private output $z$ and the non-private Fr\'echet mean $\eta_{np}(D)$. Using the decomposition established in Theorem \ref{thm:stage2_utility}, this error separates into two components under the original metric $g$:
{\small
\begin{equation}
\label{eq:error_decomposition}
    \mathbb{E}\bigl[\rho_g(z,\eta_{np}(D))\bigr]
    \le
    \underbrace{\mathbb{E}\bigl[\rho_g(\eta_{g^*}(D),\eta_{np}(D))\bigr]}_{\mathbb{E}_{\mathrm{phi}}}
    +
    \underbrace{\mathbb{E}\bigl[\rho_g(z,\eta_{g^*}(D))\bigr]}_{\mathbb{E}_{\mathrm{conf}}}
\end{equation}}
The first term is the geometric bias induced by the sanitized conformal factor, while the second term is the sampling error of the conformal mechanism.

\paragraph{Geometry distortion error \texorpdfstring{$\mathbb{E}_{\mathrm{phi}}$}{E_anchor}.}
The first term reflects the displacement of the Fr\'echet mean caused by the perturbation of the conformal metric, which originates from the Laplace noise added to the anchor counts. From Theorem \ref{thm:stage2_utility}, we obtain
\begin{equation}
\label{eq:geom_error}
    \mathbb{E}_{\mathrm{phi}}(\varepsilon_\phi)
    \le
    \frac{8 C_{\mathrm{stab}}(r,\kappa) C_K (\ln J+1)}{\upsilon N\varepsilon_\phi}
    =
    \frac{C_\phi}{\varepsilon_\phi},
\end{equation}
where
$
C_K = \sup_{x\in\mathcal{M}} \sum_{j=1}^J \bar{K}_{h,j}(x)
$
is the maximal kernel overlap. Accordingly, 
$
C_\phi
=
\frac{8 C_{\mathrm{stab}}(r,\kappa) C_K (\ln J+1)}{\upsilon N}.
$

\paragraph{Conformal sampling error \texorpdfstring{$\mathbb{E}_{\mathrm{conf}}$}{E_conf}.}
The second term reflects the expected displacement introduced by sampling $z \sim \mathbb{P}^*(\cdot\mid \eta_{g^*}(D))$ under the conformal metric and then evaluating the result under the original metric. Using the Stage 2 utility bound, we write
\begin{equation}
\label{eq:samp_error}
    \mathrm{error}(\varepsilon_{\mathrm{conf}})
    \le
    \frac{2 d \Delta^*}{\varepsilon_{\mathrm{conf}}\sqrt{\phi_{\min}}}
    =
    \frac{C_{\mathrm{conf}}}{\varepsilon_{\mathrm{conf}}},
\end{equation}
where $C_{\mathrm{conf}} = \frac{2 d \Delta^*}{\sqrt{\phi_{\min}}}.$
Combining \cref{eq:geom_error,eq:samp_error}, the upper-bound the total expected error by
\[
\mathrm{error}_{\mathrm{total}}(\varepsilon_\phi,\varepsilon_{\mathrm{conf}})
=
\frac{C_\phi}{\varepsilon_\phi}
+
\frac{C_{\mathrm{conf}}}{\varepsilon_{\mathrm{conf}}},
\]

\begin{proposition}[Optimal budget allocation]
\label{prop:optimal_allocation}
Given a fixed total privacy budget $\varepsilon_{\mathrm{total}}>0$, let $(\varepsilon_\phi^{opti},\varepsilon_{\mathrm{conf}}^{opti})$ let the allocation that minimizes the upper bound be:
\[
\mathcal{E}_{\mathrm{total}}(\varepsilon_\phi,\varepsilon_{\mathrm{conf}})
=
\frac{C_\phi}{\varepsilon_\phi}
+
\frac{C_{\mathrm{conf}}}{\varepsilon_{\mathrm{conf}}}
\]
Then
{\footnotesize
\begin{align}
\label{eq:optimal_budgets}
\varepsilon_\phi^{opti}
=
\varepsilon_{\mathrm{total}}
\left(
\frac{\sqrt{C_\phi}}{\sqrt{C_\phi}+\sqrt{C_{\mathrm{conf}}}}
\right),
\varepsilon_{\mathrm{conf}}^{opti}
=
\varepsilon_{\mathrm{total}}
\left(
\frac{\sqrt{C_{\mathrm{conf}}}}{\sqrt{C_\phi}+\sqrt{C_{\mathrm{conf}}}}
\right).
\end{align}}
Moreover, the optimal allocation ratio is independent of the total budget $\varepsilon_{\mathrm{total}}$ and satisfies
\begin{equation}
\label{eq:optimal_ratio}
\frac{\varepsilon_\phi^{opti}}{\varepsilon_{\mathrm{conf}}^{opti}}
=
\sqrt{\frac{C_\phi}{C_{\mathrm{conf}}}}
=
\sqrt{
\frac{
4 C_{\mathrm{stab}}(r,\kappa)\, C_K\, (\ln J+1)\, \sqrt{\phi_{\min}}
}{
\upsilon\, N\, d\, \Delta^*
}
}.
\end{equation}
\end{proposition}

\begin{proof}
    We give detailed proof in Appendix L.
\end{proof}

\paragraph{Practical implications.}
First, the budget split is asymptotically stable with respect to the dataset size $N$. As noted in Theorem~\ref{thm:stage2_utility}, the conformal sensitivity $\Delta^*$ scales on the order of $1/N$. Since the Stage-1 error term also contains a factor of $1/N$, these dependencies partially cancel out in \cref{eq:optimal_ratio}. As a result, the optimal allocation does not degenerate as the sample size increases.

Second, the number of anchors affects the Stage-1 bound only through the logarithmic factor $(\ln J+1)$. Hence, increasing the number of anchors leads to only a mild increase in the Stage-1 privacy overhead. This is useful in practice, because a finer anchor discretization can capture local geometric variation more accurately without substantially degrading the overall privacy-utility tradeoff.

Third, the conformal geometry enters the allocation through $\phi_{\min}$, $d$, and $\Delta^*$. Smaller $\phi_{\min}$, larger intrinsic dimension $d$, or larger conformal sensitivity $\Delta^*$ all increase the cost of Stage~2 and therefore shift more budget toward the conformal sampling stage. Conversely, stronger regularization through $\upsilon$ or greater stability of the Fr\'echet mean under metric perturbations reduce the Stage-1 cost and allow a larger fraction of the budget to be allocated to Stage~2.

In summary, \cref{eq:optimal_ratio} provides a principled initialization rule for privacy budget allocation across different manifold geometries and sampling regimes.

\section{Algorithm Design}
\label{sec: section6}
In this section, we give a detailed algorithm design for our Conformal-DP mechanism in \cref{tab:walkthrough}. We also analyze the theoretical computational complexity for performing the Kernel Density Estimation process, solving PDEs on the manifold, and the MCMC sampling process.
\begin{table*}[t!]
    \centering
    \small
    \renewcommand{\arraystretch}{1.1}
    \caption{Algorithm design of the Conformal-DP mechanism (based on the descriptions in Section \ref{section3} and Section \ref{section4}).}
    \label{tab:walkthrough}
    \begin{tabular}{m{115pt}m{175pt}m{180pt}}
        \toprule
        \textbf{Theoretical Basis} & \textbf{Algorithm Design} & \textbf{Notes} \\
        \midrule

        \multicolumn{3}{l}{\textit{Stage 1: Differentially Private Conformal Factor Release (budget $\varepsilon_\phi$)}}\\
        \midrule
    
        \parbox[t]{115pt}{\vspace{2pt}Section \ref{section3.1}, \eqref{eq:sensitivity-constant}}
        & \parbox[t]{175pt}{\vspace{2pt}Map data to Voronoi cells $V_j$ via public anchors $\mathcal{P}=\{p_j\}$ to get counts:$c_j = \sum_{i=1}^N \mathbb{I}(x_i \in V_j).$}
        & \parbox[t]{180pt}{\vspace{2pt}Reduces the continuous query to the discrete count vector $c=(c_1,\dots,c_J)$, where adjacency implies $\|c(D)-c(D')\|_1\le \Delta_1=2$.} \\
    
        \parbox[t]{115pt}{\vspace{2pt}Section \ref{section3.1}, equations for $\tilde{c}_j$, \eqref{eq:manifold_adapted_kernel}, \eqref{eq:reconstructed_kde}}
        & \parbox[t]{175pt}{\vspace{2pt}Inject Laplace noise:
        $
        \tilde{c}_j = c_j + \mathrm{Lap}\left(\frac{\Delta_1}{\varepsilon_\phi}\right).
        $
        Truncate the noisy counts to $\tilde{c}_j^+ = \max(0, \tilde{c}_j)$, normalize the truncated counts into probability weights $w_j$, and reconstruct the sanitized density using the pointwise normalized manifold-adapted kernel:$\tilde{f}_{\mathrm{data}}(x) = \sum_{j=1}^J w_j \bar{K}_{h,j}(x).$}
        & \parbox[t]{180pt}{\vspace{2pt}Laplace perturbation on $\tilde{c}_j=c_j+\mathrm{Lap}(\Delta_1/\varepsilon_\phi)$ provides $\varepsilon_\phi$-DP; $\tilde{c}_j^+$, $w_j$, and $\tilde{f}_{\mathrm{data}}$ are protected by post-processing immunity.} \\
    
        \parbox[t]{115pt}{\vspace{2pt}Section \ref{subsec:eps_phi}, \eqref{eq:poisson}}
        & \parbox[t]{175pt}{\vspace{2pt}Solve the Helmholtz--Poisson equation using $\tilde{f}_{\mathrm{data}} - \tilde{f}_{\mathrm{max}}$:
        $
        (-\Delta_g+\upsilon)\,\sigma(x) = \tilde{f}_{\mathrm{data}}(x) - \tilde{f}_{\mathrm{max}}.
        $
        State $\phi(x) = e^{2\sigma(x)}$.}
        & \parbox[t]{180pt}{\vspace{2pt}The public parameter $\upsilon$ makes $(-\Delta_g+\upsilon)$ strictly positive; solving for $\sigma$ gives $\phi=e^{2\sigma}$ with $0<\phi_{\min}\le \phi(x)\le 1$.} \\
    
        \midrule
        \multicolumn{3}{l}{\textit{Stage 2: Conformal-DP Fr\'echet Mean Release on $\mathcal{M}, \, g^*=\phi\,g$ (budget $\varepsilon_{\text{conf}}$)}}\\
        \midrule
    
        \parbox[t]{115pt}{\vspace{2pt}Section \ref{subsec:privacy_calibration}, \eqref{eq:conformal_energy}, \eqref{eq:conformal_frechet_mean}}
        & \parbox[t]{175pt}{\vspace{2pt}Define the Conformal Fr\'echet mean by minimizing the conformal energy $F^*(x)$ under $g^*$:
        $
        F^*(x) = \frac{1}{2N}\sum_{i=1}^N \rho_{g^*}^2(x,x_i),
        $
        $
        \eta_{g^*}(D) = \arg\min_{x\in\mathcal{M}} F^*(x).
        $}
        & \parbox[t]{180pt}{\vspace{2pt}The target statistic is $\eta_{g^*}(D)=\arg\min_x F^*(x)$ under $g^*=\phi g$, so density-awareness is induced through $\rho_{g^*}$ rather than explicit spatial re-scaling of noise.} \\
    
        \parbox[t]{115pt}{\vspace{2pt}Sections \ref{subsec:conformal_sensitivity}, \ref{subsec:privacy_calibration}, \eqref{eq:conformal_sensitivity}, \eqref{eq:tight_sensitivity_bound}, \eqref{eq:lambda_calibration}}
        & \parbox[t]{175pt}{\vspace{2pt}State the bound $\Delta^* \le \mathcal{S}^*$:
        $
        \mathcal{S}^* = C_{\tau,\theta}\left(\frac{2r}{N}\right) + \mathcal{K}_{\mathrm{curv}},
        $
        and calibrate $\lambda^* = \frac{\varepsilon_{\text{conf}}}{2\mathcal{S}^*}$.}
        & \parbox[t]{180pt}{\vspace{2pt}The sensitivity bound $\Delta^*\le \mathcal{S}^*$ separates the leading radius term $C_{\tau,\theta}(2r/N)$ from the curvature correction $\mathcal{K}_{\mathrm{curv}}$.} \\
    
        \parbox[t]{115pt}{\vspace{2pt}Sections \ref{sec:conformal_measure}, \ref{subsec:privacy_calibration}, \eqref{eq:conformal_pdf}, \eqref{eq:formal_dp_bound}}
        & \parbox[t]{175pt}{\vspace{2pt}Release $Z \sim \mathbb{P}^*(\cdot \mid \eta_{g^*}(D), \lambda^*)$ proportional to:
        $
        \exp\left\{-\lambda^* \rho_{g^*}(\eta_{g^*}(D), z)\right\}.
        $}
        & \parbox[t]{180pt}{\vspace{2pt}Sampling from $\mathbb{P}^*(\cdot\mid \eta_{g^*}(D),\lambda^*)$ with $\lambda^*=\varepsilon_{\mathrm{conf}}/(2\mathcal{S}^*)$ gives pure $\varepsilon_{\mathrm{conf}}$-DP; the output $Z\in\mathcal{M}$ requires no pull-back. For the MCMC sampling implementation refer to \cite{reimherr2021differential}, section MCMC.} \\
    
    \bottomrule
\end{tabular}
\end{table*}

\subsection{Complexity Analysis}
\paragraph{Complexity of anchor-based kernel density estimation.}
With a spatial index constructed on the $J$ public anchors, assigning $N$ data points to their nearest anchors requires $\mathcal{O}(N\log J)$ time. Adding Laplace noise to the anchor counts requires $\mathcal{O}(J)$ time. Evaluating the reconstructed density
$
\tilde{f}_{\mathrm{data}}(x)=\sum_{j=1}^J w_j \bar{K}_{h,j}(x)
$
on a PDE mesh with $M$ grid points requires $\mathcal{O}(MJ)$ operations. Therefore, after anchor indexing and kernel normalization have been prepared offline, the online complexity of Stage~1 is
$
\mathcal{O}(N\log J + MJ).
$
This separates the privacy-sensitive computation from the continuous functional representation and makes the first stage scalable in the dataset size $N$.

\paragraph{Complexity of solving the Helmholtz--Poisson equation.}
To obtain the conformal factor $\phi=e^{2\sigma}$, we solve
$
(-\Delta_g+\upsilon)\sigma=\tilde{f}_{\mathrm{data}}-\tilde{f}_{\mathrm{max}}.
$
Discretizing the Laplace--Beltrami operator on an $M$-vertex mesh yields a sparse linear system. The cost of solving this system depends on the solver: sparse direct methods on typical 2D/3D meshes often require about $\mathcal{O}(M^{1.5})$ time in practice, while iterative solvers such as conjugate gradient can be near-linear up to conditioning and preconditioning quality. In either case, this stage is independent of the dataset size $N$, so its online cost is $\mathcal{O}(1)$ with respect to $N$.

\paragraph{Complexity of conformal MCMC sampling.}
In Stage~2, we sample from the conformal mechanism using Markov chain Monte Carlo. Starting from the current state $x_{\mathrm{curr}}\in\mathcal{M}$, we generate a proposal $x_{\mathrm{prop}}$ locally by drawing a Gaussian perturbation in the tangent space and mapping it back to the manifold. The symmetric proposal reduces the Metropolis--Hastings acceptance probability to
$\alpha(x_{\mathrm{curr}},x_{\mathrm{prop}}) = \min\!\left( 1,\, \exp\!\left\{ \lambda^* \bigl( \rho_{g^*}(x_{\mathrm{curr}},\eta_{g^*}) - \rho_{g^*}(x_{\mathrm{prop}},\eta_{g^*}) \bigr) \right\} \right).$
Assuming the conformal distances needed by the chain are precomputed or can be queried in constant time on the $M$-vertex graph, the online complexity of sampling for $T$ iterations is $\mathcal{O}(T)$. We discuss the selection of $T$ in Appendix M.

\section{Experimental Results}
\label{sec:experiment}
We evaluate our mechanism on two representative manifold settings: the sphere with positive curvature and the symmetric positive-definite (SPD) manifold with negative curvature. Importantly, although the latter is a \emph{non-compact} manifold, we can truncate the manifold by calculating the maximum distance between \emph{any} pair of possible inputs as the diameter for the public geodesic ball when the data allow (see Sec. \ref{ssec:real_world}). For synthetic experiments, we generate heterogeneously distributed samples from a von Mises-Fisher (vMF) distribution \cite{hillen2017moments} on the hypersphere to isolate the effects of manifold dimension, local density heterogeneity, and stringent privacy budgets on privatized Fr\'echet mean estimation. For real-world evaluation, we transform \textbf{Fashion-MNIST} and \textbf{CIFAR-10} images into covariance descriptors on the SPD manifold and benchmark our mechanism against state-of-the-art baselines. Across a range of privacy budgets, sample sizes, sample dimensions, and heterogeneity, our method consistently achieves better utility, scalability, and stability.
We run all experiments on an AMD EPYC 9575F CPU, with 32GB memory, single-core processing out of 12 cores, and test on the Jupyter Lab environment with Python version 3.11.0. 

\subsection{Results on Synthetic Datasets}
\label{sec:synthetic}

\paragraph{Synthetic dataset on the sphere (vMF sampling).}
Let $\mathcal{M}_{\mathrm{syn}}=\mathcal{S}^{d-1}(R)\subset\mathbb{R}^d$ be the sphere of radius $R$. 
We generate heterogeneously distributed points $x_i\in\mathcal{M}_{\mathrm{syn}}$ from a von Mises--Fisher (vMF) distribution \cite{hillen2017moments} with mean direction $\mu\in\mathbb{R}^d$ ($\|\mu\|_2=1$) and concentration $\kappa\ge 0$, following the acceptance--rejection sampler of Wood~\cite{Wood1994Simulation}, which is based on Ulrich’s hyperspherical rejection-sampling construction \cite{ulrich1984computer}. We use the standard deviation (std) parameter to control the data distribution, for example, when std = 0.1, the data distribution is extremely heterogeneous, and when std = 20, the data distribution is nearly uniform.

\paragraph{Mechanism.} \label{sec:details} We compare our mechanism to \textbf{three} existing state-of-the-art works: (a) the Riemannian-Laplace DP \cite{reimherr2021differential}, which injects Laplacian noise to the Fr{\'e}chet mean, (b) the K-Norm gradient-based DP mechanism \cite{soto2022shape}, which treats the Fr{\'e}chet mean as a minimizer of an objective function and injects noises to gradient updates in solving for the mean and (c) the Tangent-Gaussian DP \cite{utpala2022differentially}, which injects Gaussian noise in the tangent space to achieve $(\varepsilon,\delta)$-DP. \emph{The implementation details and hyperparameters are provided in Appendix \ref{app:alg-detail}}.

\paragraph{Utility Metrics.}
We use the geodesic distance between the privatized and non-private Fr{\'e}chet mean $\rho \left( \eta_p(D), \eta(D) \right)$ to compare the utility among the mechanisms. A smaller distance over the same privacy budget $\varepsilon$ indicates a smaller amount of noise needed to achieve the same privacy guarantee, which also means stronger utility for the mechanism. We run each experiment for \textbf{200} iterations (to avoid sample fluctuations caused by single-experiment bias) and include the standard deviation over those runs as the error band; a small error band means a more stable mechanism. Note that the baseline mechanism is Riemannian-Laplace DP \cite{reimherr2021differential}, which is pure $\varepsilon$-DP; Tangent Gaussian DP \cite{utpala2022differentially} is ($\varepsilon,\delta$)-DP, which has a small relaxation budget; our mechanism is ($\varepsilon_{\phi} + \varepsilon_{conf}$)-DP, which has tighter budgets than the other two mechanisms.

\begin{figure}[t!]
    \centering
    \begin{subfigure}[b]{0.49\linewidth}
        \includegraphics[width=\linewidth]{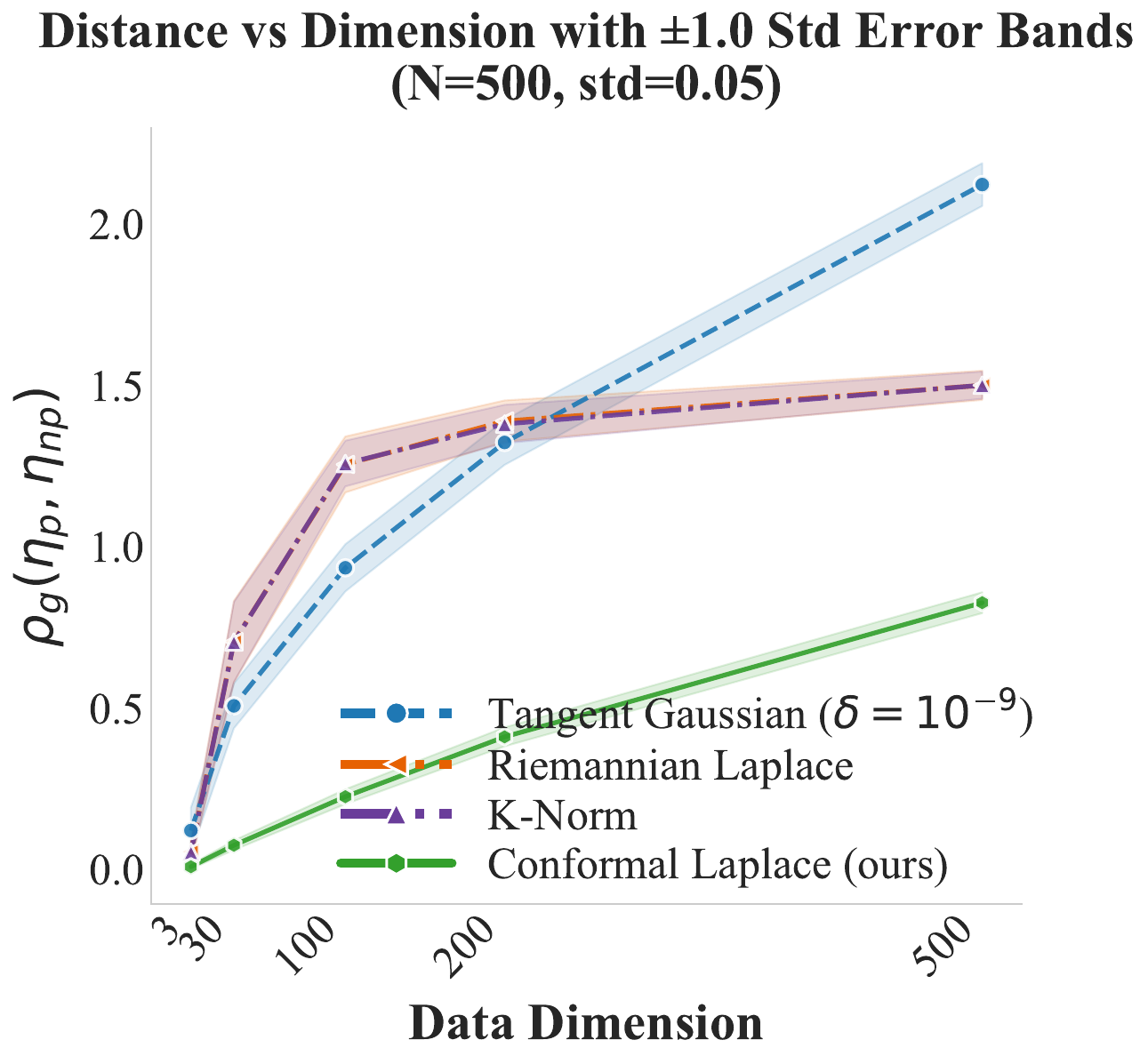}
        \caption{$\mathrm{std} = 0.05$}
    \end{subfigure}
    \begin{subfigure}[b]{0.49\linewidth}
        \includegraphics[width=\linewidth]{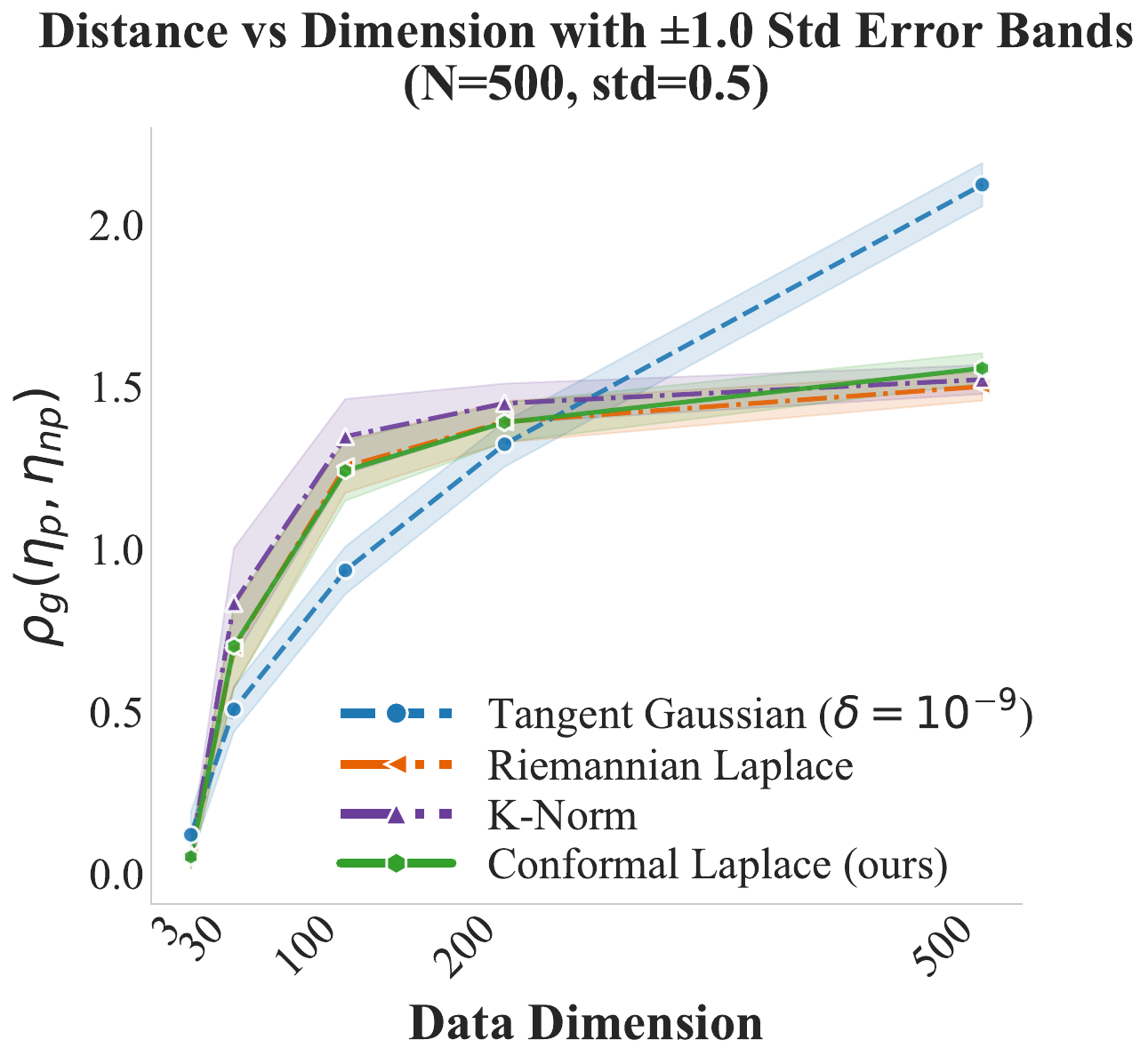}
        \caption{$\mathrm{std} = 0.5$}
    \end{subfigure}
    \caption{\textit{
    Comparison of the utility of all mechanisms under varying data dimensions. We report the geodesic distance between private and non-private Fréchet means (mean $\pm$ one standard deviation) with $N=500$ and $\varepsilon_{\mathrm{total}} = 0.3$. 
    (a) Non-uniform distribution regime: ($\mathrm{std}=0.05$). 
    (b) Near uniform distribution regime ($\mathrm{std}=0.5$). 
    }}
    \vspace{-10pt}
    \label{fig:syn-dim-comp}
\end{figure}

\begin{figure}[t!]
    \centering
    \begin{subfigure}[b]{0.49\linewidth}
        \includegraphics[width=\linewidth]{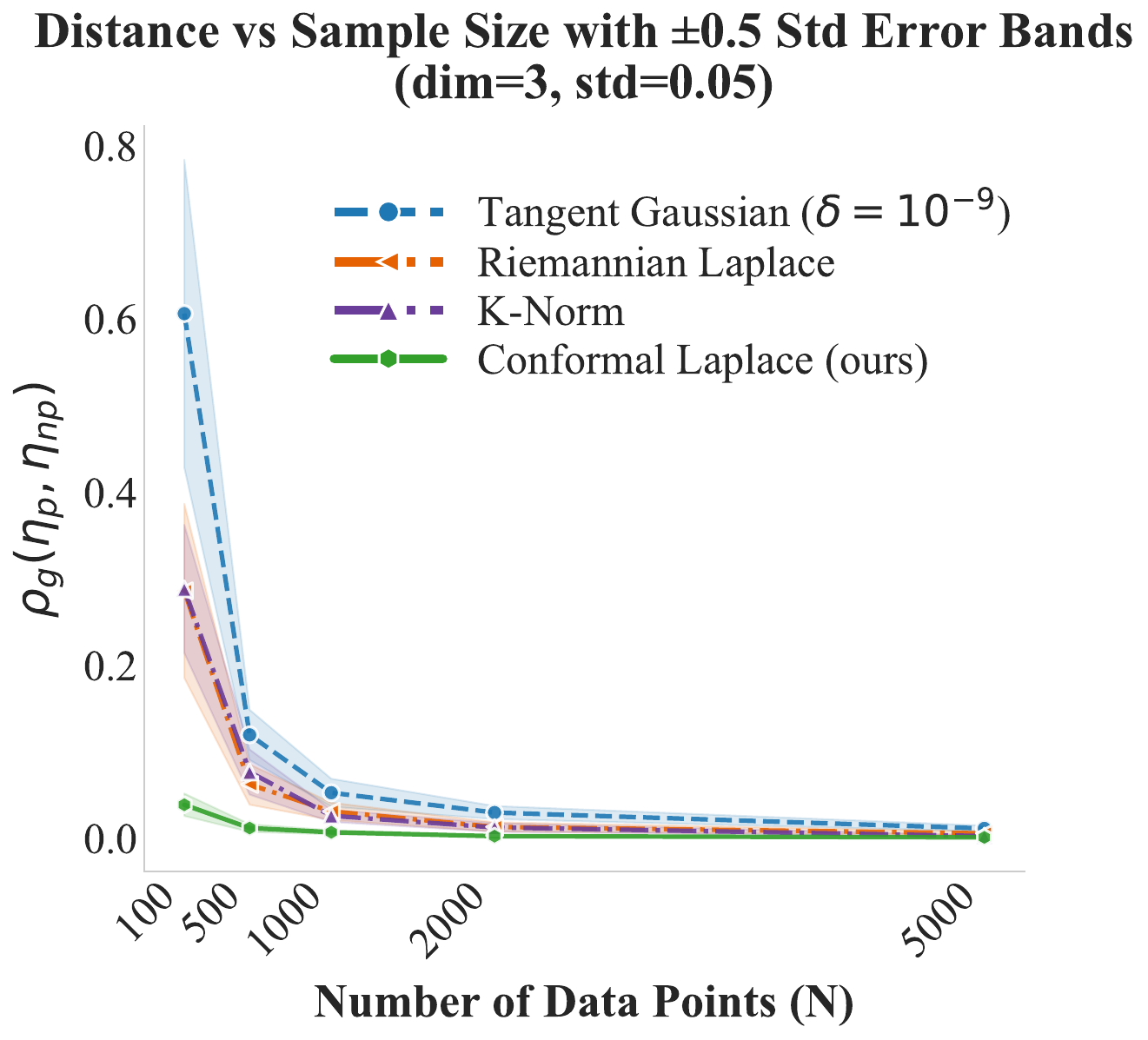}
        \caption{$d=3$, $\mathrm{std}=0.05$}
    \end{subfigure}
    \begin{subfigure}[b]{0.49\linewidth}
        \includegraphics[width=\linewidth]{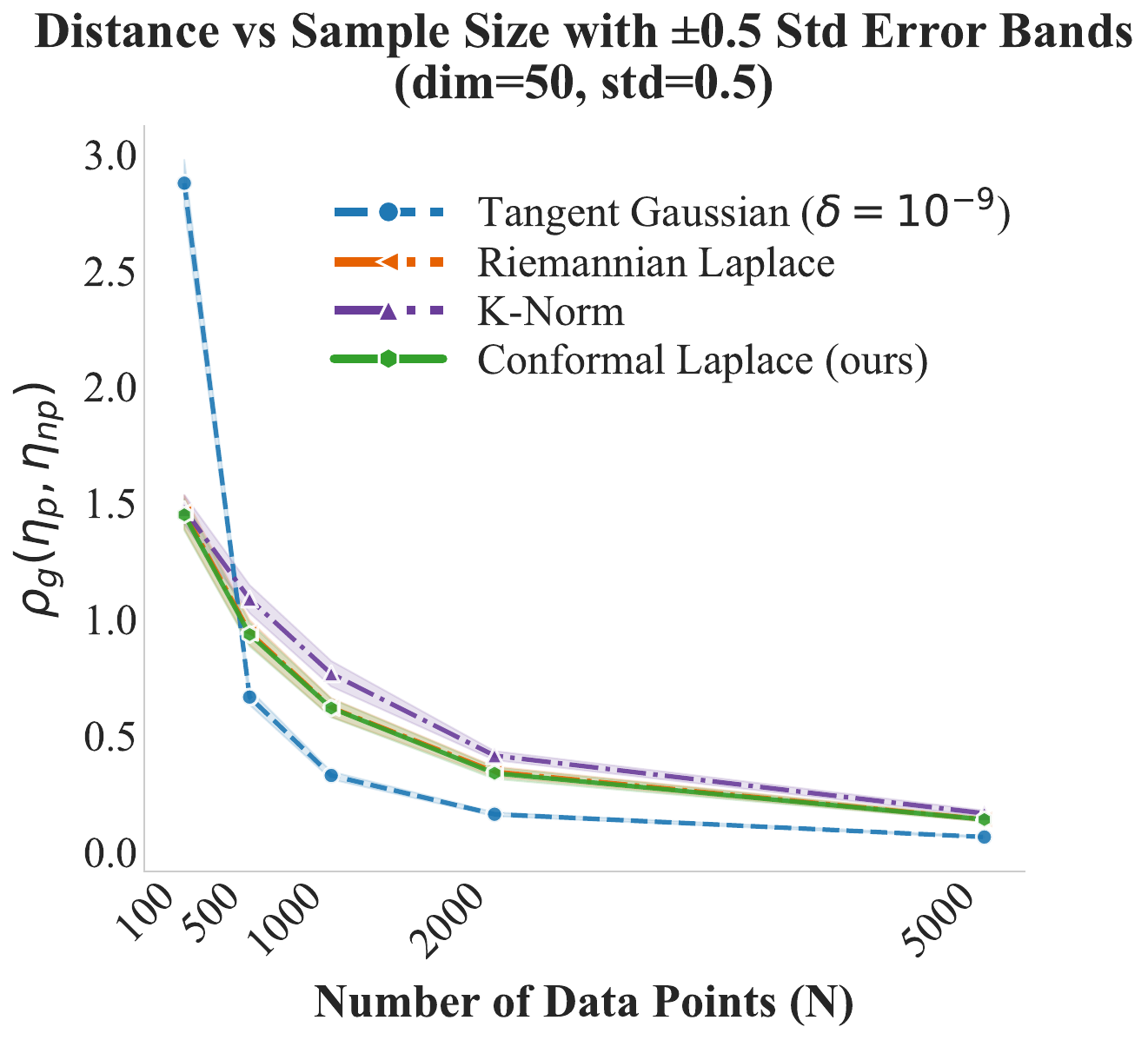}
        \caption{$d=50$, $\mathrm{std}=0.5$}
    \end{subfigure}
    \caption{\textit{
    Comparison of utility under varying sample sizes for all mechanisms.
    The y-axis reports the geodesic distance between private and non-private Fr\'echet means (mean $\pm$ 0.5 standard deviation), with $\varepsilon_{\mathrm{total}} = 0.3$. 
    (a) $d=3$, $\mathrm{std}=0.05$. 
    (b) $d=50$, $\mathrm{std}=0.5$. 
    }}
    \vspace{-15pt}
    \label{fig:syn-N-comp}
\end{figure}

\begin{figure}[t!]
    \centering
    \begin{subfigure}[b]{0.49\linewidth}
        \includegraphics[width=\linewidth]{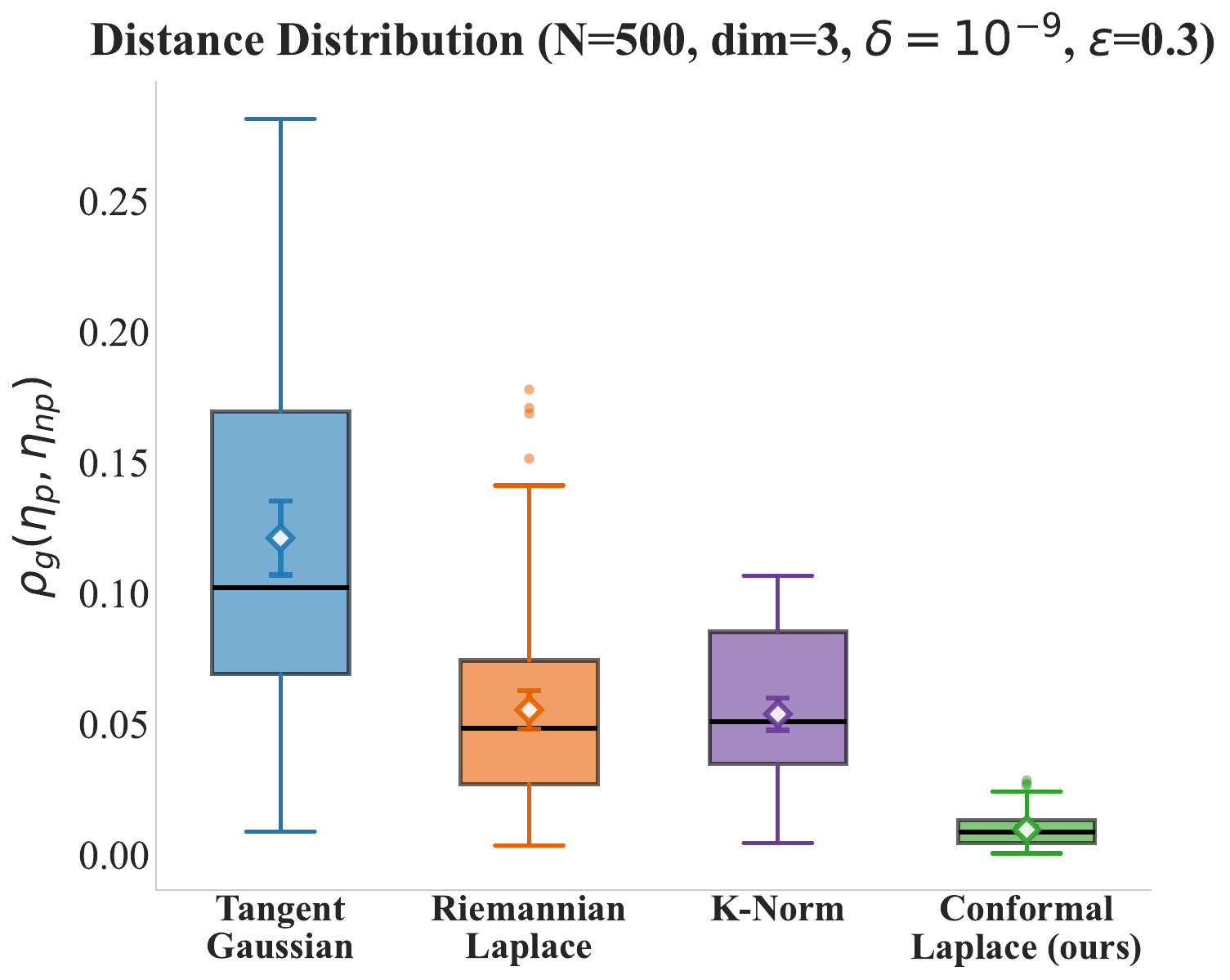}
        \caption{$\mathrm{std} = 0.05$}
    \end{subfigure}
    \begin{subfigure}[b]{0.49\linewidth}
        \includegraphics[width=\linewidth]{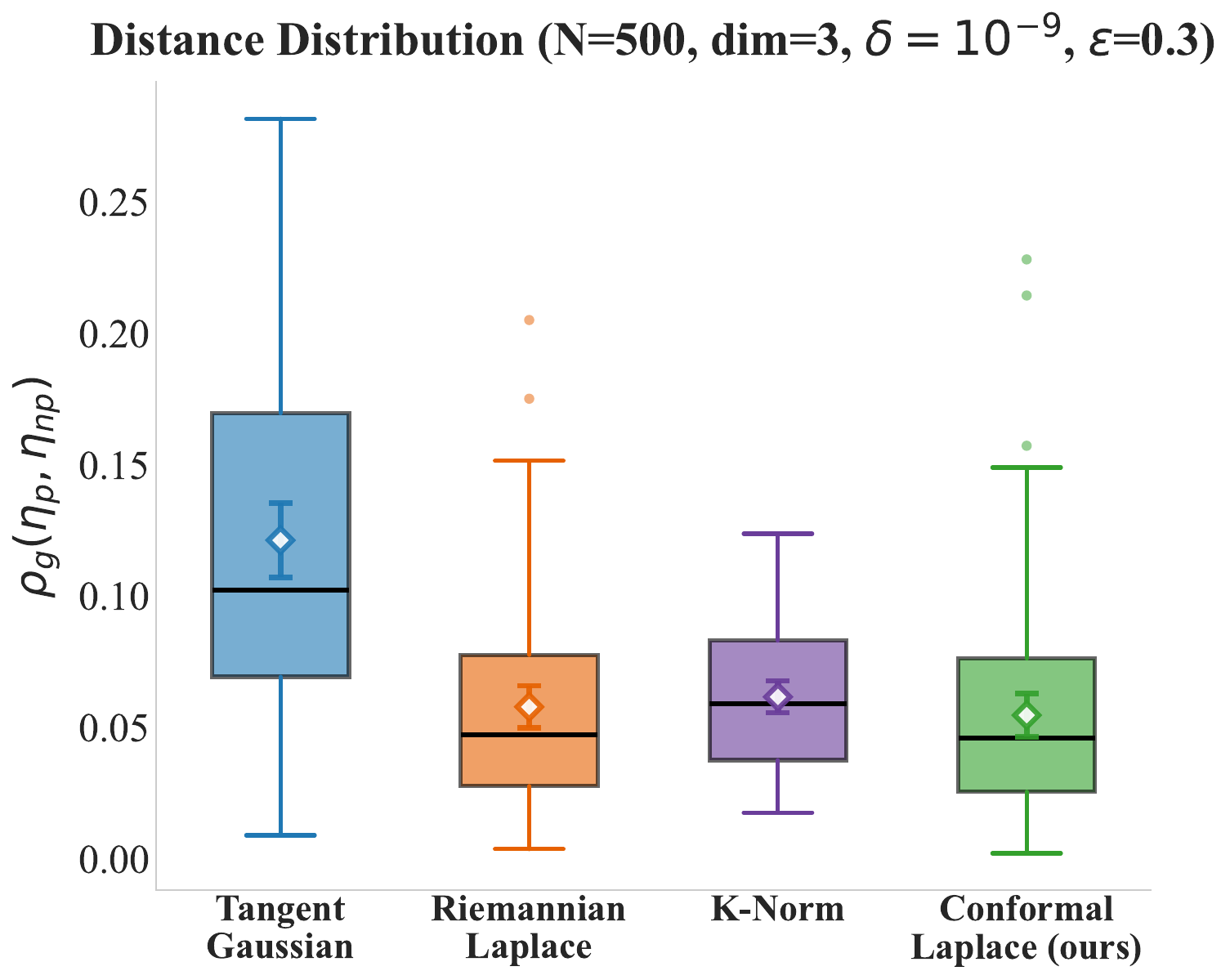}
        \caption{$\mathrm{std} = 0.5$}
    \end{subfigure}
    \caption{\textit{
    Comparison of error distributions for all mechanisms under varying levels of data distribution. We report box plots of the geodesic distance between private and non-private Fr\'echet means over repeated trials, with fixed privacy budget $\varepsilon_{\mathrm{total}}=0.3$. 
    Conformal Laplace achieves the lowest median error and the most concentrated distribution in the non-uniform distribution setting, and remains competitive in the uniform distribution setting, while Tangent Gaussian exhibits substantially larger error and variability.}}
    \label{fig:syn-std-comp}
\end{figure}
\begin{figure}[t!]
    \centering
    \begin{subfigure}[b]{0.49\linewidth}
        \includegraphics[width=\linewidth]{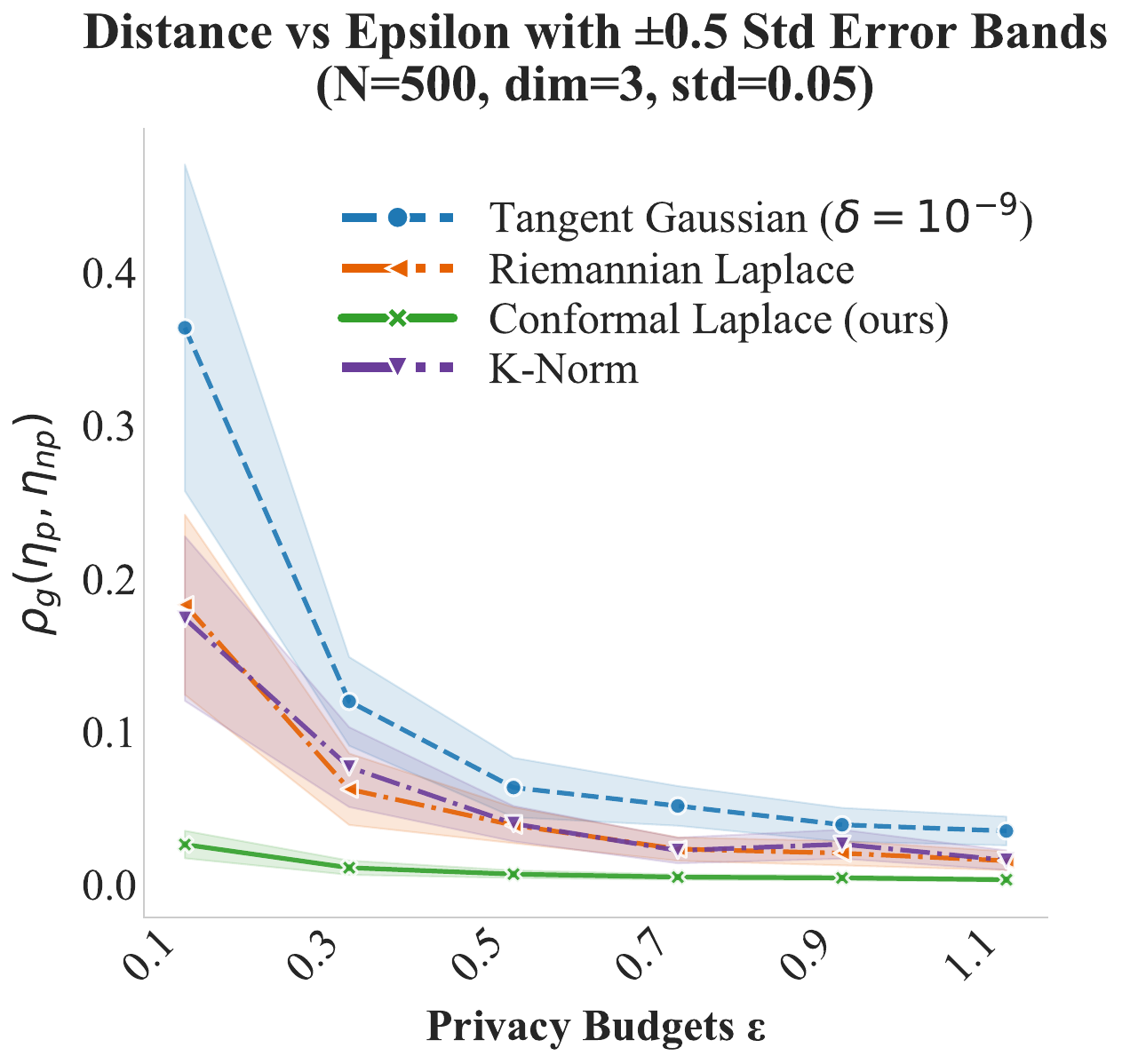}
        \caption{$\mathrm{std}=0.05$}
    \end{subfigure}
    \begin{subfigure}[b]{0.49\linewidth}
        \includegraphics[width=\linewidth]{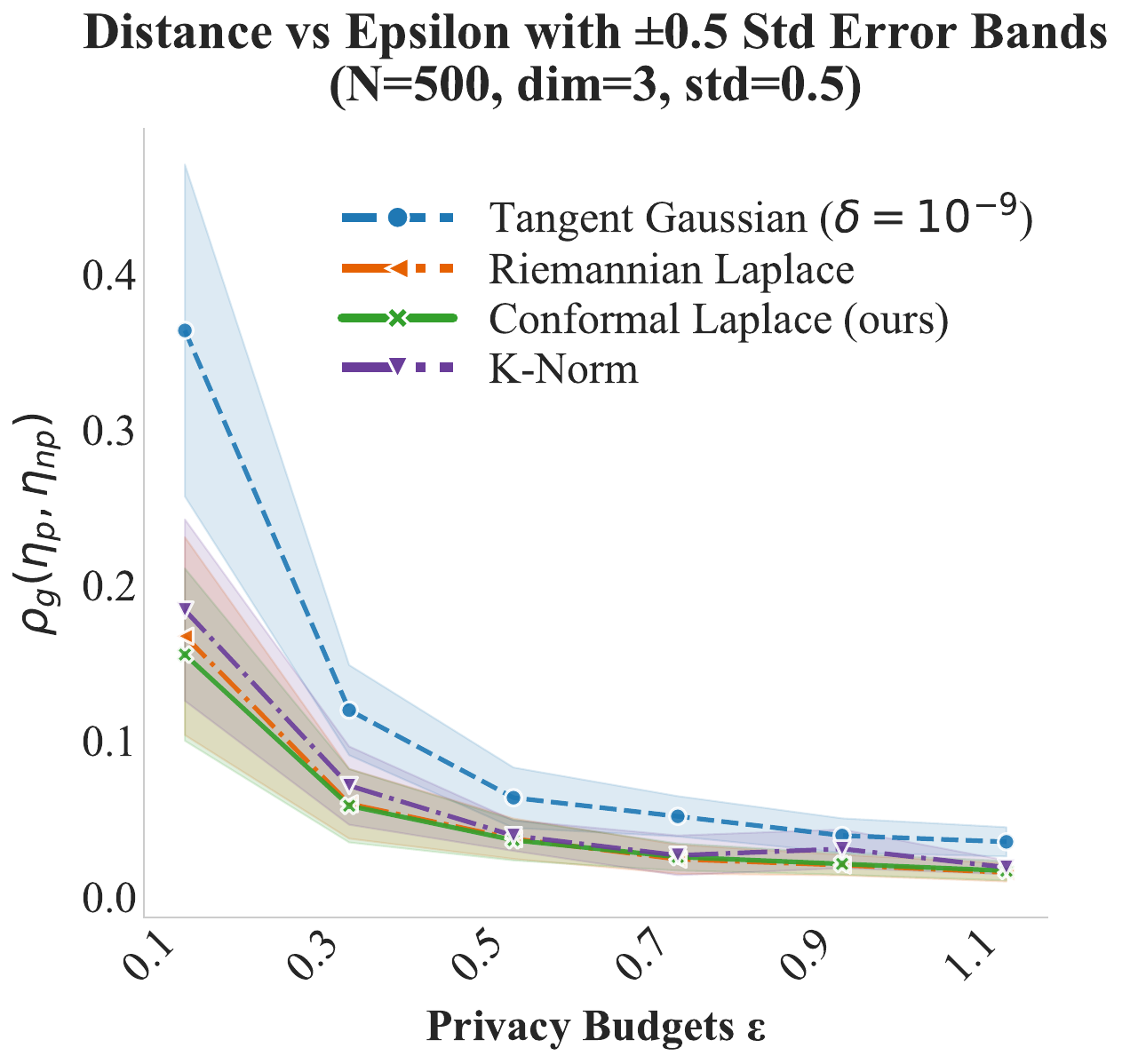}
        \caption{$\mathrm{std}=0.5$}
    \end{subfigure}
    \caption{\textit{
    Utility comparison under varying privacy budgets for all mechansims.
    We report the geodesic distance between private and non-private Fr\'echet means (mean $\pm$ 0.5 standard deviation) with $N=500$. 
    (a) Highly-heterogeneity setting ($d=3$, $\mathrm{std}=0.05$). 
    (b) Uniform setting ($d=3$, $\mathrm{std}=0.5$). 
    As the privacy budget increases, the error decreases for all methods. Conformal Laplace shows improved performance in highly-heterogeneity setting but performs similarly to the Riemannian Laplace mechanism in the uniform setting.
    }}
    \label{fig:syn-eps-comp}
\end{figure}

\paragraph{Results.} We evaluate the proposed mechanism rigorously, and present its advantages against state-of-the-art approaches from four complementary perspectives: (1) utility under varying data dimensionality and heterogeneity with sample size and privacy budgets fixed; (2) utility under varying sample size and heterogeneity with data dimension and privacy budgets fixed; (3) utility across different heterogeneity levels with data dimensionality, sample size, and privacy budget fixed; and (4) utility across different privacy budgets with data dimensionality, sample size, and heterogeneity fixed.

\cref{fig:syn-dim-comp}(a) shows that Conformal Laplace mechanism consistently attains the lowest error, while Tangent Gaussian degrades quickly with dimension due to accumulated tangent-space approximation error. Riemannian Laplace and K-Norm remain more stable but yield consistently higher error, particularly in lower-dimensional regimes. This demonstrates that the conformal geometry better captures data heterogeneity and improves robustness in higher dimensions. \cref{fig:syn-dim-comp}(b) validates our theoretical analysis: under nearly uniform data distributions, the benefit of conformal transformation vanishes, and Conformal Laplace behaves similarly to Riemannian Laplace and K-Norm.

\cref{fig:syn-N-comp} illustrates the effect of sample size $N$ on utility under a fixed privacy budget $\varepsilon_{\mathrm{total}} = 0.3$. As $N$ increases, sensitivity of the empirical Fr\'echet mean decreases with $1/N$, therefore, the error of all mechanisms decreases. In \cref{fig:syn-N-comp}(a), Conformal Laplace consistently achieves the lowest error and better utility across all sample sizes. In \cref{fig:syn-N-comp}(b), we also validate our theoretical analysis as discussed above.

\cref{fig:syn-std-comp} compares the performances between uniform distribution and non-uniform distribution of the four mechanisms under a fixed privacy budget $\varepsilon_{\mathrm{total}}=0.3$. In both settings, Conformal Laplace yields a markedly smaller error than Tangent Gaussian and a more concentrated distribution overall. In \cref{fig:syn-std-comp}(a), it achieves the lowest median error and the tightest spread. In \cref{fig:syn-std-comp}(b), its performance remains close to that of Riemannian Laplace, and both substantially outperform Tangent Gaussian. These results indicate that Conformal Laplace maintains stronger robustness under changes in data distribution.

\begin{figure}[t!]
    \centering
    \includegraphics[width=\linewidth]{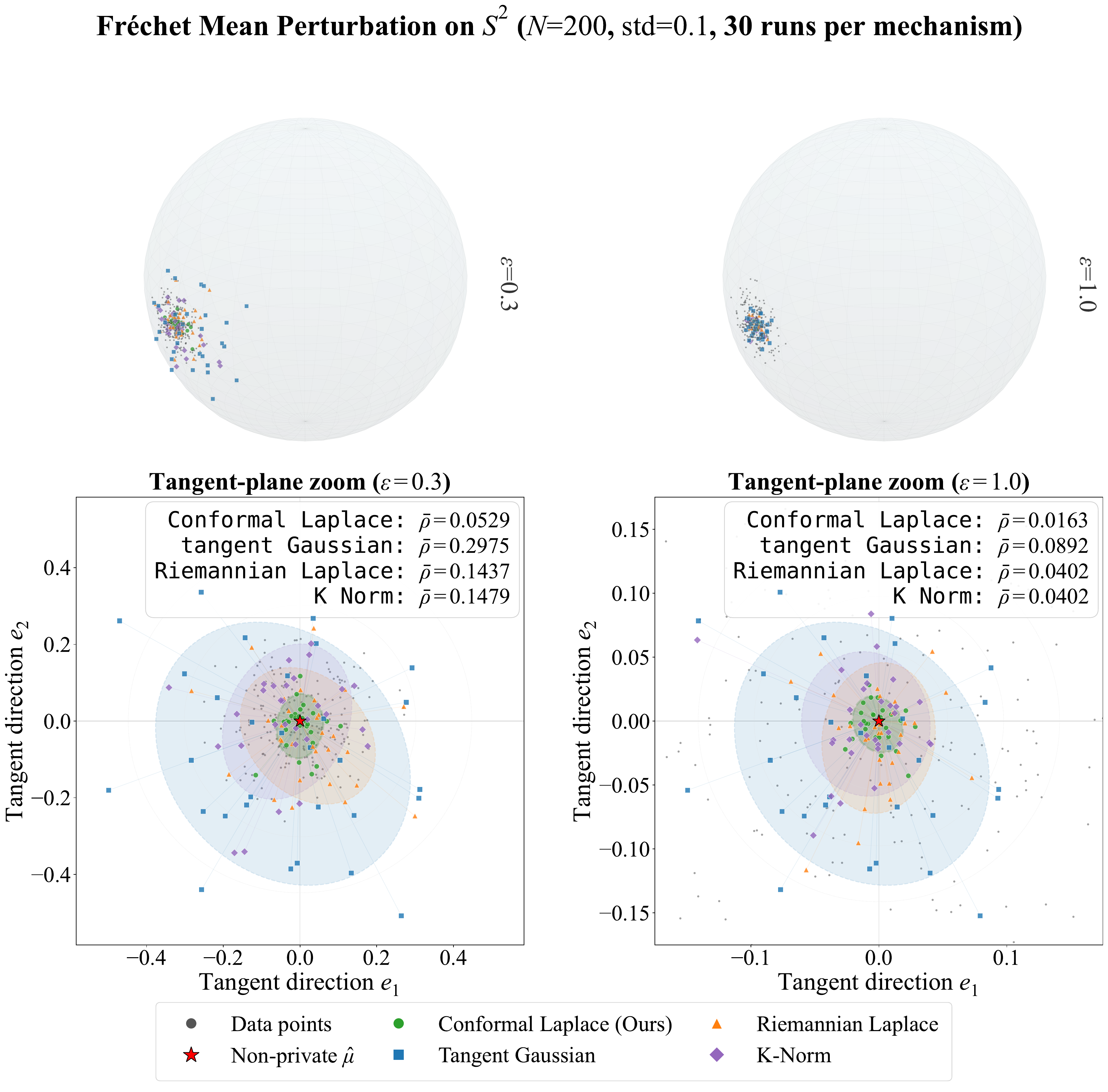}
    \caption{\textit{
    3D visualization of Fr\'echet mean perturbations on $\mathbb{S}^2$ for sphere-valued data with $N=200$ and $\mathrm{std}=0.1$ under $\varepsilon=0.3$ and $\varepsilon=1.0$. Light-gray dots denote the original data points, and the red star denotes the non-private Fr\'echet mean. The private Fr\'echet means obtained from 30 independent runs are shown for each mechanism: Conformal Laplace (green circles), Tangent Gaussian (blue squares), Riemannian Laplace (orange triangles), and K-Norm (purple diamonds). The bottom row presents tangent-plane projections centered at the non-private Fr\'echet mean, with ellipses indicating empirical dispersion regions. Conformal Laplace (green) exhibits a tighter concentration around the non-private mean than Tangent Gaussian (blue), Riemannian Laplace (orange), and K-Norm (purple).
    }}
    \label{fig:privacy_heatmap}
    \vspace{-10pt}
\end{figure}

\cref{fig:syn-eps-comp} illustrates the dependence of utility on the privacy budget $\varepsilon_{\mathrm{total}}$. As $\varepsilon_{\mathrm{total}}$ increases, the error of all mechanisms decreases, reflecting the standard privacy–utility trade-off. Conformal Laplace consistently outperforms Tangent Gaussian, K-Norm, and Riemannian Laplace in non-uniform settings while remaining competitive with Riemannian Laplace and K-Norm in uniform settings across all privacy levels. 

\cref{fig:privacy_heatmap} illustrates the qualitative and quantitative behavior of different private Fr\'echet mean mechanisms under a heterogeneous distribution on $\mathbb{S}^2$. The key observation is that the proposed Conformal Laplace mechanism produces private means that remain more tightly concentrated around the non-private Fr\'echet mean than the baseline mechanisms, especially in the stronger privacy regime. This supports the intended effect of the conformal transformation: by incorporating the sanitized density structure into the metric, the perturbation becomes more localized around high-density regions while still satisfying the same $(\varepsilon,0)$-privacy guarantee. In the tangent-plane projections, Conformal Laplace shows the smallest dispersion region and the lowest mean geodesic error under both $\varepsilon=0.3$ and $\varepsilon=1.0$. At $\varepsilon=0.3$, its mean perturbation is reduced by approximately $80\%$, $59\%$, and $60\%$ compared with Tangent Gaussian, Riemannian Laplace, and K-Norm, respectively.

In addition, we calculate the time cost (though it varies depending on the hardware) of our proposed Conformal-Laplace DP mechanism compared to the Riemannian-Laplacian DP \cite{reimherr2021differential} and tangent Gaussian DP \cite{utpala2022differentially}. On average, our Conformal DP mechanism requires around 6ms per data sample to privatize the result. The time cost of our mechanism increases linearly with the data size $N$ as we need to solve the Helmholtz-Poisson equation (Eq. \ref{eq:poisson}). Under $N = 500$, we observe approximately 3.8 times the runtime of the Riemannian-Laplace mechanism, and around 2610 times less than the K-Norm mechanism; the Tangent-Gaussian mechanism is the fastest but has the least utility on the Sphere manifold. In addition, our proposed mechanism achieves significantly stronger utility than all four mechanisms (see Figs. \ref{fig:syn-dim-comp} through \ref{fig:syn-eps-comp}).

\subsection{Results on Real-world Datasets}
More details on each setup are provided in Appendix \ref{app:alg-detail}.
\label{ssec:real_world}
\begin{figure}[t!]
    \centering
    \begin{subfigure}[b]{\linewidth}
        \includegraphics[width=\linewidth]{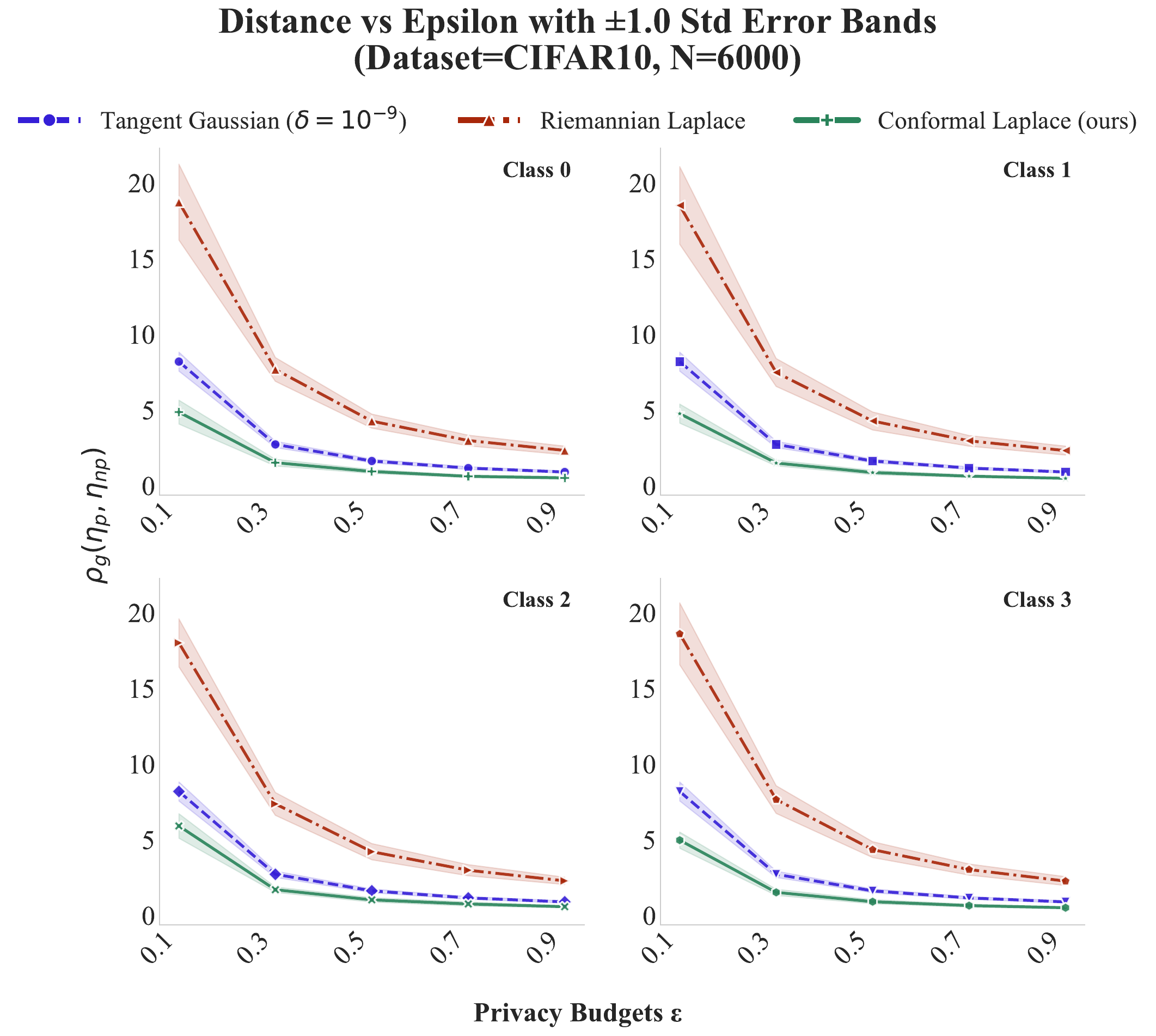}
        \label{fig:cifar10-6000}
    \end{subfigure}
    \caption{\textit{Utilities of private Fr\'echet means under varying privacy budgets ($\varepsilon$) and different classes for CIFAR-10 dataset.}}
    \label{fig:cifar10-comp}
\end{figure}

\begin{figure}[t]
    \centering
    \begin{subfigure}[b]{\linewidth}
        \includegraphics[width=\linewidth]{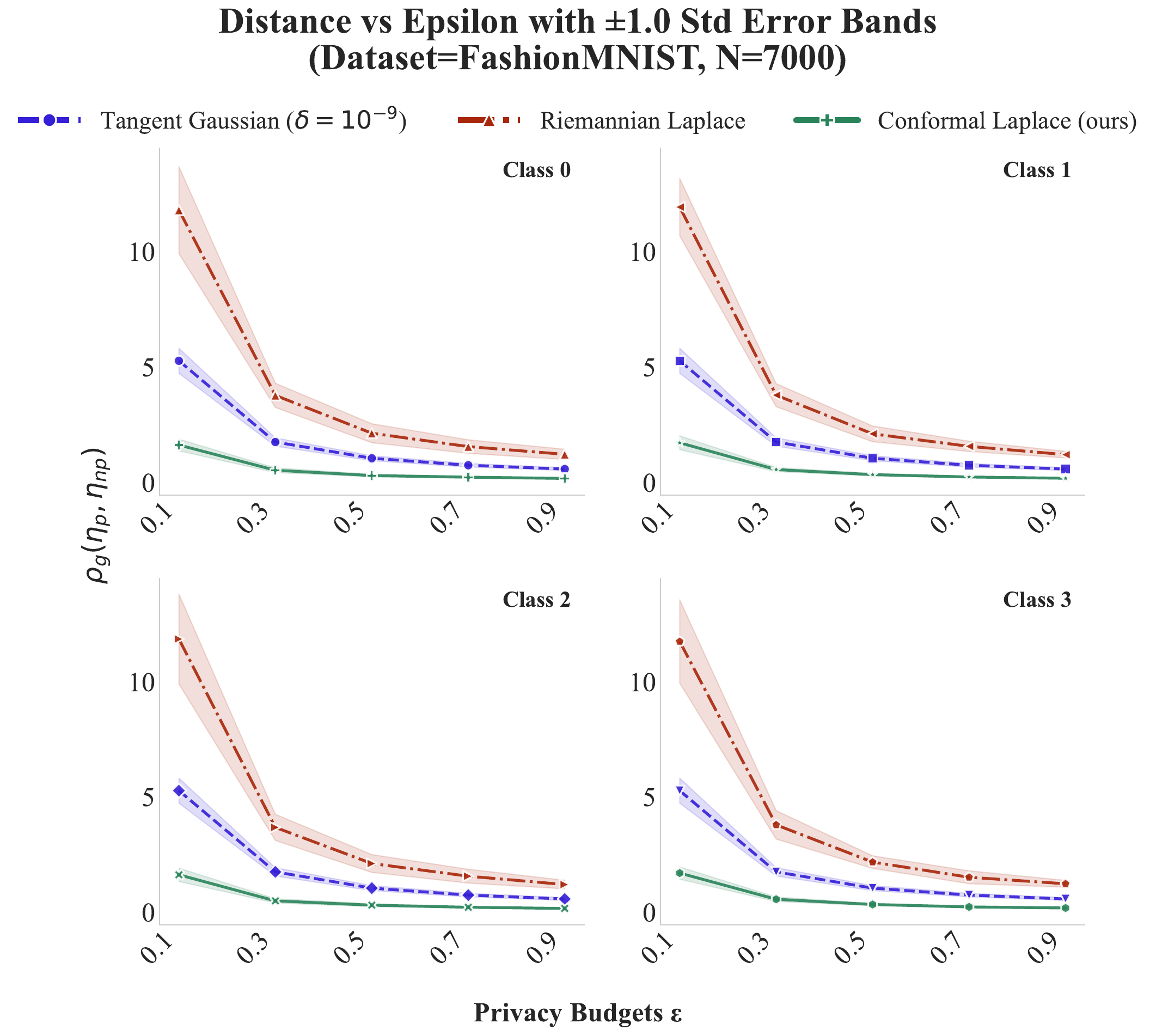}
        \label{fig:fashionmnist-6000}
    \end{subfigure}

    \caption{\textit{Utilities of private Fr\'echet means under varying privacy budgets ($\varepsilon$) and different classes for Fashion-MNIST dataset.}}
    \label{fig:Fashion-MNIST-comp}
\end{figure}

\paragraph{Datasets.} The SPD descriptors are not uniformly distributed over the manifold. Instead of being sampled directly from any uniform distribution on $\mathrm{SPD}(N)$, they are induced by pushing a structured real-image distribution through a deterministic image-to-SPD mapping. Consequently, the resulting manifold-valued samples concentrate around a data-dependent Fr\'echet mean, rather than covering $\mathrm{SPD}(N)$ in a homogeneous manner. This non-uniform geometric structure makes such data a natural testbed for our conformal mechanism, whose design is intended to adapt to heterogeneous manifold distributions. To further demonstrate the practical relevance and scalability of Conformal Laplace, we evaluate its utility on two real-world image datasets, \textbf{CIFAR-10} \cite{Krizhevsky09CIFAR} and \textbf{Fashion-MNIST} \cite{Xiao17FashionMNIST}. In this setting, our goal is to privately release the Fr\'echet mean of the dataset as a manifold-valued descriptor, namely $\eta:\mathcal{M}^N \rightarrow \mathcal{M}$.

\paragraph{Processing.} For image data, we first map them to a Riemannian manifold following Utpala et al. \cite{utpala2022differentially}. Specifically, for an image $\mathcal{I} \in \mathbb{R}^{h \times w \times c}$ represented as $h \times w$ pixels with $c$ channels, we convert it to their covariance descriptor in the form of a $k \times k$ SPD matrix given by:
\begin{equation}
    \mathrm{R}_\iota (\mathcal{I}) = \frac{1}{|S|} \sum_{\mathbf{x} \in S} {\left( \upsilon(\mathcal{I})(\mathbf{x}) - \bar{\upsilon} \right)}{\left( \upsilon(\mathcal{I})(\mathbf{x}) - \bar{\upsilon} \right)}^T + \iota I,  \nonumber
\end{equation}
\noindent in which $\upsilon(\mathcal{I})(\mathbf{x})$ is a pixel-level feature extractor for each pixel $\mathbf{x}=(x,y)$, $\bar{\upsilon}$ is the mean of $\upsilon(\mathcal{I})(\mathbf{x})$ for $\forall \mathbf{x} \in S$ and $\iota$ is a small constant. Following Utpala et al. \cite{utpala2022differentially}, we define $\upsilon(\mathcal{I})(\mathbf{x})$ as:
\begin{align*}
    \upsilon(\mathcal{I})(\mathbf{x}) =
    &\left[\vphantom{\frac{{|\mathcal{I}|}_x}{{|\mathcal{I}|}_y}} x, y, \mathcal{I}, |\mathcal{I}_x|, |\mathcal{I}_y|, |\mathcal{I}_{xx}|, |\mathcal{I}_{yy}|,\right. \\
    & \left. \sqrt{{|\mathcal{I}_x|}^2+{|\mathcal{I}_y|}^2}, \arctan \frac{{|\mathcal{I}|}_x}{{|\mathcal{I}|}_y} \right] .
\end{align*}
The radius of the public geodesic ball under this construction is $\sqrt{9} |\ln \iota|$ for \textbf{Fashion-MNIST} and $\sqrt{11} |\ln \iota|$ for \textbf{CIFAR-10}, which can be utilized to truncate the SPD manifold; for more detail, we refer to Theorem 5 of Utpala et al. \cite{utpala2022differentially}.

\paragraph{Cross-Dataset Evaluation.} We present similar performance comparison for the three mechanisms on CIFAR-10 and Fashion-MNIST. Note that we do not compare ours with the K-Norm gradient (KNG) method \mbox{\cite{soto2022shape}} on SPD matrices. As the authors of \mbox{\cite{soto2022shape}} note, the KNG mechanism does not have a closed-form expression for the acceptance probability in the MCMC sampling process and requires computing the inverse exponential map of the dataset at each step. This incurs a time complexity of $O(M \times N \times k^3)$ for $k \times k$ SPD matrices, making comparisons on CIFAR-10 and Fashion-MNIST computationally infeasible. We present results comparing the utilities of different DP mechanisms across both datasets. We perform tests on 4 classes; \cref{fig:cifar10-comp} and \cref{fig:Fashion-MNIST-comp} show the geodesic distances of private Fr\'echet means under varying $\varepsilon$ across four representative classes for each dataset.

On CIFAR-10, all classes exhibit a consistent trend: the distance decreases as $\varepsilon$ increases, and Conformal Laplace achieves the lowest error across all privacy levels, followed by Tangent Gaussian, while Riemannian Laplace incurs substantially larger distortion. The performance gap is most significant in the low-$\varepsilon$ regime and narrows as $\varepsilon$ increases.

Similar behavior is observed on Fashion-MNIST, albeit at a smaller error scale. Conformal Laplace remains the most accurate across all classes, with a stable performance advantage over Tangent Gaussian and a clear gap from Riemannian Laplace. The consistent ordering across datasets and classes indicates that the improvement of Conformal Laplace is robust and not tied to specific data characteristics.

\section{Conclusion}
\label{conclusion}
In this paper, we introduce \emph{Conformal}-DP mechanism, a novel differential privacy approach designed specifically for data on the Riemannian manifolds. Our approach adjusts privacy perturbations based on the local density of data points while not leaking information. We demonstrate through both theoretical analyses and experimental results that our DP mechanism provides a better trade-off between privacy and utility compared to the state-of-the-art methods, especially when applied to datasets with heterogeneous distributions.

Potential future research directions include extending \emph{Conformal}-DP to manifold-valued machine learning tasks and privacy-preserving synthetic data generation on manifolds. One possible practical application is diffusion-tensor magnetic resonance imaging (DT-MRI), where tissue microstructures are naturally modeled as manifold-valued data and may exhibit highly heterogeneous spatial distributions. In such cases, density-aware perturbation could provide a useful mechanism for balancing privacy protection and utility preservation. Beyond these applications, an important theoretical direction is to extend the proposed framework beyond the Fr\'echet mean to more complex manifold-valued statistical aggregates. Such extensions would require statistic-specific analyses of well-definedness, sensitivity, and utility, and therefore remain an important open challenge.

\bibliographystyle{IEEEtran}

\newpage
\appendix
\subsection{Proof for Theorem \ref{theorem:h}}
\label{proof_theorem_h}
\begin{proof}[Sketch of proof]
We evaluate the bias and variance of the intrinsic KDE $\hat{f}(x)$ under the uniform reference distribution $f_{ref}(x) = 1/V$. $0 \le h \le \frac{1}{2}\mathrm{inj}(\mathcal{M})$, the exponential map $\exp_x : T_x \mathcal{M} \to \mathcal{M}$ is a global diffeomorphism on the kernel support. Pulling back the integration to the tangent space $T_x \mathcal{M}$ using normal coordinates $y = \exp_x(\mathrm{v})$, the Riemannian volume form expands as:
\begin{equation*}
d\mu_g(\exp_x(\mathrm{\mathrm{v}})) = \left( 1 - \frac{1}{6} \mathrm{Ric}_{ij}(x) \mathrm{v}^i \mathrm{v}^j + \mathcal{O}(\|\mathrm{v}\|^3) \right) d\mathrm{v}
\end{equation*}
The expectation of $\hat{f}(x)$ under $f_{ref}$ is:
\begin{equation*}
\begin{aligned}
    \mathbb{E}[\hat{f}(x)] &= \int_{\mathcal{M}} \frac{1}{h^d} K\left(\frac{\rho_g(x, y)^2}{h^2}\right) \frac{1}{V} d\mu_g(y) \\
    &= \frac{1}{V} \int_{T_x \mathcal{M}} \frac{1}{h^d} K\left(\frac{\|\mathrm{v}\|^2}{h^2}\right) \left( 1 - \frac{1}{6} \mathrm{Ric}_{ij}(x) \mathrm{v}^i \mathrm{v}^j \right) d\mathrm{v}
\end{aligned}
\end{equation*}
Applying the substitution $\mathrm{v} = h u$ with $dv = h^d du$:
\begin{equation*}
\mathbb{E}[\hat{f}(x)] \approx \frac{1}{V} \int_{\mathbb{R}^d} K(\|u\|^2) \left( 1 - \frac{h^2}{6} \mathrm{Ric}_{ij}(x) u^i u^j \right) du
\end{equation*}
By the spherical symmetry of $K$, cross-terms vanish, giving $\int_{\mathbb{R}^d} u^i u^j K(\|u\|^2) du = \mu_2(K) \delta^{ij}$. Contracting the Ricci tensor yields the scalar curvature $\mathrm{Ric}_{ij}(x) \delta^{ij} = S(x)$. Since $\int_{\mathbb{R}^d} K(\|u\|^2) du = 1$, we obtain:
\begin{equation*}
\mathbb{E}[\hat{f}(x)] \approx \frac{1}{V} \left( 1 - \frac{h^2 \mu_2(K)}{6} S(x) \right)
\end{equation*}
The point-wise bias is therefore:
\begin{equation*}
\mathrm{Bias}[\hat{f}(x)] = \mathbb{E}[\hat{f}(x)] - \frac{1}{V} \approx - \frac{h^2 \mu_2(K)}{6 V} S(x)
\end{equation*}
Next, the leading-order term for the variance is calculated as:
\begin{equation*}
\mathrm{Var}[\hat{f}(x)] \approx \frac{1}{N} \int_{\mathcal{M}} \frac{1}{h^{2d}} K\left(\frac{\rho_g(x, y)^2}{h^2}\right)^2 \frac{1}{V} d\mu_g(y) \approx \frac{R(K)}{N V h^d}
\end{equation*}
We construct the regularized Asymptotic Mean Integrated Squared Error ($\mathrm{AMISE}$) over $\mathcal{M}$ by introducing a regularizer $\lambda_0 > 0$ to the geometric penalty term:
{\footnotesize
\begin{equation*}
\begin{aligned}
    \mathrm{AMISE}(h) = \int_{\mathcal{M}} \left( \mathrm{Bias}[\hat{f}(x)]^2 + \mathrm{Var}[\hat{f}(x)] \right) d\mu_g(x) + \frac{h^4 \mu_2(K)^2 \lambda_0}{36 V^2}
\end{aligned}
\end{equation*}}
Substituting the squared bias and variance, and noting that $\int_{\mathcal{M}} S(x)^2 d\mu_g(x) = \|S\|_{L^2(\mathcal{M})}^2$ and $\int_{\mathcal{M}} d\mu_g(x) = V$:
\begin{equation*}
\mathrm{AMISE}(h) \approx \frac{h^4 \mu_2(K)^2 \left(\|S\|_{L^2(\mathcal{M})}^2 + \lambda_0\right)}{36 V^2} + \frac{R(K)}{N h^d}
\end{equation*}
Setting the derivative with respect to $h$ to zero:
\begin{equation*}
\frac{\partial \mathrm{AMISE}}{\partial h} = \frac{4 h^3 \mu_2(K)^2 \left(\|S\|_{L^2(\mathcal{M})}^2 + \lambda_0\right)}{36 V^2} - \frac{d \cdot R(K)}{N h^{d+1}} = 0
\end{equation*}
Solving for $h$ yields the unconstrained root:
\begin{equation*}
h = \left( \frac{9 d \cdot V^2 \cdot R(K)}{N \cdot \mu_2(K)^2 \cdot \left(\|S\|_{L^2(\mathcal{M})}^2 + \lambda_0\right)} \right)^{\frac{1}{d+4}}
\end{equation*}
Finally, taking into account the topological constraint required for the exponential map to remain a global diffeomorphism on the kernel support, the bandwidth is truncated by half the injectivity radius:
{\small
\begin{equation*}
    h = \min \left\{ \left( \frac{9 d \cdot V^2 \cdot R(K)}{N \cdot \mu_2(K)^2 \cdot \left(\|S\|_{L^2(\mathcal{M})}^2 + \lambda_0\right)} \right)^{\frac{1}{d+4}},\frac{1}{2} \operatorname{inj}(\mathcal{M}) \right\}
\end{equation*}}
This concludes the proof.
\end{proof}

\subsection{Proof for Lemma \ref{lemma1}}
\begin{proof}
\label{proof_lemma1}
    (i) Assume $w=\sigma_1-\sigma_2$. Then $(-\Delta_g+\upsilon)w=0$. Testing against $w$ gives 
    $\int_{\mathcal M}(|\nabla w|^2+\upsilon w^2)\,d\mu_g=0$, when $w\equiv0$.  
    (ii) $w:=\sigma_1-\sigma_2$ satisfies $(-\Delta_g+\upsilon)w=(c-f_1)-(c-f_2)=f_2-f_1\ge0$. By the strong maximum principle, $w\ge0$.
\end{proof}

\subsection{Proof for Proposition. \ref{prop:conformal_factor_properties}}
\label{proof_prop:conformal_factor_properties}
\begin{proof}
By definition, the right-hand side of \cref{eq:poisson} satisfies
\[
\tilde{f}_{\mathrm{data}}(x) - \tilde{f}_{\mathrm{max}} \le 0,
\qquad \forall x \in \mathcal{M}.
\]
Since the constant function $\sigma_+ \equiv 0$ satisfies
\[
(-\Delta_g + \upsilon)\sigma_+ = 0,
\]
the weak maximum principle for the strongly elliptic operator $-\Delta_g + \upsilon$ implies
\[
\sigma(x) \le 0,
\qquad \forall x \in \mathcal{M}.
\]

For the lower bound, observe that
\[
\tilde{f}_{\mathrm{data}}(x) - \tilde{f}_{\mathrm{max}}
\ge
\tilde{f}_{\mathrm{min}} - \tilde{f}_{\mathrm{max}},
\qquad \forall x \in \mathcal{M}.
\]
Consider the constant function
\[
\sigma_- \equiv \frac{\tilde{f}_{\mathrm{min}} - \tilde{f}_{\mathrm{max}}}{\upsilon},
\]
which satisfies
\[
(-\Delta_g + \upsilon)\sigma_-
=
\tilde{f}_{\mathrm{min}} - \tilde{f}_{\mathrm{max}}.
\]
Applying the comparison argument again yields
\[
\sigma_-
\le
\sigma(x),
\qquad \forall x \in \mathcal{M}.
\]
Combining the two inequalities proves \cref{eq:sigma_bounds_combined}. Since $s \mapsto e^{2s}$ is strictly increasing, exponentiating both sides gives \cref{eq:phi_bounds_combined}.
\end{proof}

\subsection{Proof for Theorem \ref{theorem1}}
\label{proof1}
\begin{proof}
    Note, we consider a typical second-order strongly elliptic operator $\Omega_g$ in local coordinates \(\{x_1,...,x_n\}\) on the Riemannian manifold $(\mathcal{M},g)$ which is in divergence form:
    \begin{equation*}
    \begin{aligned}
        \Omega_g[\sigma](x)=&\frac{1}{\sqrt{\operatorname{det}(g)}} \frac{\partial}{\partial x^i}\left(\sqrt{\operatorname{det}(g)} A^{i j}(x, \sigma(x)) \frac{\partial \sigma}{\partial x^j}(x)\right)+\\
        &B^i(x, \sigma(x)) \frac{\partial \sigma}{\partial x^i}(x)+C(x, \sigma(x)) \sigma(x) \\
        =& \frac{1}{\sqrt{\operatorname{det}(g)}} \frac{\partial}{\partial x^i}\left(\sqrt{\operatorname{det}(g)} g^{i j} \frac{\partial \sigma}{\partial x^j}\right) \\
        =& \Delta_g \sigma(x)
    \end{aligned}
    \end{equation*}
    where $A^{i j}(x, \mu)$ is the leading-order coefficient with the property of positive-definite; in our case, $A^{i j}(x, \mu) = g^{ij}$; \(B^i(x, \mu)\) and \(C(x, \mu)\) are lower-order terms, which can also depend on \(\mu\) and its derivatives if the operator is nonlinear, in our case, \(B^i = C = 0\). Since $\Omega_g$ is strongly elliptic and its coefficients (as well as $H$) are smooth, using Schauder estimates \cite{wang2006schauder} guarantees that any weak solution is in \(C^\infty(\mathcal{M})\), thus $\sigma$ is in fact smooth. In our application, $\Omega_g = \Delta_g$ (a special case of such elliptic operators), so a solution $\sigma$ to $\Delta_g \sigma = H$ exists and is $C^\infty$ by elliptic regularity \cite{gilbarg1977elliptic}. Note that in the theorem, we assume $\phi(x) \;=\; e^{2\sigma(x)}$, which ensures $\phi(x) > 0$ for every $x\in \mathcal{M}$. Since $\sigma \in C^\infty(\mathcal{M})$, it follows that $\phi\in C^\infty(\mathcal{M})$. Because $\phi$ is strictly positive, each tensor $g(x)$ remains positive-definite on the tangent space $T_x\mathcal{M}$, and the smoothness of $\phi$ and $g$ implies $g(x)$ is itself in $C^\infty(\mathcal{M})$. This confirms that $g^*$ is a smooth, positive Riemannian metric, thereby proving the theorem’s assertion that the conformal metric remains smooth and positive-definite.
\end{proof}

\subsection{Proofs for Lemma \ref{lemma_Bounds for the Conformal Geodesic Distance} and Corollary \ref{corollary1}}
\label{proofcorollary1}
\begin{proof}
    Since $\phi$ is smooth on compact $\mathcal{M}$, it attains a finite maximum and positive minimum $\phi$, $\phi_{min}$ and $\phi_{max} = 1$ from Lemma \ref{lemma_Bounds for the Conformal Geodesic Distance}, then it is possible to transfer the inequalities to geodesic distances. For lower bound  transfer, $L_{g^{*}}(\gamma) \geq \sqrt{\phi_{\min }}$ $L_g(\gamma)$ $\Rightarrow$ $\underset{\gamma}{\inf} L_{g^*}(\gamma)$ $\geq$ $\sqrt{\phi_{\min }} \underset{\gamma}{\inf}L_g(\gamma)$; thus we have:
    \[
        \rho_{g^*}(x, y) \geq \sqrt{\phi_{\min }} \rho_g(x, y).
    \]
    For upper bound transfer, $L_{g^{*}}(\gamma) \leq L_g(\gamma)$ $\Rightarrow$ $\underset{\gamma}{\inf} L_{g^*}(\gamma)$ $\leq$ $\underset{\gamma}{\inf}L_g(\gamma)$; thus we have:
    \[
        \rho_{g^*}(x, y) \leq \rho_g(x, y).
    \]
    In particular, distances cannot shrink or stretch by more than factors $\sqrt{\phi_{\min}}$ and $1$, respectively. Therefore, $\sqrt{\phi_{\min }} \rho_g(x, y) \leq \rho_{g^*}(x, y) \leq \rho_g(x, y)$; for any case, these estimates hold for the entire manifold and for all paths.
    This implies that the new distance $\rho_{g^*}$ is bi-Lipschitz equivalent to the original distance $\rho_g$. This implies that no pair of points gets closer by more than a factor $\sqrt{\phi_{\min}}$ nor farther by more than $1$ under the conformal transformation.
\end{proof}

\subsection{Derivation of Eq. \ref{eq:conformal_pdf}}
\label{proof_eq:conformal_pdf}
    Let $\eta \in  \mathcal{M}$ be the chosen center point (Fréchet Mean) of dataset $D$. We aim to construct a Laplace-type distribution that places exponential-decay noise around $\eta$ under the conformal metric $g^*$. Define the unnormalized Laplace-type kernel: $\widetilde{K}^*(z \mid \eta):=\exp \left[-\lambda^* \rho_{g^*}(\eta, z)\right], z \in \mathcal{M}$, where $\rho_{g^*}$ denotes the geodesic distance induced by $g^*$ and $\lambda^*>0$ is a rate parameter. Its normalizing constant over the manifold is:
    \begin{equation}
    \label{eq19}
        C\left(\eta, \lambda^*\right)=\int_{\mathcal{M}} \exp \left[-\lambda^* \rho_{g^*}(\eta, z)\right] d \mu_{g^*}(z) .
    \end{equation}
    Since $(\mathcal{M},g^*)$ is compact, $\rho_{g^*}(\eta,\cdot)$ is bounded and continuous, the integrand is continuous and strictly positive, and $\mu_{g^*}(\mathcal M)<\infty$. Therefore,
    \[
        0<C\left(\eta, \lambda^*\right)<+\infty .
    \]
    We normalize to obtain a probability density:
    {\small
    \begin{equation*}
        \mathbb{P}^*(z \mid \eta):=\frac{\widetilde{K}^*(z \mid \eta)}{C\left(\eta, \lambda^*\right)}
        =\frac{\exp \left[-\lambda^* \rho_{g^*}(\eta, z)\right]}{\int_{\mathcal{M}} \exp \left[-\lambda^*  \rho_{g^*}(\eta, u)\right] , d \mu_{g^*}(u)}, z \in \mathcal{M}
    \end{equation*}}
    We check that its integral over all of $\mathcal{M}$ is $1$:
    \begin{equation*}
    \begin{aligned}
        & \int_{\mathcal{M}} \mathbb{P}^*(z \mid \eta) d \mu_{g^*}(z) \\
        = & \int_{\mathcal{M}} \frac{\exp \left[-\lambda^* \rho_{g^*}(\eta, z)\right]}{\int_{\mathcal{M}} \exp \left[-\lambda^* \rho_{g^*}(\eta, u)\right] d \mu_{g^*}(u)} d \mu_{g^*}(z) \\
        = & \frac{\int_{\mathcal{M}} \exp \left[-\lambda^* \rho_{g^*}(\eta, z)\right] d \mu_{g^*}(z)}{\int_{\mathcal{M}} \exp \left[-\lambda^* \rho_{g^*}(\eta, u)\right] d \mu_{g^*}(u)} \\
        = & \frac{C^*\left(\eta, \lambda^*\right)}{C^*\left(\eta, \lambda^*\right)}=1 .
    \end{aligned}
    \end{equation*}
    Hence $\mathbb{P}^*(\cdot \mid \eta)$. is a well-defined probability density on $(\mathcal{M},\mu_{g^*})$. Equivalently, for any measurable set $\mathcal{A} \subset \mathcal{M}$,
    \[
        \mathbb{P}(Z \in \mathcal{A})=\int_{\mathcal{A}} \mathbb{P}^*(z \mid \eta) d \mu_{g^*}(z).
    \]
    This completes the derivation. \qed

\subsection{Proof for Theorem. \ref{thm:conformal_sensitivity}}
\label{proof_thm:conformal_sensitivity}

\begin{proof}
The proof has three steps: bounding the conformal radius $r^*$, bounding the conformal sectional curvature $\kappa^*$, and substituting both bounds into the Fr\'echet mean sensitivity expansion.

\textbf{Step 1: Conformal radius bound.}
Let $\gamma_x:[0,L]\to\mathcal{M}$ be a minimizing $g$-geodesic from $m_0$ to $x\in D$, with $L\le r$. Under the conformal metric $g^*=\phi g$, the distance is bounded by the conformal path length. Splitting the path into the portions lying in the dense core $\mathcal{C}$ and the sparse region $\mathcal{W}$ gives
\begin{align*}
\rho_{g^*}(m_0,x)
&\le
\int_{\gamma_x\cap\mathcal{C}} 1\,ds
+
\int_{\gamma_x\cap\mathcal{W}} \sqrt{\tau}\,ds \\
&\le
\theta L+\sqrt{\tau}(L-\theta L) \\
&=
L\bigl(\theta+\sqrt{\tau}(1-\theta)\bigr).
\end{align*}
Taking the supremum over $x\in D$ yields
\[
r^*\le C_{\tau,\theta}r.
\]

\textbf{Step 2: Conformal sectional curvature bound.}
For a conformal metric $g^*=e^{2\sigma}g$, the transformed sectional curvature depends on the original curvature, the Hessian of $\sigma$, and gradient terms.

For $d=2$, the curvature formula simplifies because the gradient terms cancel in an orthonormal basis, giving
\[
K^*
=
e^{-2\sigma}
\bigl(
K_g-\Delta_g\sigma
\bigr).
\]
Using the governing PDE,
\[
-\Delta_g\sigma
=
\tilde{f}_{\mathrm{data}}-\tilde{f}_{\max}-\upsilon\sigma,
\]
we obtain
\begin{equation*}
\begin{aligned}
    K^*(x) &= e^{-2\sigma(x)} \bigl(K_g(x)+\tilde{f}_{\mathrm{data}}(x)-\tilde{f}_{\max}-\upsilon\sigma(x) \bigr) \\
    &\le e^{-2\sigma(x)}(\kappa-\upsilon\sigma(x)),
\end{aligned}
\end{equation*}

since $K_g(x)\le \kappa$ and $\tilde{f}_{\mathrm{data}}(x)\le \tilde{f}_{\max}$.

Define
\[
H(\sigma)=e^{-2\sigma}(\kappa-\upsilon\sigma).
\]
By Proposition \ref{prop:conformal_factor_properties},
\[
\sigma\in[\sigma_{\min},0],
\qquad
\sigma_{\min}=\frac{\tilde{f}_{\min}-\tilde{f}_{\max}}{\upsilon}.
\]
Moreover,
\[
H'(\sigma)
=
-e^{-2\sigma}(2\kappa-2\upsilon\sigma+\upsilon)<0
\]
on this interval, so $H$ is decreasing and attains its maximum at $\sigma_{\min}$. Therefore,
\[
\kappa^*
\le
e^{-2\sigma_{\min}}(\kappa-\upsilon\sigma_{\min})
=
\frac{1}{\phi_{\min}}
\bigl(
\kappa+\tilde{f}_{\max}-\tilde{f}_{\min}
\bigr).
\]

For $d\ge 3$, the exact gradient cancellation is no longer available. However, since $\sigma$ is a smooth solution to a strongly elliptic PDE on a compact manifold, standard elliptic regularity yields deterministic bounds
\[
|\nabla \sigma|_g^2 \le C_1,
\qquad
\|\nabla^2\sigma\| \le C_2.
\]
Substituting these bounds into the general conformal curvature formula yields
\[
\kappa^*
\le
\frac{1}{\phi_{\min}}
\bigl(
\kappa+2C_2+C_1
\bigr).
\]

\textbf{Step 3: Sensitivity bound.}
Assume the conformal radius remains in the strongly convex regime, namely
\[
r^*<\frac{\pi}{2}(\kappa^*)^{-1/2}.
\]
Under this condition, the Fr\'echet mean sensitivity under the conformal metric $g^*$ follows the same form as in Reimherr et al. \cite{reimherr2021differential}:
\[
\Delta^*
=
\frac{2r^*(2-h(r^*,\kappa^*))}{N\,h(r^*,\kappa^*)},
\]
where, writing
\[
z=2r^*\sqrt{\kappa^*},
\qquad
h(r^*,\kappa^*)=z\cot z,
\]
The sensitivity can be expressed as
\[
\Delta^*
=
\frac{2r^*}{N}\,
\frac{2-z\cot z}{z\cot z}.
\]
We analyze this expression in the small-radius regime. Noting that $g(z) = \frac{2-z\cot z}{z\cot z} = 2\frac{\tan z}{z}-1$, its higher-order coefficients are positive. To prevent underestimating the sensitivity via asymptotic truncation, we apply Taylor's theorem \cite{folland2005higher} with the Lagrange remainder. Bounding the truncation error explicitly yields:
\[
\frac{2-z\cot z}{z\cot z} 
\le 
1+\frac{2}{3}z^2+\frac{\mathcal{O}}{4!}z^4,
\]
where $\mathcal{O} = \sup_{\xi \in (0, z_{\max})} g^{(4)}(\xi)$ explicitly bounds the fourth derivative over the valid domain $z_{\max} = 2C_{\tau,\theta}r\sqrt{\kappa^*}$. 
Substituting this explicit upper bound into the sensitivity formula yields
\begin{equation*}
\begin{aligned}
    \Delta^* &\le \frac{2r^*}{N} \left(1+\frac{2}{3}z^2+\frac{\mathcal{O}}{24}z^4\right) \\
    &= \frac{2r^*}{N} + \frac{16(r^*)^3\kappa^*}{3N} + \frac{4\mathcal{O}(r^*)^5(\kappa^*)^2}{3N}.
\end{aligned}
\end{equation*}
The leading term is linear in the contracted radius $r^*$, while the curvature enters through a higher-order cubic correction. Therefore, the first-order gain comes directly from the conformal distance contraction, whereas curvature affects only the smaller higher-order terms, with the explicit remainder ensuring absolute privacy protection. Finally, substituting $r^*\le C_{\tau,\theta}r$ together with the corresponding upper bound on $\kappa^*$ gives
\[
\Delta^*
\le
C_{\tau,\theta}\left(\frac{2r}{N}\right)
+
C_{\tau,\theta}^3
\left(\frac{16r^3}{3N}\right)\kappa^*
+
C_{\tau,\theta}^5
\left(\frac{4\mathcal{O}r^5}{3N}\right)(\kappa^*)^2,
\]
which proves the stated bound.
\end{proof}

\subsection{Proof for Theorem. \ref{thm:main_conformal_dp}}
\label{proof_thm:main_conformal_dp}
\begin{proof}
Let $D$ and $D'$ be adjacent datasets, and write
\[
x=\eta_{g^*}(D),
\qquad
x'=\eta_{g^*}(D').
\]
By construction, both conditional output distributions admit positive densities with respect to the same reference measure $\mu_{g^*}$. Under the integrability assumption,
\[
0<C(x,\lambda^*)
=
\int_{\mathcal{M}}
\exp\bigl\{-\lambda^* \rho_{g^*}(x,u)\bigr\}\,d\mu_{g^*}(u)
<\infty,
\]
and similarly for $C(x',\lambda^*)$. Hence, the Radon--Nikodym derivative is well defined, and the privacy loss at any output $z\in\mathcal{M}$ is
\begin{equation}
\label{eq:privacy_loss}
\begin{aligned}
    L_{D,D'}(z)
    &=
    \ln \frac{d\mathbb{P}^*(\cdot \mid x)}{d\mathbb{P}^*(\cdot \mid x')}(z)
    =
    \ln \frac{p^*(z \mid x)}{p^*(z \mid x')} \\
    &=
    \lambda^* \bigl(\rho_{g^*}(x',z)-\rho_{g^*}(x,z)\bigr)
    + \ln \frac{C(x',\lambda^*)}{C(x,\lambda^*)}.
\end{aligned}
\end{equation}

We bound the two terms separately. First, by the reverse triangle inequality for the metric $\rho_{g^*}$,
\begin{equation}
\label{eq:bound_distance}
    \rho_{g^*}(x',z)-\rho_{g^*}(x,z)
    \le
    \rho_{g^*}(x,x')
    \le
    \Delta^*.
\end{equation}

Second, using the triangle inequality
\[
\rho_{g^*}(x',u)\ge \rho_{g^*}(x,u)-\rho_{g^*}(x,x'),
\]
we obtain
\begin{align}
    C(x',\lambda^*)
    &=
    \int_{\mathcal{M}}
    \exp\left\{-\lambda^* \rho_{g^*}(x',u)\right\}\,d\mu_{g^*}(u) \nonumber\\
    &\le
    \int_{\mathcal{M}}
    \exp\left\{-\lambda^* \bigl(\rho_{g^*}(x,u)-\rho_{g^*}(x,x')\bigr)\right\}\,d\mu_{g^*}(u) \nonumber\\
    &=
    \exp\left\{\lambda^* \rho_{g^*}(x,x')\right\}
    \int_{\mathcal{M}}
    \exp\left\{-\lambda^* \rho_{g^*}(x,u)\right\}\,d\mu_{g^*}(u) \nonumber\\
    &=
    \exp\left\{\lambda^* \rho_{g^*}(x,x')\right\} C(x,\lambda^*).
\end{align}
Taking logarithms gives
\begin{equation}
\label{eq:bound_norm}
    \ln \frac{C(x',\lambda^*)}{C(x,\lambda^*)}
    \le
    \lambda^* \rho_{g^*}(x,x')
    \le
    \lambda^* \Delta^*.
\end{equation}

Substituting \cref{eq:bound_distance,eq:bound_norm} into \cref{eq:privacy_loss} yields
\[
L_{D,D'}(z)\le 2\lambda^*\Delta^*.
\]
By symmetry, exchanging $D$ and $D'$ gives
\[
L_{D,D'}(z)\ge -2\lambda^*\Delta^*.
\]
Therefore,
\[
|L_{D,D'}(z)|\le 2\lambda^*\Delta^*.
\]

Now substitute the calibration
\[
\lambda^*=\frac{\varepsilon_{\mathrm{conf}}}{2\mathcal{S}^*}
\]
from \cref{eq:lambda_calibration}. Since $\Delta^*\le \mathcal{S}^*$, we obtain
\[
|L_{D,D'}(z)|
\le
2\left(\frac{\varepsilon_{\mathrm{conf}}}{2\mathcal{S}^*}\right)\Delta^*
=
\varepsilon_{\mathrm{conf}}\frac{\Delta^*}{\mathcal{S}^*}
\le
\varepsilon_{\mathrm{conf}},
\qquad \forall z\in\mathcal{M}.
\]

Equivalently,
\[
p^*(z\mid \eta_{g^*}(D))
\le
e^{\varepsilon_{\mathrm{conf}}}
\,p^*(z\mid \eta_{g^*}(D'))
\qquad \forall z\in\mathcal{M}.
\]
Integrating both sides over any measurable set $S\subseteq\mathcal{M}$ with respect to $\mu_{g^*}$ gives
\begin{align}
    \mathbb{P}[\mathcal{A}_{\mathrm{conf}}(D)\in S]
    &=
    \int_S p^*\bigl(z\mid \eta_{g^*}(D)\bigr)\,d\mu_{g^*}(z) \nonumber\\
    &\le
    \int_S e^{\varepsilon_{\mathrm{conf}}}
    p^*\bigl(z\mid \eta_{g^*}(D')\bigr)\,d\mu_{g^*}(z) \nonumber\\
    &=
    e^{\varepsilon_{\mathrm{conf}}}
    \mathbb{P}[\mathcal{A}_{\mathrm{conf}}(D')\in S].
\end{align}
Hence, $\mathcal{A}_{\mathrm{conf}}$ satisfies $\varepsilon_{\mathrm{conf}}$-differential privacy.
\end{proof}

\subsection{Proof of Theorem \ref{thm:composition}}
\label{proof:thm:composition}

\begin{proof}
Let $D\sim D'$ be adjacent datasets. Let $P_D$ denote the law of the noisy count vector $\tilde{c}\sim \mathcal{A}_{\mathrm{anchor}}(D)$ on $\mathbb{R}^J$, and let $P_{D'}$ denote the corresponding law under $D'$. Since $\mathcal{A}_{\mathrm{anchor}}$ is $\epsilon_\phi$-differentially private, its output distributions satisfy the Radon--Nikodym bound
\begin{equation}
\label{eq:RN-int}
    \frac{dP_D}{dP_{D'}}(\tilde{c}) \le e^{\epsilon_\phi}
\end{equation}
almost everywhere with respect to $P_{D'}$.

Now fix a realized $\tilde{c}$. Conditional on $\tilde{c}$, the second-stage mechanism $\mathcal{A}_{\mathrm{conf}}(\cdot\mid \tilde{c})$ is defined under the conformal metric $g^*$ and reference measure $\mu_{g^*}$. By the calibration
\[
\lambda^* \le \frac{\varepsilon_{\mathrm{conf}}}{2\Delta^*},
\]
Theorem \ref{thm:main_conformal_dp} implies that $\mathcal{A}_{\mathrm{conf}}(\cdot\mid \tilde{c})$ is $\varepsilon_{\mathrm{conf}}$-differentially private. Therefore, for any measurable set $B\subseteq \mathcal{M}$,
\begin{equation}
\label{eq:conf-DP}
    \mathbb{P}^*\bigl(z\in B \mid \eta_{g^*}(D),\tilde{c}\bigr)
    \le
    e^{\varepsilon_{\mathrm{conf}}}
    \mathbb{P}^*\bigl(z\in B \mid \eta_{g^*}(D'),\tilde{c}\bigr).
\end{equation}

Let $S\subseteq \mathbb{R}^J\times \mathcal{M}$ be any measurable set, and define its section at $\tilde{c}$ by
\[
S_{\tilde{c}} = \{z\in \mathcal{M} : (\tilde{c},z)\in S\}.
\]
Then
\[
\mathbb{P}\bigl(\mathcal{A}_{\mathrm{total}}(D)\in S\bigr)
=
\int_{\mathbb{R}^J}
\mathbb{P}^*\bigl(z\in S_{\tilde{c}} \mid \eta_{g^*}(D),\tilde{c}\bigr)\,dP_D(\tilde{c}).
\]
Applying \cref{eq:conf-DP} pointwise inside the integral gives
\[
\mathbb{P}\bigl(\mathcal{A}_{\mathrm{total}}(D)\in S\bigr)
\le
e^{\varepsilon_{\mathrm{conf}}}
\int_{\mathbb{R}^J}
\mathbb{P}^*\bigl(z\in S_{\tilde{c}} \mid \eta_{g^*}(D'),\tilde{c}\bigr)\,dP_D(\tilde{c}).
\]
Next, using \cref{eq:RN-int} with the nonnegative measurable function
\[
g(\tilde{c}) =
\mathbb{P}^*\bigl(z\in S_{\tilde{c}} \mid \eta_{g^*}(D'),\tilde{c}\bigr),
\]
we obtain
\[
\frac{\int_{\mathbb{R}^J}
\mathbb{P}^*\bigl(z\in S_{\tilde{c}} \mid \eta_{g^*}(D'),\tilde{c}\bigr)\,dP_D(\tilde{c})}{\int_{\mathbb{R}^J}
\mathbb{P}^*\bigl(z\in S_{\tilde{c}} \mid \eta_{g^*}(D'),\tilde{c}\bigr)\,dP_{D'}(\tilde{c})} \le e^{\epsilon_\phi}
\]
Combining the two bounds yields
\[
\mathbb{P}\bigl(\mathcal{A}_{\mathrm{total}}(D)\in S\bigr)
\le
e^{\epsilon_\phi+\varepsilon_{\mathrm{conf}}}
\int_{\mathbb{R}^J}
\mathbb{P}^*\bigl(z\in S_{\tilde{c}} \mid \eta_{g^*}(D'),\tilde{c}\bigr)\,dP_{D'}(\tilde{c}).
\]
The remaining integral is exactly
\[
\mathbb{P}\bigl(\mathcal{A}_{\mathrm{total}}(D')\in S\bigr),
\]
so
\[
\mathbb{P}\bigl(\mathcal{A}_{\mathrm{total}}(D)\in S\bigr)
\le
e^{\epsilon_\phi+\varepsilon_{\mathrm{conf}}}
\mathbb{P}\bigl(\mathcal{A}_{\mathrm{total}}(D')\in S\bigr).
\]
Since $S$ was arbitrary, $\mathcal{A}_{\mathrm{total}}$ satisfies $(\epsilon_\phi+\varepsilon_{\mathrm{conf}})$-differential privacy.
\end{proof}

\subsection{Proof for Theorem \ref{thm:phi_error_bound}}
\label{proof_stage1_utility}
\begin{proof}
    The proof consists of four steps, adapted to bound the expected squared error.
    
    \textbf{Step 1: Maximum anchor noise squared.}
    The anchor discretization maps the dataset to a count vector $c\in\mathbb{N}^J$ with global substitution sensitivity $\Delta_1=2$. To achieve $\varepsilon_\phi$-differential privacy, the mechanism adds independent Laplace noise
    \[
    Y_j \sim \mathrm{Lap}(2/\varepsilon_\phi),
    \qquad j=1,\dots,J.
    \]
    For $J$ i.i.d. Laplace variables, the expected squared maximum absolute deviation can be bounded using the extreme value properties of sub-exponential distributions. It satisfies:
    \begin{equation}
    \label{eq:max_lap_bound_squared}
    \mathbb{E}\!\left[\max_{1\le j\le J}|Y_j|^2\right]
    =
    \mathcal{O}\!\left( \frac{1}{\varepsilon_\phi^2}(\ln J+1)^2 \right).
    \end{equation}
    
    \textbf{Step 2: Density reconstruction error.}
    The privatized density is reconstructed as
    \[
    \tilde{f}_{\mathrm{data}}(x)=\sum_{j=1}^J w_j \bar{K}_{h,j}(x),
    \]
    while the non-private density uses weights
    \[
    w_j^{np}=\frac{c_j}{N}.
    \]
    Assuming $N$ is large enough that the normalization step is dominated by the additive noise, the deviation in the normalized weights satisfies
    \[
    \max_j |w_j-w_j^{np}|
    \le
    \frac{1}{N}\max_{1\le j\le J}|Y_j|.
    \]
    Therefore, for every $x\in\mathcal{M}$,
    \begin{equation*}
    \begin{aligned}
        \left|\tilde{f}_{\mathrm{data}}(x)-f_{\mathrm{data}}(x) \right| &= \left| \sum_{j=1}^J (w_j - w_j^{np})\bar{K}_{h,j}(x) \right| \\
        & \le \frac{1}{N}\max_{1\le j\le J}|Y_j| \sum_{j=1}^J \bar{K}_{h,j}(x).
    \end{aligned}
    \end{equation*}
    
    Taking the supremum over $x$ yields
    \begin{equation}
    \label{eq:density_error_bound}
    \|\tilde{f}_{\mathrm{data}}-f_{\mathrm{data}}\|_{L^\infty(\mathcal{M})}
    \le
    \frac{C_K}{N}\max_{1\le j\le J}|Y_j|.
    \end{equation}
    
    \textbf{Step 3: Source-term perturbation.}
    The Helmholtz--Poisson equation uses the shifted source term
    \[
    \tilde{f}_{\mathrm{data}}-\tilde{f}_{\max}.
    \]
    By the triangle inequality,
    \begin{equation*}
    \begin{aligned}
        &\left\| (\tilde{f}_{\mathrm{data}}-\tilde{f}_{\max}) - (f_{\mathrm{data}}-f_{\max}) \right\|_{L^\infty}  \\
        &\le \|\tilde{f}_{\mathrm{data}}-f_{\mathrm{data}}\|_{L^\infty}
    +
    |\tilde{f}_{\max}-f_{\max}|.
    \end{aligned}
    \end{equation*}
    
    Since the essential supremum is 1-Lipschitz with respect to the $L^\infty$ norm,
    \[
    |\tilde{f}_{\max}-f_{\max}|
    \le
    \|\tilde{f}_{\mathrm{data}}-f_{\mathrm{data}}\|_{L^\infty}.
    \]
    Combining this with \cref{eq:density_error_bound} gives
    \begin{equation}
    \label{eq:source_error_bound}
    \left\|
    (\tilde{f}_{\mathrm{data}}-\tilde{f}_{\max})
    -
    (f_{\mathrm{data}}-f_{\max})
    \right\|_{L^\infty}
    \le
    \frac{2C_K}{N}\max_{1\le j\le J}|Y_j|.
    \end{equation}
    
    \textbf{Step 4: PDE stability and squared conformal-factor error.}
    Let $e_\sigma=\sigma_p-\sigma_{np}$. By the $L^\infty$ stability of the elliptic PDE,
    \[
    \|e_\sigma\|_{L^\infty(\mathcal{M})}
    \le
    \frac{1}{\upsilon}
    \left\|
    (\tilde{f}_{\mathrm{data}}-\tilde{f}_{\max})
    -
    (f_{\mathrm{data}}-f_{\max})
    \right\|_{L^\infty}.
    \]
    Using \cref{eq:source_error_bound}, we obtain
    \begin{equation}
    \label{eq:sigma_error_bound}
    \|e_\sigma\|_{L^\infty(\mathcal{M})}
    \le
    \frac{2C_K}{\upsilon N}\max_{1\le j\le J}|Y_j|.
    \end{equation}
    
    Now recall that both scaling functions are non-positive, so $\phi(x)=e^{2\sigma(x)}$ lies in the non-expansive regime of the exponential map, where it is Lipschitz with constant $2$. Hence
    \[
    \|\phi_p-\phi_{np}\|_{L^\infty(\mathcal{M})}
    \le
    2\|e_\sigma\|_{L^\infty(\mathcal{M})}
    \le
    \frac{4C_K}{\upsilon N}\max_{1\le j\le J}|Y_j|.
    \]
    Squaring both sides and taking expectations, we apply \cref{eq:max_lap_bound_squared} to obtain
    \begin{equation*}
    \begin{aligned}
        \mathbb{E}\!\left[\|\phi_p-\phi_{np}\|^2_{L^\infty(\mathcal{M})}\right]
        & \le
        \frac{16C_K^2}{\upsilon^2 N^2}
        \mathbb{E}\!\left[\max_{1\le j\le J}|Y_j|^2\right] \\
        &=
        \mathcal{O}\!\left( \frac{C_K^2}{\upsilon^2 N^2\varepsilon_\phi^2}(\ln J+1)^2 \right),
    \end{aligned}   
    \end{equation*}  
    which completes the proof.
\end{proof}

\subsection{Proof for Theorem \ref{thm:stage2_utility}}
\label{proof_stage2_utility}
\begin{proof}
    The proof separates the total expected squared error into a squared geometric bias term and a stochastic sampling variance term using the fundamental inequality $(a+b)^2 \le 2a^2 + 2b^2$.
    
    \textbf{Step 1: Error decomposition and squared conformal bias.}
    By the triangle inequality under the original metric $g$,
    \[
    \rho_g\bigl(z,\eta_{np}(D)\bigr)
    \le
    \rho_g\bigl(\eta_{g^*}(D),\eta_{np}(D)\bigr)
    +
    \rho_g\bigl(z,\eta_{g^*}(D)\bigr).
    \]
    Squaring both sides and applying $(a+b)^2 \le 2a^2 + 2b^2$, we decouple the bounds:
    \begin{equation}
    \label{eq:squared_triangle_ineq}
    \rho_g\bigl(z,\eta_{np}(D)\bigr)^2 
    \le 
    2 \underbrace{\rho_g\bigl(\eta_{g^*}(D),\eta_{np}(D)\bigr)^2}_{\mathcal{B}_{\mathrm{conf}}^2} 
    + 
    2 \rho_g\bigl(z,\eta_{g^*}(D)\bigr)^2.
    \end{equation}
    Here, $\mathcal{B}_{\mathrm{conf}}$ is the deterministic bias induced by replacing the original metric $g$ with the conformal metric $g^*$. By the stability of Fr\'echet minimizers on manifolds with bounded curvature, the displacement of the mean is Lipschitz continuous with respect to metric deformations. This stability is strictly governed by the strong convexity parameter $h(r,\kappa)$, yielding
    \[
    \mathcal{B}_{\mathrm{conf}}
    \le
    \frac{1}{h(r,\kappa)}\,
    \|\phi_p-\phi_{np}\|_{L^\infty(\mathcal{M})}.
    \]
    Squaring this inequality, taking expectations, and applying the Stage-1 expected squared $L^\infty$ bound established in Theorem \ref{thm:phi_error_bound} yields
    \begin{equation}
    \begin{aligned}
    \label{eq:bias_bound_stage2_squared}
        \mathbb{E}[\mathcal{B}_{\mathrm{conf}}^2]
        &\le
        \frac{1}{h(r,\kappa)^2}\,
        \mathbb{E}\!\left[\|\phi_p-\phi_{np}\|^2_{L^\infty(\mathcal{M})}\right] \\
        &=
        \mathcal{O}\!\left( \frac{C_{\mathrm{stab}}(r,\kappa)^2 C_K^2}{\upsilon^2 N^2\varepsilon_\phi^2}(\ln J+1)^2 \right),
    \end{aligned}
    \end{equation}
    where $C_{\mathrm{stab}}(r,\kappa) = 1/h(r,\kappa)$.
    
    \textbf{Step 2: Expected squared sampling error under the original metric.}
    It remains to bound the stochastic sampling variance term
    $\mathbb{E}_{\mathrm{var}} = \mathbb{E}\!\left[\rho_g\bigl(z,\eta_{g^*}(D)\bigr)^2\right]$.
    By definition of the conformal mechanism, the exact integral over the conformal probability measure is
    \[
    \mathbb{E}_{\mathrm{var}}
    =
    \int_{\mathcal{M}}
    \rho_g\bigl(z,\eta_{g^*}(D)\bigr)^2\,
    \frac{\exp\{-\lambda^* \rho_{g^*}(\eta_{g^*}(D),z)\}}
         {C(\eta_{g^*}(D),\lambda^*)}
    \,d\mu_{g^*}(z).
    \]
    Since $\phi(x) \ge \phi_{\min} > 0$, we have the squared distance comparison $\rho_g\bigl(z,\eta_{g^*}(D)\bigr)^2 \le \frac{1}{\phi_{\min}}\, \rho_{g^*}\bigl(z,\eta_{g^*}(D)\bigr)^2$. This yields the reduction
    \begin{equation}
    \label{eq:sampling_reduction_squared}
    \begin{aligned}
        & \mathbb{E}_{\mathrm{var}} \le \\
        & \frac{1}{\phi_{\min}}
        \int_{\mathcal{M}}
        \rho_{g^*}\bigl(z,\eta_{g^*}(D)\bigr)^2\,
        \frac{\exp\{-\lambda^* \rho_{g^*}(\eta_{g^*}(D),z)\}}
         {C(\eta_{g^*}(D),\lambda^*)}
        \,d\mu_{g^*}(z).
    \end{aligned}
    \end{equation}
    
    \textbf{Step 3: Second-order radial moment bound via volume growth.}
    Let $r' = \rho_{g^*}\bigl(z,\eta_{g^*}(D)\bigr)$. To evaluate this integral on a $d$-dimensional manifold, we approximate the local volume growth of geodesic balls for small radius. The volume element grows at the leading order as $\mathrm{area}_{g^*}(r') \approx \omega_{d-1} (r')^{d-1}$, where $\omega_{d-1}$ is the area of the $(d-1)$-sphere. 
    The numerator of the radial expectation expands to the second moment:
    {\small
    \begin{equation}
    \begin{aligned}
        \int_0^\infty (r')^2\,e^{-\lambda^* r'} \mathrm{area}_{g^*}(r')\,dr' & \approx
    \int_0^\infty (r')^2\,e^{-\lambda^* r'} \omega_{d-1} (r')^{d-1}\,dr' \\
    & =
    \omega_{d-1}\frac{\Gamma(d+2)}{(\lambda^*)^{d+2}}.
    \end{aligned}
    \end{equation}}
    Similarly, the normalization constant evaluates as $\omega_{d-1}\frac{\Gamma(d)}{(\lambda^*)^d}$.
    Taking the ratio of the evaluated numerator and denominator, the geometric constant $\omega_{d-1}$ gracefully cancels out. Using the properties of the Gamma function $\Gamma(d+2) = d(d+1)\Gamma(d)$, we naturally obtain:
    \begin{equation}
    \label{eq:radial_moment_bound_squared}
    \frac{\omega_{d-1} \Gamma(d+2) / (\lambda^*)^{d+2}}{\omega_{d-1} \Gamma(d) / (\lambda^*)^d}
    =
    \frac{d(d+1)}{(\lambda^*)^2} = \mathcal{O}\!\left(\frac{d^2}{(\lambda^*)^2}\right).
    \end{equation}
    Combining Eq. \ref{eq:sampling_reduction_squared} and Eq. \ref{eq:radial_moment_bound_squared} gives the variance bound in the conformal space:
    \begin{equation}
    \label{eq:sampling_bound_stage2_squared}
    \mathbb{E}_{\mathrm{var}}
    \le
    \mathcal{O}\!\left( \frac{d^2}{\phi_{\min} (\lambda^*)^2} \right).
    \end{equation}
    
    \textbf{Step 4: Privacy calibration and instance-dependent sensitivity substitution.}
    To rigorously satisfy pure $\varepsilon_{\mathrm{conf}}$-differential privacy, the conformal mechanism calibrates the rate parameter according to the true conformal sensitivity:
    \[
    \lambda^*=\frac{\varepsilon_{\mathrm{conf}}}{2\Delta^*}.
    \]
    Substituting this calibration into Eq. \ref{eq:sampling_bound_stage2_squared} yields the sampling variance conditioned on the conformal metric:
    \begin{equation*}
    \mathbb{E}_{z}\!\left[\rho_g\bigl(z,\eta_{g^*}(D)\bigr)^2 \mid g^* \right]
    =
    \mathcal{O}\!\left( \frac{d^2 (\Delta^*)^2}{\phi_{\min} \varepsilon_{\mathrm{conf}}^2} \right).
    \end{equation*}
    Because the scaling function is defined as a non-positive shift ($\sigma(x) \le 0$), the conformal factor satisfies $\phi(x) = e^{2\sigma(x)} \le 1$ globally on $\mathcal{M}$. Therefore, the conformal metric $g^* = \phi \cdot g$ guarantees that all pairwise distances are non-increasing: $\rho_{g^*}(x,y) \le \rho_g(x,y)$. Let $r^*$ denote the dataset's effective bounding radius under $g^*$. The true conformal sensitivity is bounded by this local geometry, meaning $\Delta^* \le \mathcal{O}(r^*/N)$. Substituting this bound into the conditional variance term and taking the expectation over the Stage-1 mechanism yields the tighter instance-dependent bound. Furthermore, the non-expansive property ensures $r^* \le r$ almost surely, yielding the worst-case global baseline:
    \begin{align}
    \label{eq:variance_rate}
    \mathbb{E}\!\left[\rho_g\bigl(z,\eta_{g^*}(D)\bigr)^2\right] &\le \mathcal{O}\!\left( \frac{1}{\phi_{\min}}\frac{d^2 \mathbb{E}[(r^*)^2]}{N^2 \varepsilon_{\mathrm{conf}}^2} \right) \\  & \le \mathcal{O}\!\left( \frac{1}{\phi_{\min}}\frac{d^2 r^2}{N^2 \varepsilon_{\mathrm{conf}}^2} \right). \nonumber
    \end{align}
    
    Finally, combining the squared geometric bias Eq. \ref{eq:bias_bound_stage2_squared} and the sampling variance Eq. \ref{eq:variance_rate} completes the proof.
    {\footnotesize
    \begin{equation*}
    \begin{aligned}
        \mathbb{E}\!\left[\rho_g\bigl(z,\eta_{np}(D)\bigr)^2\right] &\le 2\,\mathbb{E}[\mathcal{B}^2_{\mathrm{conf}}] + 2\,\mathbb{E}_{\mathrm{var}} \\
        &\le \mathcal{O}\!\left( \frac{C_{\mathrm{stab}}(r,\kappa)^2 C_K^2 (\ln J+1)^2}{\upsilon^2 N^2 \varepsilon_\phi^2} + \frac{1}{\phi_{\min}}\frac{d^2 r^2}{N^2 \varepsilon_{\mathrm{conf}}^2} \right).
    \end{aligned}
    \end{equation*}}
\end{proof}

\subsection{Proof of Proposition \ref{prop:optimal_allocation}}
\label{proof_optimal_budgets}

\begin{proof}
We consider the constrained optimization problem
\[
\min_{\varepsilon_\phi,\varepsilon_{\mathrm{conf}}>0}
\left(
\frac{C_\phi}{\varepsilon_\phi}
+
\frac{C_{\mathrm{conf}}}{\varepsilon_{\mathrm{conf}}}
\right)
\qquad
\text{subject to }
\varepsilon_\phi+\varepsilon_{\mathrm{conf}}=\varepsilon_{\mathrm{total}}.
\]
Applying the Cauchy--Schwarz inequality yields
\[
(\varepsilon_\phi+\varepsilon_{\mathrm{conf}})
\left(
\frac{C_\phi}{\varepsilon_\phi}
+
\frac{C_{\mathrm{conf}}}{\varepsilon_{\mathrm{conf}}}
\right)
\ge
\left(\sqrt{C_\phi}+\sqrt{C_{\mathrm{conf}}}\right)^2.
\]
Equality holds if and only if
\[
\frac{\varepsilon_\phi}{\varepsilon_{\mathrm{conf}}}
=
\sqrt{\frac{C_\phi}{C_{\mathrm{conf}}}}.
\]
Therefore, the optimal allocation ratio is
\[
\frac{\varepsilon_\phi^{opti}}{\varepsilon_{\mathrm{conf}}^{opti}}
=
\sqrt{\frac{C_\phi}{C_{\mathrm{conf}}}}.
\]
Substituting the definitions
\[
C_\phi
=
\frac{8 C_{\mathrm{stab}}(r,\kappa) C_K (\ln J+1)}{\upsilon N},
C_{\mathrm{conf}}
=
\frac{2d\Delta^*}{\sqrt{\phi_{\min}}},
\]
gives
\begin{equation*}
\begin{aligned}
\frac{\varepsilon_\phi^{opti}}{\varepsilon_{\mathrm{conf}}^{opti}}
&=
\sqrt{
\frac{
\frac{8 C_{\mathrm{stab}}(r,\kappa) C_K (\ln J+1)}{\upsilon N}
}{
\frac{2d\Delta^*}{\sqrt{\phi_{\min}}}
}
} \\
&=
\sqrt{
\frac{
4 C_{\mathrm{stab}}(r,\kappa) C_K (\ln J+1)\sqrt{\phi_{\min}}
}{
\upsilon N d \Delta^*
}
}
\end{aligned}
\end{equation*}
which is exactly \cref{eq:optimal_ratio}.

Combining this ratio with the budget constraint
\[
\varepsilon_\phi+\varepsilon_{\mathrm{conf}}=\varepsilon_{\mathrm{total}}
\]
yields the closed-form allocations in \cref{eq:optimal_budgets}.
\end{proof}

\subsection{Implementation Details for Section. \ref{ssec:real_world}}
\label{app:alg-detail}
We provide further implementation details and hyperparameters of our proposed Conformal Differential Privacy and Riemannian-Laplace DP by Reimherr er al. \cite{reimherr2021differential} and tangent Gaussian DP by Utpala et al. \cite{utpala2022differentially} as follows:

\paragraph{Hyperparameters.} We test for algorithm utility under privacy budgets $\varepsilon \in \{0.1, 0.2, \cdots, 2.0\}$ and $\delta = 10^{-9}$. CIFAR-10 \cite{Krizhevsky09CIFAR} contains a total of $N = 60,000$ images in 10 classes, with 6,000 images for each class; Fashion-MNIST \cite{Xiao17FashionMNIST} has $N = 70,000$ images in total, with 7,000 images for each class. Our selected $\delta = 10^{-9}$ satisfies $\delta \ll \frac{1}{N}$ for both datasets. The \emph{global} sensitivity $\Delta$ is calculated as $\Delta = \frac{2r}{N}$, in which $r$ is the radius of the geodesic ball containing all images, $N$ is the number of data samples in the dataset. For more calculations in determining $r$, we refer to the work of Utpala et al. \cite{utpala2022differentially}.

\paragraph{Implementation.} We implement both Riemannian-Laplace DP \cite{reimherr2021differential} and tangent Gaussian DP \cite{utpala2022differentially} with the same parameters described in Utpala et al. \cite{utpala2022differentially}, specifically, for the global sensitivity of $\Delta_{glb} = \frac{2r}{N}$, in which $r$ is the radius of the geodistic ball that contains all data samples. We adopt the \emph{classical} Gaussian noise described by Utpala et al. \cite{utpala2022differentially}, in which we have $\sigma = \frac{\Delta_{glb}}{\varepsilon} \sqrt{2\ln (\frac{1.25}{\delta})}$.
For Riemannian-Laplace DP and our proposed Conformal-DP, we perform 5,000 steps for burn-in and 5,0000 steps of proposal sampling for the MCMC process following Utpala et al. \cite{utpala2022differentially} and decided empirically, which demonstrate reasonable performance.
For the K-Norm DP method discussed in Sec. \ref{sec:synthetic}, due to the high computational cost, we perform 5,000 steps for burn-in and 2,500 steps of sampling, the other implementations are consistent with the original paper of Soto et al. \cite{soto2022shape}
The algorithms and calculations over the manifold are implemented with the \texttt{geomstats} Python library by Miolane et al. \cite{Miolane2020Geomstats}. The calculation of the Frech\'et mean on the manifold through gradient descent can be easily implemented with \texttt{geomstats}. We thank the authors for the invaluable help in making the code publicly available.

\end{document}